\DeclareMathAlphabet{\mathpzc}{OT1}{pzc}{m}{it}
\newcommand{\R}{\mathbb{R}}
\theoremstyle{definition}
\newtheorem{definition}{Definition}
\newtheorem{problem}{Problem}
\newtheorem{lemma}{Lemma}
\newtheorem{theorem}{Theorem}
\newtheorem{remark}{Remark}
\newtheorem{assumption}{Assumption}
\newcommand{\norm}[1]{\left\lVert #1\right\rVert}
\DeclareMathOperator*{\argmin}{arg\,min}
\newcommand{\doi}[1]{\href{http://dx.doi.org/#1}{\normalsize{\textsc{doi:}}~\nolinkurl{#1}}}
\newcommand{\arxiv}[1]{\href{http://arxiv.org/abs/#1}{\normalsize{\textsc{arxiv:}}~\nolinkurl{#1}}}
\newcommand{\HRule}{\noindent\rule{\linewidth}{0.1mm}\newline}
\renewcommand{\phi}{\varphi}
\renewcommand{\implies}{\;\Rightarrow\;}
\newcommand{\gpdomain}{\mathcal{X}}
\newcommand{\gpaugdomain}{\bar{\mathcal{X}}}
\newcommand{\gpvar}{x}
\newcommand{\augu}{y}
\newcommand{\augU}{\mathcal{Y}}
\newcommand{\gpmuB}{\mu_{B}}
\newcommand{\gpsigmaB}{\sigma_{B}}
\newcommand{\gpGramB}{\Sigma_{B}}
\newcommand{\gpGramUB}{\Sigma_{L_g B}}
\newcommand{\gpGramfB}{\Sigma_{L_f B}}
\newcommand{\socA}{\gpGramUB^{1/2}(x|\mathbb{D}_N)}
\newcommand{\socb}{\gpGramfB^{1/2}(x|\mathbb{D}_N)}
\newcommand{\socAA}{\gpGramUB(x|\mathbb{D}_N)}
\newcommand{\socbb}{\gpGramfB(x|\mathbb{D}_N)}
\newcommand{\socc}{\widehat{L_g B}(x|\mathbb{D}_N)}
\newcommand{\wfeas}{\widehat{L_g B}}
\newcommand{\socd}{\widehat{L_f B}(x|\mathbb{D}_N) + \gamma(B(x))}
\newcommand{\feasmat}{\mathcal{F}(x|\mathbb{D}_N)}
\newcommand{\eigval}{\lambda_\dagger}
\newcommand{\eigvec}{e_\dagger}
\newcommand{\usafe}{u_{\text{safe}}}
\newcommand{\thres}{\varepsilon}
\newcommand{\ualgo}{\bar{u}}
\newcommand{\revision}[1]{{\color{black} #1}}
\newcommand{\textblack}[1]{{\color{black} #1}}
\def\BibTeX{{\rm B\kern-.05em{\sc i\kern-.025em b}\kern-.08em
    T\kern-.1667em\lower.7ex\hbox{E}\kern-.125emX}}
\begin{document}

\title{\LARGE \bf
Recursively Feasible Probabilistic Safe Online Learning \\with Control Barrier Functions

}

\author{Fernando Castañeda*, Jason J. Choi*, Wonsuhk Jung, Bike Zhang, Claire J. Tomlin, and Koushil Sreenath
\thanks{* The first two authors contributed equally to the work.}
\thanks{F. Castañeda, J. J. Choi, B. Zhang, C. J. Tomlin, and K. Sreenath are with the University of California, Berkeley, CA, 94720, USA. \tt \{fcastaneda, jason.choi, bikezhang, tomlin, koushils\}@berkeley.edu }
\thanks{W. Jung is with Georgia Institute of Technology, GA, 30332, USA. \tt wonsuhk.jung@gatech.edu}
\thanks{This work was partially supported through National Science Foundation Grant CMMI-1931853 and CMMI-1944722. The work of Fernando Casta\~neda received the support of a fellowship from Fundaci\'on Rafael del Pino, Spain.}
}

\maketitle

\begin{abstract}
Learning-based control has recently shown great efficacy in performing complex tasks for various applications. However, to deploy it in real systems, it is of vital importance to guarantee the system will stay safe. Control Barrier Functions (CBFs) offer mathematical tools for designing safety-preserving controllers for systems with known dynamics. In this article, we first introduce a model-uncertainty-aware reformulation of CBF-based safety-critical controllers using Gaussian Process (GP) regression to close the gap between an approximate mathematical model and the real system, which results in a second-order cone program (SOCP)-based control design. We then present the pointwise feasibility conditions of the resulting safety controller, highlighting the level of richness that the available system information must meet to ensure safety. We use these conditions to devise an event-triggered online data collection strategy that ensures the recursive feasibility of the learned safety controller. Our method works by constantly reasoning about whether the current information is sufficient to ensure safety or if new measurements under active safe exploration are required to reduce the uncertainty. As a result, our proposed framework can guarantee the forward invariance of the safe set defined by the CBF with high probability, even if it contains a priori unexplored regions. We validate the proposed framework in two numerical simulation experiments.
\end{abstract}

\section{Introduction}
\label{sec:01introduction}

\subsection{Motivation \& Key Idea}
In many real-world control systems, such as aircrafts, industrial robots, or autonomous vehicles, in order to prevent system failure and catastrophic events, it is crucial to ensure that the system always stays within a set of safe states. Mathematical models of the system's dynamics can often be useful to design controllers that can enforce such safety constraints. However, since designing accurate models for complex real systems is not easy, imperfect models are typically used in practice, and the guarantees of the designed controllers can be lost when these model imperfections are not addressed appropriately. 

On the other hand, data-driven control has emerged as a new paradigm for solving complex control tasks. Nevertheless, in the absence of interpretable model-based knowledge, such methods usually fall short of theoretical guarantees. 
Moreover, data-driven approaches typically require collecting enough real-world data to accurately characterize the system. \revision{This presents a dilemma for safety-critical systems: we need to deploy the system to collect data, but we cannot deploy it safely without already having sufficient data.}

In this paper, we present an approach to address this dilemma and guarantee the safety of systems with uncertain dynamics. Our methodology lies at the intersection of model-based and data-driven control.
An imperfect dynamics model is complemented by the information gathered from data collected safely online on the real system, which allows us to ensure safety without a perfect model or offline data. 

\revision{The crucial component of our approach is determining whether the combination of prior model knowledge and collected data can maintain low uncertainty in a safe control direction or if new information is needed. To intuitively illustrate the working principle, consider an adaptive cruise control problem where the vehicle on a slippery road must maintain a safe distance from the car in front. We want to ensure that at all times the ego car has enough information about its dynamics reacting to a safe control action (e.g., braking). If the braking effect is well understood, the vehicle is allowed to follow the driver's commands. However, if the braking uncertainty reaches a critical level---after exceeding this, maintaining distance to the front car is no longer ensured despite braking---our method commands the ego vehicle to brake, allowing it to measure the braking effect and improve confidence. In this way, the vehicle is always prepared to prevent a collision with the car ahead, as the braking effect is sufficiently characterized at all times.}

\subsection{Related Work}

In the model-based control literature, various approaches exist for the design of controllers satisfying safety constraints, including Control Barrier Functions (CBFs, \revision{\cite{prajna2004safety, ames2017cbf}}), Hamilton-Jacobi Reachability \cite{bansal2017hamilton}, and Model Predictive Control \cite{wabersich2021predictive} to name a few. In this article, we focus on the CBF-based implementations of safe controllers for nonlinear systems. The main advantages of using CBFs for safety-critical control are twofold. First, the zero-superlevel set of a CBF, which is control invariant, explicitly verifies the state domain where safety is guaranteed. Second, while guaranteeing set invariance requires long horizon reasoning, CBFs condense this problem into a simple single time-step condition that should be satisfied at each time, similar to Lyapunov-based methods for stability.
This single time-step constraint on the control input derived from a CBF guarantees that the system does not exit the boundary of the zero-superlevel set and thus, remains safe.

Importantly, such safety constraints based on CBFs depend on the dynamics of the system. This means that when the dynamics are uncertain, the usage of an incorrect model in the CBF-based controller might lead to violation of safety. This issue can be tackled with an online adaptation of the control law \cite{nguyen2015l1adaptive, taylor2020adaptive, lopez2020robust, black2022adaptation}; however, these approaches usually assume the uncertainties have some restrictive structure.
Robust control approaches instead consider the worst-case effects of model uncertainty \cite{kolathaya2018input, nguyen2021robust, krstic2021inverse, choi2021robust,  alan2022control}. However, an inaccurate characterization of the disturbance bounds can lead to the violation of safety when the bounds are too optimistic, or can lead to impractical conservative behaviors when they are unnecessarily large.

These limitations allude to the core motivation of using modern data-driven techniques to address the effects of model mismatch: learn and adapt to the uncertainties with minimal structural assumptions, and learn the correct magnitude of the robustness bounds. \revision{For this purpose, recent research has extensively employed neural networks to learn the mismatch terms \cite{taylor2019clflearning, taylor2020cbf, westenbroek2020learning, choi2020reinforcement}.} While these approaches have been shown to be practical and effective, addressing the potential errors in neural network predictions remains challenging. \revision{Other works, like this work, use non-parametric regression methods, most notably Gaussian Process (GP) regression, that provide a probabilistic guarantee of the prediction quality under mild assumptions \cite{berkenkamp2016lyapunov, berkenkamp2017saferl, fisac2018general, umlauft2018clf, fan2019balsa, cheng2020safe, cohen2021safe, taylor2020towards, greeff2021learning, dhiman2021control, brunke2022barrier}. However, collecting the data for the regression would require exciting the system in many control directions, which might compromise safety. Guaranteeing safety for uncertain systems by using only data that can be safely collected remains an open problem.}

\revision{Finally, it is also worth mentioning existing works on different but related problems. While learning-based approaches have emerged as an effective paradigm for designing CBFs for uncertain systems \cite{dawson2022safe, qin2022sablas, jagtap2020control, lindemann2021learning, jin2020neural, nejati2023data}, it should be noted that the implementation of the CBF-based safety constraint still requires the knowledge of the system dynamics \cite{dawson2022safe}, unless the learned filter is completely model-free \cite{qin2022sablas, lavanakul2024safety}. The problem we focus on is the realization of safety filters for uncertain systems, not the design of CBFs; thus, our work is complementary to the line of work on data-driven CBF design (see Remark \ref{remark:learning-based-cbf} for more details). CBF-based methods have also been proposed to address stochastic systems \cite{clark2021control, so2023almost, cosner2024bounding} or systems with distributional uncertainty \cite{mestres2023feasibility}. Although these lines of work involve probabilistic analysis similar to ours, we focus on a problem set up under deterministic dynamics.
}

\subsection{Contributions}

Our work uses GP regression to learn the effects of an uncertain dynamics model on the CBF-based safety constraint. Then, a second-order cone program (SOCP)-based controller, namely the GP-CBF-SOCP, is proposed that gives a probabilistic safety guarantee when the optimization is feasible. \revision{This SOCP controller design was first introduced in the conference version of this work in \cite{castaneda2021pointwise}, which extended the work in \cite{GPCLFSOCP} for Lyapunov controllers.}

In \cite{castaneda2021pointwise}, we established necessary and sufficient conditions for the \revision{pointwise} feasibility of the SOCP controller \revision{at a} \revision{state. This pointwise feasibility condition implies that the data quality is \textit{locally} sufficient to ensure safety at that state. However, we only achieved a partial linkage between data quality and safety due to the lack of a \textit{recursive} feasibility guarantee. That is, with a fixed dataset, the SOCP controller could navigate to state-space regions with insufficient data, resulting in a loss of feasibility and potential safety violations. Frameworks in \cite{dhiman2021control, brunke2022barrier} face this same issue.}

\revision{The main component we introduce in this journal version is an event-triggered online data collection mechanism that ensures the recursive feasibility of the SOCP controller. By achieving this, we address the missing linkage between data quality, SOCP controller feasibility, and safety: we now use the feasibility analysis to guide data collection. Our approach tightly connects the three entities---evaluating the feasibility condition to judge if and how to improve the data, using the data for predictions with the GP model, and finally ensuring safety recursively---thereby establishing a complete online learning-based safety framework for uncertain systems.}

\revision{Specifically,} our online data collection algorithm ensures at all times the availability of a control input direction that renders the system safe with high probability. If the available prior knowledge from the model and past data is sufficient to characterize such a \revision{safe} direction, our proposed method simply acts as a safety filter applied to a performance-driven control law. However, whenever the uncertainty in the safe control direction reaches a critical level, our algorithm takes a safe exploration action that improves the knowledge of the system's response to such control inputs. Unlike the strategy in \cite{umlauft2019feedback}, which aims to improve overall accuracy of the GP prediction for a feedback linearization-based controller, \revision{our approach focuses on exciting safe control directions to collect new data that reduces uncertainty specifically in such directions.}

Moreover, we prove local Lipschitz continuity of the probabilistic safety-critical controller and give formal arguments about the existence and uniqueness of closed-loop executions of the system under our proposed safe online learning algorithm. In turn, this allows us to provide the main theoretical result of this paper (Theorem \ref{thm:main_theorem}), establishing safety in terms of \textit{set invariance with high probability}, even in regions \revision{where the prior data and the model knowledge are limited.} To our knowledge, this is the first work in the area of CBFs applied to systems with uncertain dynamics that collects data online and provides recursive feasibility guarantees of the CBF-based safe controller.

\revision{To sum up, the main components of this article are: 
\begin{enumerate}[leftmargin=1.25em, labelindent=\parindent, listparindent=\parindent, labelwidth=0pt]
    \item A concise summary of the necessary and sufficient conditions for the \textit{pointwise} feasibility of the GP-CBF-SOCP controller, originally presented in \cite{castaneda2021pointwise}, with improved notations and explanations to provide readers with intuition behind our analysis (Section \ref{sec:05feasibility}).
    \item The design of a new event-triggered online safe learning algorithm, derived from the feasibility analysis, which takes safe exploration actions whenever necessary to ensure the \textit{recursive} feasibility of the controller (Section \ref{subsec:main-algorithm}).
    \item A formal proof of the probabilistic forward invariance of the CBF zero-superlevel set under the proposed control algorithm (Section \ref{subsec:main-theory-part}).
\end{enumerate}
}

\subsection{Notations}
\revision{
\small
\noindent$B$: CBF (Definition \ref{def:cbf})

\noindent $\tilde{\dot{B}}$: estimate of the CBF derivative based on the nominal model (\eqref{eq:Btilde})

\noindent$\mathbb{D}_N$: dataset containing $N$ points for the GP regression

\noindent $e_\dagger(x|\mathbb{D}_N)$: the eigenvector associated with $\eigval(x|\mathbb{D}_N)$

\noindent$f, g$: true plant vector fields in \eqref{eq:system}

\noindent$\tilde{f}, \tilde{g}$: nominal model vector fields in \eqref{eq:nominal-model}

\noindent$\feasmat$: feasibility tradeoff matrix \eqref{eq:defineF}

\noindent$k_c$: Affine Dot Product (ADP) kernel (Definition \ref{def:adpkernel})

\noindent$\widehat{L_f B}(x|\mathbb{D}_{N})$, $\widehat{L_g B}(x|\mathbb{D}_{N})$: GP mean-based estimate of the Lie derivatives of $B$ in \eqref{eq:LfBhat} and \eqref{eq:LgBhat}.

\noindent$m$: control input dimension

\noindent$n$: state dimension

\noindent$u$: control input

\noindent$u_{\text{ref}}$: reference controller (to achieve a user-defined task)

\noindent$x$: state

\noindent$\mathcal{X}$: state domain

\noindent$\mathcal{X}_{\text{safe}} := \{ x \in \mathcal{X}: B(x) \geq 0 \}$

\noindent$y=[1, u^\top]^\top$ augmented control vector.

\noindent$z_B$: noisy measurement of $\Delta_B(x, u)$

\noindent$\beta$: constant in Lemma \ref{lemma:DeltaUCB} for the probabilistic bound of $\Delta_B$

\noindent$\gamma$: comparison function in Definition \ref{def:cbf}

\noindent$\delta$: probability level of the CBF chance constraint (Lemma \ref{lemma:DeltaUCB})

\noindent$\Delta_B$: model uncertainty term in \eqref{eq:mismatch_cbf}

\noindent$\epsilon$: threshold for $\lambda_\dagger$ in Algorithm \ref{algo:safelearning} ($\lambda_\dagger < -\epsilon$)

\noindent $\eigval(x|\mathbb{D}_N)$: minimum eigenvalue of $\feasmat$ in Lemma \ref{lemma:sufficient}.

\noindent$\gpmuB (x,u|\mathbb{D}_{N})$: GP posterior mean in \eqref{eq:mu_feas}

\noindent$\sigma_B^{2}(x,u|\mathbb{D}_{N})$: GP posterior variance in \eqref{eq:var_feas}

\noindent$\sigma_n^2$: measurement noise variance

\noindent$\gpGramB(x|\mathbb{D}_N)$: Gram matrix in the GP posterior variance \eqref{eq:var_feas}

\noindent $\gpGramB^{1/2}, \gpGramfB^{1/2}, \gpGramUB^{1/2}$: matrix square root of $\gpGramB$, and its drift and control relevant components (\eqref{eq:upper_right_uncertainty_sqrt}, \eqref{eq:lower_right_uncertainty_sqrt})

\noindent$\tau_{\text{max}}$: maximum time of existence and uniqueness of $x(t)$.

\normalsize
} 

\section{Problem Statement}
\label{sec:02background}
Throughout the paper we consider a control-affine nonlinear system of the following form:
\vspace{-3pt}
\begin{equation}\label{eq:system}
    \dot{x} = f(x) + g(x)u,
    \vspace{-3pt}
\end{equation}
where $x \in \mathcal{X} \subset \R^n$ is the state and $u \in \mathbb{R}^m$ is the control input. Many important classes of real-world systems, such as those with Lagrangian dynamics, can be represented in this form. We assume that $f: \mathcal{X} \to \R^n$ and $g: \mathcal{X} \to \R^{n\times m}$ are locally Lipschitz continuous. We will call system \eqref{eq:system} the \textit{true plant}. The problem addressed in this paper is how to guarantee the safety of the true plant \eqref{eq:system} when its dynamics $f$ and $g$ are unknown, 
while trying to accomplish a desired task.
Our proposed method will tackle this problem using real-time data and an approximate \textit{nominal model} of the system's dynamics, with $\tilde{f}: \mathcal{X} \to \R^n$ and $\tilde{g}: \mathcal{X} \to \R^{n\times m}$,
\vspace{-3pt}
\begin{equation}\label{eq:nominal-model}
    \dot{x} = \tilde{f}(x) + \tilde{g}(x)u.
    \vspace{-3pt}
\end{equation}

\subsection{Safety with Perfectly Known Dynamics}

We first formalize the notion of safety. In particular, we say that a control law $u: \mathcal{X} \to \R^m$ guarantees the safety of system \eqref{eq:system} with respect to a \textit{safe set} $\mathcal{X}_{\text{safe}} \subset \mathcal{X}$, if the set $\mathcal{X}_{\text{safe}}$ is forward invariant under the control law $u$.

\begin{definition}[Safety as forward invariance] A set $\mathcal{X}_{\text{safe}}$ is forward invariant under a control law $u: \mathcal{X} \to \R^m$
if, for any $x_0 \in \mathcal{X}_{\text{safe}}$, the closed-loop solution $x(t)$ of system \eqref{eq:system} under $u$ remains in $\mathcal{X}_{\text{safe}}$ for all $ t \in [0, \tau_{max})$. Here, $\tau_{max}$ is the maximum time of existence and uniqueness of $x(t)$, which exists and is strictly greater than zero if the control law $u$ is locally Lipschitz continuous in $x$ \cite{khalil2002nonlinear}.
\end{definition}

\revision{
\begin{remark}
\label{rmk:tau_max}
Throughout this article, we will say that a control policy is safe if it satisfies the above invariance condition over the entire period of existence and uniqueness of solutions of the closed-loop system, defined by $\tau_{max}$. If the closed-loop system is forward complete ($\tau_{max}=\infty$), the invariance condition is guaranteed indefinitely.
\end{remark}

Next, we introduce the concept of CBF and its usage for ensuring safety as defined above.
} 

\begin{definition}[Control Barrier Function \cite{ames2017cbf}]
\label{def:cbf} Let $\mathcal{X}_{\text{safe}} = \{ x \in \mathcal{X}: B(x) \geq 0 \}$ be the zero-superlevel set of a continuously differentiable function $B: \mathcal{X} \to \R$. Then, $B$ is a \emph{Control Barrier Function} (CBF) for system \eqref{eq:system} if there exists an extended class $\mathcal{K}_\infty$ function\footnote{\revision{A function $\gamma:\R\rightarrow\R$ is an extended class $\mathcal{K}_\infty$ function if it is strictly increasing and $\gamma(0) = 0$ \cite{ames2017cbf}.}} $\gamma$ such that for all $x \in \mathcal{X}$ the following holds:
\begin{equation}
    \label{eq:cbf_def}
    \sup_{u \in \R^m} \underbrace{L_f B(x) + L_g B(x) u}_{= \dot{B}(x,u)} + \gamma (B(x)) \geq 0,
\end{equation}
where $L_f B(x)\!\coloneqq\!\nabla B(x)\!\cdot\!f(x)$ and $L_g B(x)\!\coloneqq\!\nabla B(x)\!\cdot\!g(x)$ are the Lie derivatives of $B$ with respect to $f$ and $g$.
\end{definition}

The following result states that the existence of a CBF guarantees that the control system is safe.

\begin{lemma}
\label{lemma:CBF-Invariance}
\cite[Cor.~2]{ames2017cbf}
Let system \eqref{eq:system} admit a CBF $B: \mathcal{X} \to \R$. Let $\mathcal{X}_{\text{safe}}= \{ x \in \mathcal{X}: B(x) \geq 0 \}$ be its associated safe set, with boundary $\partial \mathcal{X}_{\text{safe}}= \{ x \in \mathcal{X}: B(x) = 0 \}$. If for all $x \in \partial \mathcal{X}_{\text{safe}}$ it holds that $\nabla B(x) \neq 0$, then any Lipschitz continuous control law $u: \mathcal{X} \to \R^m$ satisfying
\begin{equation}
    \label{eq:cbf_condition}
    u(x) \in \{u\in \R^m : \dot{B}(x,u) + \gamma\left(B(x) \right) \geq 0\}
\end{equation}
renders the set $\mathcal{X}_{\text{safe}}$ forward invariant.
\end{lemma}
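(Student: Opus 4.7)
The plan is to establish forward invariance by reducing the $n$-dimensional safety problem to a scalar comparison argument applied to $b(t) := B(x(t))$. First I would use the hypothesis that $u$ is locally Lipschitz, together with the assumed local Lipschitz continuity of $f$ and $g$, to conclude that the closed-loop vector field $x \mapsto f(x) + g(x) u(x)$ is locally Lipschitz. Standard ODE theory (Picard--Lindel\"of) then yields, for any $x_0 \in \mathcal{X}_{\text{safe}}$, a unique solution $x : [0, \tau_{max}) \to \mathcal{X}$, which is precisely the object appearing in the definition of forward invariance.

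Next I would differentiate $b$ along this trajectory: by the chain rule and the definition of the Lie derivatives, $\dot{b}(t) = L_f B(x(t)) + L_g B(x(t))\, u(x(t)) = \dot{B}(x(t), u(x(t)))$. Applying the hypothesis $u(x) \in \{u : \dot{B}(x,u) + \gamma(B(x)) \geq 0\}$ pointwise along the trajectory then gives the scalar differential inequality $\dot{b}(t) \geq -\gamma(b(t))$ for all $t \in [0, \tau_{max})$.

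The key step is to argue that this inequality together with $b(0) \geq 0$ forces $b(t) \geq 0$ for all $t$. I would introduce the comparison scalar ODE $\dot{y} = -\gamma(y)$ with $y(0) = b(0)$. Since $\gamma$ is an extended class $\mathcal{K}_\infty$ function, $\gamma(0) = 0$, so $y \equiv 0$ is an equilibrium; combined with $\gamma$ being strictly increasing, this ensures $y(0) \geq 0 \Rightarrow y(t) \geq 0$ for all $t$. Invoking the comparison lemma then yields $b(t) \geq y(t) \geq 0$ throughout $[0, \tau_{max})$, which is exactly the forward invariance conclusion $x(t) \in \mathcal{X}_{\text{safe}}$.

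The main obstacle I anticipate is making the comparison step fully rigorous, because extended class $\mathcal{K}_\infty$ membership only guarantees continuity of $\gamma$, not local Lipschitzness, so the comparison ODE need not have a unique solution. The standard remedies are either (i) to assume (as is customary in the CBF literature) that $\gamma$ is locally Lipschitz near the origin, restoring uniqueness, or (ii) to replace the comparison step with a Nagumo-type tangency argument at the boundary: if some $t^* \in (0, \tau_{max})$ were a first exit time, then $B(x(t^*)) = 0$ and the CBF condition would force $\dot{B}(x(t^*), u(x(t^*))) \geq -\gamma(0) = 0$, so the closed-loop vector field is tangent or inward-pointing at $x(t^*)$. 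The hypothesis $\nabla B(x) \neq 0$ on $\partial \mathcal{X}_{\text{safe}}$ guarantees that the boundary is a codimension-one $C^1$ submanifold, which is precisely what makes this tangency reasoning (and the conclusion that no escape is possible) meaningful.
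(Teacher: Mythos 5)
Your argument is correct, and it is essentially the standard proof of this result; note that the paper itself offers no proof of Lemma \ref{lemma:CBF-Invariance} beyond the citation to \cite[Cor.~2]{ames2017cbf}, so there is nothing internal to compare against. Your main line (local Lipschitz closed loop $\Rightarrow$ unique solution, then $\dot b(t)\geq-\gamma(b(t))$ and a comparison with $\dot y=-\gamma(y)$) is exactly the comparison-lemma route, and you correctly flag the one real subtlety, namely that an extended class $\mathcal{K}_\infty$ function need only be continuous. Two small refinements: (a) even without local Lipschitzness of $\gamma$ the comparison principle for continuous right-hand sides (bounding $b$ below by the \emph{minimal} solution of $\dot y=-\gamma(y)$, which stays nonnegative since $-\gamma(y)>0$ for $y<0$) closes the gap without extra assumptions, and this route never uses the hypothesis $\nabla B(x)\neq 0$ on $\partial\mathcal{X}_{\text{safe}}$ --- that hypothesis is present because the cited reference argues via Nagumo's theorem, for which regularity of the boundary is needed so that the subtangentiality condition $\nabla B(x)\cdot\dot x\geq 0$ characterizes the tangent cone; (b) be careful with your alternative remedy (ii) as phrased: at a putative first exit time $t^*$ the conclusion $\dot b(t^*)\geq 0$ alone does not yield a contradiction (consider $b(t)=-(t-t^*)^3$), so one must invoke the full Nagumo/Brezis invariance theorem rather than a pointwise sign argument. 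With either repair the proof is complete.
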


Given a safety-agnostic reference controller $u_{\text{ref}}: \mathcal{X} \to \R^m$, the condition in \eqref{eq:cbf_condition} can be used to formulate a minimally-invasive safety-filter \cite{ames2017cbf}:

{\small
\HRule
\noindent
\vspace{-1em}
\begin{subequations}
\label{eq:cbf-qp-all}
\begin{align}
& \hspace{-2.5em} \textbf{CBF-QP}: \;\; && u^{*}(x) = \underset{u\in \R^{m}}{\argmin}  \quad \norm{u-u_{\text{ref}}(x)}_2^2 \label{eq:cbf-qp-cost}\\
& \quad \quad \text{s.t.} && L_f B(x) + L_g B(x)u + \gamma (B(x)) \geq 0, \label{eq:cbf-constraint} \vspace{-.5em}
\end{align}
\end{subequations}
\hrule
\normalsize
\vspace{2mm}}
\noindent which is a quadratic program (QP). \eqref{eq:cbf-qp-all} is solved pointwise in time to obtain a safety-critical control law $u^* : \mathcal{X} \to \R^m$ that only deviates from the reference controller $u_{\text{ref}}$ when safety is compromised. However, note that this optimization problem requires perfect knowledge of the dynamics of the system, since the Lie derivatives of $B$ with respect to the true dynamics $f$ and $g$ appear in the constraint.
Therefore, throughout this paper we will refer to \eqref{eq:cbf-constraint} as the \textit{true CBF constraint}. Note that this constraint is affine in the control input.

In \cite[Thm. 8]{xu2015robustness}, it is shown that if $u_{\text{ref}}$ and $\gamma$ are Lipschitz continuous functions, $B$ has a Lipschitz continuous gradient and if it satisfies the relative degree one condition in $\mathcal{X}$, i.e., $L_gB(x) \neq 0\ \forall x \in \mathcal{X}$; then the CBF-QP of \eqref{eq:cbf-qp-all} yields a locally Lipschitz control policy, therefore guaranteeing the forward invariance of $\mathcal{X}_{\text{safe}}$ by Lemma \ref{lemma:CBF-Invariance}.

\subsection{Safety under Model Uncertainty}
\label{subsec:problem-setup}
Now, we consider the case where the true dynamics \eqref{eq:system} are uncertain and only a nominal model \eqref{eq:nominal-model} is available. \revision{Moreover, we might not have any initial dataset containing previous trajectories of the true plant.} Instead, the system must autonomously reason about what data it needs to collect online in order to stay safe with a high probability.

\begin{problem}
\label{prob:problem-statement}
For a given safe set $\mathcal{X}_{\text{safe}} = \{x \in \mathcal{X}: B(x)\geq 0 \}$ and nominal dynamics model \eqref{eq:nominal-model}, design a data collection strategy and a data-driven control law $u : \mathcal{X} \to \R^m$ that together render the set $\mathcal{X}_{\text{safe}}$ forward invariant for system \eqref{eq:system} with a high probability, i.e.,

\vspace{-10pt}
{\footnotesize
\begin{equation*}
\label{eq:prob-invariance}
\mathbb{P}\{\ \forall x_0 \in \mathcal{X}_{\text{safe}},\ x(0) = x_0 \implies x(t) \in \mathcal{X}_{\text{safe}}, \forall t \in [0,\tau_{max})\ \}\geq 1-\delta, 
\vspace{-3pt}
\end{equation*}}

\noindent where 
$\delta \in (0,1]$ is a user-defined risk tolerance.
\end{problem}

\revision{The primary assumption we make is that we have access to a CBF for the set $\mathcal{X}_{\text{safe}}$ that is valid for the true plant:
\begin{assumption}
\label{assumption:valid-cbf}
$B: \mathcal{X} \to \R$ is a valid CBF for the true plant \eqref{eq:system}, with zero-superlevel set $\mathcal{X}_{\text{safe}} = \{ x \in \mathcal{X}: B(x) \geq 0 \}$, i.e. there exists a control policy $u: \mathcal{X} \to \R^m$ satisfying \eqref{eq:cbf_def} for the true plant \eqref{eq:system}.
\end{assumption}}

\revision{This assumption implies that a control policy exists (albeit unknown) to keep the set $\mathcal{X}_{\text{safe}}$ forward invariant for the true plant \eqref{eq:system}.
However, even when a valid CBF is available, obtaining such a control policy is not straightforward with no accurate access to $f$ and $g$, which constitutes the main challenge we address. Assumption \ref{assumption:valid-cbf} is also present in the prior works that aimed to solve similar problems as ours \cite{taylor2020cbf,taylor2020towards,choi2020reinforcement,dhiman2021control,greeff2021learning}.}

\revision{
\begin{remark} 
\label{remark:learning-based-cbf}
\textit{(CBFs for uncertain systems)}
    Designing effective CBFs (those that are not overly conservative) for uncertain systems is an active area of research \cite{dawson2022safe, qin2022sablas, jagtap2020control, lindemann2021learning, jin2020neural, nejati2023data}. Our contribution is relevant but runs parallel to such line of research, since even with a valid CBF, finding a control action that satisfies the constraint \eqref{eq:cbf-constraint} is in itself a challenge for uncertain dynamics. In fact, the efforts in designing the CBF and in formulating the control input constraint for uncertain systems complement each other, as the safety for uncertain systems can only be guaranteed when both aspects are integrated. 

    Different works summarized in \cite{cohen2024safety} suggest that CBFs designed for reduced-order models can be used as (possibly conservative) CBFs for full-order systems. This indicates that our assumption of having a CBF that is valid for the true plant is not overly restrictive and just requires the nominal model in \eqref{eq:nominal-model} to capture relevant structural properties of the true plant, as discussed in Sections III-V of \cite{cohen2024safety}.
    Furthermore, due to the inherent robustness properties of CBFs, with conservative estimates of the bounds of the uncertainty of the nominal model, a robust CBF based on the nominal model can be designed and be used for the true plant \cite{xu2015robustness, kolathaya2018input}. Using the nominal model is also known to be a reasonable procedure for feedback linearizable systems whose relative degree is known. As long as the relative degree structure remains identical for the nominal model and the true plant, a CBF valid for the nominal model is also valid for the true plant \cite{taylor2020towards}.
\end{remark}
} 

\revision{The CBF constraint \eqref{eq:cbf-constraint} is expressed in terms of the true plant but with the uncertain vector fields $f, g$. We can equivalently express this in terms of the nominal model, and a scalar uncertain term $\Delta_B$:}
\begin{equation}
    \label{eq:cbf-constraint-uncertainty}
    L_{\tilde{f}}B(x) + L_{\tilde{g}}B(x) u + \Delta_B(x, u) + \gamma(B(x)) \ge 0,
\end{equation}
\noindent where $L_{\tilde{f}}B$ and $L_{\tilde{g}}B$ are the Lie derivatives of $B$ computed using the nominal dynamics model \eqref{eq:nominal-model}, and the uncertain term $\Delta_B$ is defined for each $x \in \mathcal{X}$, $u \in \R^m$ as
\begin{equation}
    \Delta_B(x, u) := (L_f B \!-\!L_{\tilde{f}}B)(x) + (L_g B\!-\!L_{\tilde{g}}B)(x)u. \label{eq:mismatch_cbf}
\end{equation}
In the next section, we will present a method to estimate the function $\Delta_B$ using data from the true plant and Gaussian Process regression. By doing so, it is possible to formulate a probabilistic version of the optimization problem \eqref{eq:cbf-qp-all} that takes into account the current best estimate of the term $\Delta_B$ and the estimation uncertainty. Note that learning $\Delta_B$ is advantageous rather than learning the full dynamics of the system since $\Delta_B$ is a scalar function. Indeed, this function condenses all the safety-relevant model uncertainty into a scalar.

\section{Gaussian Process Regression-based Probabilistic Safety Filter}
\label{sec:03gp}
In this section, we first present background knowledge from \cite{GPCLFSOCP} on the use of
GP regression
\revision{to obtain predictions of $\Delta_B(x,u)$} with high probability error bounds. \revision{Then, we introduce the design of the SOCP-based controller, the GP-CBF-SOCP, that provides a probabilistic safety guarantee for the uncertain system when the optimization is feasible.}

\subsection{Gaussian Process Regression for Learning $\Delta_B(x, u)$}

A Gaussian Process (GP) is a random process for which any finite collection of samples have a joint Gaussian distribution. It is characterized by its mean $q: \gpdomain\rightarrow\R$ and covariance (or kernel) $k: \gpdomain\times\gpdomain\rightarrow\R$ functions, where $\gpdomain$ is the input domain of the process. In GP regression, an unknown function $h(\cdot)$ is assumed to be a sample from a GP, and through a set of $N$ noisy measurements $\mathbb{D}_N=\{\gpvar_j,h(\gpvar_j)+\epsilon_j\}_{j=1}^{N}$, a prediction of $h(\cdot)$ at an unseen query point $\gpvar_*$ can be derived from the joint distribution of $[h(\gpvar_1), \cdots, h(\gpvar_N), h(\gpvar_{*})]^T$ conditioned on the dataset $\mathbb{D}_N$. $\epsilon_j \sim \mathcal{N}(0, \sigma_{n}^2)$ is white measurement noise, with $\sigma_{n}>0$. \revision{We use the zero mean function, $q \equiv 0$, since the prior information (based on the nominal model) is already captured in $L_{\tilde{f}}B(x), L_{\tilde{g}}B(x)$ in \eqref{eq:cbf-constraint-uncertainty}.}

Given the dataset $\mathbb{D}_N$, the mean and variance of the prediction from the GP regression are given as
\begin{equation}
\label{eq:gpposteriormu}
        \mu(x_*|\mathbb{D}_N) = \mathbf{z}^T (K + \sigma_n^2 I )^{-1} K_{*}^{T}, \vspace{-.5em}
\end{equation}
\begin{equation}
\label{eq:gpposteriorsigma}
    \sigma^{2}(x_*|\mathbb{D}_N) = k\left(\gpvar_{*}, \gpvar_{*}\right)-K_{*}  (K + \sigma_n^2 I )^{-1} K_{*}^{T}, \vspace{-.3em}
\end{equation}
where $K\in\R^{N\times N}$ is the GP Kernel matrix, whose $(i, j)^{th}$ element is $k(\gpvar_i, \gpvar_j)$, $K_{*}=[k(\gpvar_*, \gpvar_1),\ \cdots \ ,k(\gpvar_*, \gpvar_N)]\in\R^N$, and $\mathbf{z}\in \R^N$ is the vector containing the noisy measurements of $h$, $z_j:=h(\gpvar_j)+\epsilon_j$.

The choice of kernel $k$ determines properties of the target function $h$, like its smoothness and signal variance. Moreover, the kernel can be used to express prior structural knowledge of the the target function \cite{duvenaud2014automatic}.
\revision{In our case, we want to exploit the fact that our target function $\Delta_B$ from \eqref{eq:mismatch_cbf} is control-affine}.
We first rewrite \eqref{eq:mismatch_cbf} as
\begin{equation}
\label{eq:delta_with_phi}
    \Delta_B(x, u) = \Phi_B(x) \cdot \left[\begin{array}{c} 1 \\ u \end{array}\right],
\end{equation}
where $\Phi_B(x) := \left[
         L_f B(x) \!-\!L_{\tilde{f}}B(x)  \quad
         L_g B(x) \!-\!L_{\tilde{g}}B(x)    \right]$.
We then consider a GP regression for $\Delta_B$, with domain $\gpaugdomain := \mathcal{X} \times \R^{m+1}$, where $\R^{m+1}$ is the space of $\augu\coloneqq[1, u^T]^T$. For this, we use the compound kernel presented in \cite{GPCLFSOCP} to exploit the structure of \eqref{eq:delta_with_phi}:

\begin{definition}[Affine Dot Product compound kernel \cite{GPCLFSOCP}] \label{def:adpkernel}
Define $k_{c}:\gpaugdomain \times \gpaugdomain \rightarrow \R$ given by
\vspace{-1em}

\small
\begin{equation}
    k_{c}\left(\left[\begin{array}{c} x \\ y \\\end{array}\right], \left[\begin{array}{c} x' \\ y' \\\end{array}\right]\right) := y^T Diag([k_1(x, x'), \cdots, k_{m+1}(x, x')]) y'
\label{eq:adpkernel}
\end{equation}
\normalsize

\noindent as the \emph{Affine Dot Product} (ADP) compound kernel of $(m\!+\!1)$ individual kernels $k_1,\ldots ,$ $k_{m+1}:\mathcal{X}\times \mathcal{X} \rightarrow \R$.
\end{definition}

\revision{Using the ADP compound kernel as the covariance function for the GP regression, equations \eqref{eq:gpposteriormu} and \eqref{eq:gpposteriorsigma} take the following form for the mean and variance of the GP prediction for $\Delta_B$ at a query point $(x_{*}, \augu_{*})$:
\vspace{-3pt}
\begin{equation}
\label{eq:mu_adp}
    \mu_B(x_*,y_*|\mathbb{D}_N) = \underbrace{\mathbf{z}^T (K_c + \sigma_n^2 I )^{-1} K_{*Y}^T}_{=:\ m_B(x_*|\mathbb{D}_N)^T} y_{*},
\end{equation}
\vspace{-5pt}
{\small
\begin{equation}
\label{eq:sigma_adp}
    \sigma_B^{2}(x_*,y_*|\mathbb{D}_N) \!= \!y_{*}^{T}\!\underbrace{\left(K_{**} \!-\!K_{*Y}\!(K_c + \sigma_n^2 I )^{-1} K_{*Y}^{T}\! \right)}_{=:\ \Sigma_B(x_*|\mathbb{D}_N)}\!y_{*},
\end{equation}
\vspace{-6pt}}

\noindent where $K_c\in\R^{N\times N}$ is the Gram matrix of $k_c$ in \eqref{eq:adpkernel} for the training data inputs ($X,Y$), $K_{**} = \!Diag \text{ $\left([k_1(x_{*}, x_{*}), \ldots, k_{m+1}(x_{*}, x_{*})]\right)$} \in \R^{(m+1)\times(m+1)}$, and $K_{*Y}\in\R^{(m+1)\times N}$ is given by

{\small
\vspace{-10pt}
\begin{equation*}
    K_{*Y}\!=\!\begin{bmatrix} K_{1*} \\ \vdots \\K_{(m+1)*}
    \end{bmatrix}\!\circ\!Y,\  \text{with}\;\;K_{i*}\!=\![k_i(x_{*}, x_1),\cdots, k_i(x_{*}, x_N)].
\end{equation*}}

\noindent Here, $\circ$ denotes the element-wise product. The prediction of $\Delta_B(x_*,u_*)$ has a mean function \eqref{eq:mu_adp} that is affine in the control input $u_*$ and a variance function \eqref{eq:sigma_adp} that is quadratic in $u_*$. \revision{This arises from the control-affine nature of the dynamics and the use of the ADP compound kernel. This structure leads to the convexity of the optimization in the safety filter, which will be introduced in Section \ref{sec:04controllers}.} 
} 

We now revisit Theorem 2 of \cite{GPCLFSOCP} to construct a probabilistic bound on the true value of $\Delta_B$ from the GP prediction. To provide guarantees about the unknown function at an arbitrary point that is not necessarily part of the data, a set of assumptions, initially established in \cite[Thm. 6]{gpucb}, is required. In particular, the target function is required to belong to the Reproducing Kernel Hilbert Space (RKHS, \cite{wendland2004scattered}) $\mathcal{H}_k(\gpaugdomain)$ of the chosen kernel, and have a bounded RKHS norm $\norm{\cdot}_{k}$.
An RKHS, denoted as $\mathcal{H}_k(\gpaugdomain)$, is characterized by a specific positive-definite kernel $k$. The kernel $k$ evaluates whether a member of $\mathcal{H}_k(\gpaugdomain)$ satisfies a specific property, namely a ``reproducing" property: the inner product between any member function $h\in\mathcal{H}_k(\mathcal{\gpaugdomain})$ and the kernel $k(\cdot, \bar{x})$ should reproduce $h$, i.e., $\langle h(\cdot), k(\cdot, \bar{x}) \rangle_{k} = h(\bar{x}), ~\forall \bar{x}\in \gpaugdomain$. Furthermore, the RKHS norm $\norm{h}_{k}:=\sqrt{\langle h, h \rangle}_{k}$
is a measure of how ``well-behaved''\footnote{$\norm{h(\bar{x})-h(\bar{x}')}_{2}\le\norm{h}_{k}\norm{k(\bar{x},\cdot)-k(\bar{x}',\cdot)}_{k} \; \forall \bar{x}, \bar{x}' \in \mathcal{\gpaugdomain}$} the function $h\in\mathcal{H}_k(\mathcal{\gpaugdomain})$ is.

\begin{lemma}
\label{lemma:DeltaUCB} \revision{(Probabilistic bound of $\Delta_B(x, u)$)} \cite[Theorem 2]{GPCLFSOCP} Consider $m\!+\!1$ bounded kernels $k_{i}$, for $i\!=\!1,\ldots,(m+1)$. Assume that the $i$th element of $\Phi_B$ is a member of $\mathcal{H}_{k_{i}}$ with bounded RKHS norm, for $i\!=\!1,\ldots,(m+1)$. Moreover, assume that we have access to a dataset $\mathbb{D}_N=\{(\gpvar_j,u_j),\ \Delta_B(\gpvar_j,u_j)+\epsilon_j\}_{j=1}^{N}$ of $N$ noisy measurements, and that $\epsilon_j$ is zero-mean and uniformly bounded by $\sigma_{n} > 0$. Let $\beta \coloneqq \left(2\eta^2 + 300 \kappa_{N+1} \ln^3((N+1)/\delta)\right)^{0.5}$,
with $\eta$ the bound of $\norm{\Delta_B}_{k_c}$, $\kappa_{N+1}$ the maximum information gain after getting $N+1$ data points, and $\delta \in (0,1)$. Let $\mu_B$ and $\sigma^2_B$ be the mean \eqref{eq:mu_adp} and variance \eqref{eq:sigma_adp} of the GP regression for $\Delta_B$, using the ADP compound kernel $k_c$ of $k_1,\ldots,k_{m+1}$, at a query point $(x_{*},\ \augu_{*}=[1, u_{*}^T]^T)$, where $x_*$ and $\augu_{*}$ are elements of bounded sets $\mathcal{X}\subset\R^{n}$ and $\augU\subset\R^{m+1}$, respectively. Then, the following holds:
\vspace{-3pt}
\begin{multline}
\label{eq:sigmaUCB}
\hspace{-10pt}
    \mathbb{P}\bigg\{\ \bigg| \mu_B(x_*,y_*|\mathbb{D}_N) - \Delta_B(x_{*},u_{*}) \bigg| \leq \beta \sigma_B(x_*,y_*|\mathbb{D}_N),\\ \forall N \geq 1,\ \forall x_{*} \in \mathcal{X},\ \forall \augu_*=[1,u_{*}^T]^T \in \augU\bigg\}\geq 1-\delta.
\vspace{-3pt}
\end{multline}
\end{lemma}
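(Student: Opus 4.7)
The plan is to reduce the statement to a direct invocation of the GP--UCB bound \cite[Thm.~6]{gpucb} (the result on which \cite[Thm.~2]{GPCLFSOCP} was based) applied to the target function $\Delta_B$ viewed as an element of $\mathcal{H}_{k_c}(\gpaugdomain)$. Two of the hypotheses required by \cite[Thm.~6]{gpucb} are delivered by the statement itself: the noise $\epsilon_j$ is zero-mean and uniformly bounded by $\sigma_n$, and the query points $(x_*, y_*)$ lie in the bounded set $\mathcal{X}\times\augU$. The substantive work is to verify that $\Delta_B$ belongs to $\mathcal{H}_{k_c}$ with bounded RKHS norm, and that the expressions \eqref{eq:mu_adp}--\eqref{eq:sigma_adp} are the standard GP posterior moments for the prior $\GP(0,k_c)$.

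For the RKHS membership, the key algebraic observation is that the ADP kernel of Definition~\ref{def:adpkernel} can be rewritten as
\[
k_c((x,y),(x',y')) \;=\; \sum_{i=1}^{m+1} y_i\, y'_i \, k_i(x,x'),
\]
that is, as a sum of $(m+1)$ product kernels, each the product of $k_i$ on $\mathcal{X}$ with the rank-one linear kernel $(y,y')\mapsto y_i y'_i$ on $\R^{m+1}$. Invoking the standard characterization of the RKHS of a product of kernels, the $i$th summand has RKHS $\{(x,y)\mapsto g_i(x) y_i : g_i \in \mathcal{H}_{k_i}\}$ with norm $\|g_i\|_{k_i}$; and the sum-of-kernels identity then implies that $\mathcal{H}_{k_c}$ contains every function of the form $(x,y)\mapsto \sum_i g_i(x) y_i = G(x)\cdot y$. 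Taking $G = \Phi_B$ and using \eqref{eq:delta_with_phi} shows that $\Delta_B\in\mathcal{H}_{k_c}$. Since the coordinate maps $y\mapsto y_i$ are linearly independent on $\R^{m+1}$, the decomposition is unique and yields $\|\Delta_B\|_{k_c}^2 = \sum_i \|(\Phi_B)_i\|_{k_i}^2$, which is finite by hypothesis; call the resulting bound $\eta$.

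For the posterior moments, the plan is to plug $k_c$ into the generic GP posterior formulas \eqref{eq:gpposteriormu}--\eqref{eq:gpposteriorsigma} and, exploiting the Hadamard-product factorization of $K_{*Y}$ recorded in the text, verify term by term that the result matches \eqref{eq:mu_adp}--\eqref{eq:sigma_adp}. With these two ingredients in place, \cite[Thm.~6]{gpucb} yields that, with probability at least $1-\delta$, for every $N\ge 1$ and every query point $(x_*, y_*)\in \mathcal{X}\times\augU$ one has $|\mu_B(x_*,y_*|\mathbb{D}_N) - \Delta_B(x_*,u_*)| \le \beta\, \sigma_B(x_*,y_*|\mathbb{D}_N)$, with $\beta$ exactly the quantity in the statement once $\|\Delta_B\|_{k_c}$ is replaced by the upper bound $\eta$.

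The step I expect to be most delicate is the RKHS identification: although the sum-and-product identities are classical, the factor $y_i y'_i$ is only positive semidefinite (degenerate in all coordinates other than $i$), so some care is required to confirm that the associated tensor-product RKHS really is the one-dimensional space of linear-in-$y$ functions spanned by $y\mapsto y_i$ and that the induced sum-of-RKHSs norm agrees with the expression in terms of the $\|(\Phi_B)_i\|_{k_i}$. Everything else is bookkeeping: once the RKHS-norm bound $\eta$ is identified, the definition of $\beta$ and the uniformity over $N$ and over query points $(x_*,y_*)$ come directly from the cited GP--UCB theorem.
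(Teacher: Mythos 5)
The paper does not prove this lemma at all—it is imported verbatim as a citation of \cite[Thm.~2]{GPCLFSOCP}—and your reconstruction follows exactly the route taken in that reference: establish that $\Delta_B\in\mathcal{H}_{k_c}$ with $\|\Delta_B\|_{k_c}^2=\sum_i\|(\Phi_B)_i\|_{k_i}^2$ via the sum-of-product-kernels decomposition of the ADP kernel (where the delicate point you flag, that the rank-one factor $y_iy_i'$ reproduces the one-dimensional span of $y\mapsto y_i$ with unit norm and that uniqueness of the decomposition gives the additive norm identity, is handled correctly), then invoke the GP--UCB bound of \cite[Thm.~6]{gpucb}. Your argument is correct and is essentially the same approach as the cited source.
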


\subsection{Probabilistic Safety Filter (GP-CBF-SOCP)}
\label{sec:04controllers}
We now make use of the probability bound given by Lemma \ref{lemma:DeltaUCB} to build an uncertainty-aware CBF chance constraint that can be incorporated in a minimally invasive probabilistic safety filter.
Let us take the lower bound of \eqref{eq:sigmaUCB} and note that $\dot{B}(x,u) = \tilde{\dot{B}}(x,u) + \Delta_B(x,u)$, 
\revision{ where 
\vspace{-0.5em}
\begin{equation}
\label{eq:Btilde}
\tilde{\dot{B}}(x,u) =  L_{\tilde{f}}B(x) + L_{\tilde{g}}B(x) u    
\vspace{-0.5em}
\end{equation}
is the CBF derivative computed using the nominal dynamics model \eqref{eq:nominal-model}. }
Then, as a result of Lemma \ref{lemma:DeltaUCB}, the following inequality holds with a compound probability (for all $x \!\in\!\mathcal{X}$, $[1\ u^T]^T \in \augU$ and $N\geq 1$) of at least $1-\delta$:
{\setlength{\belowdisplayskip}{3pt}
\setlength{\abovedisplayskip}{3pt}
\begin{align}
\vspace{-8pt}
\label{eq:B_UCB}
    \dot{B}(x,u) & \geq \tilde{\dot{B}}(x,u) + \gpmuB(x,u|\mathbb{D}_N) - \beta \gpsigmaB(x,u|\mathbb{D}_N).
    \vspace{-2pt}
\end{align}
}

Inequality \eqref{eq:B_UCB} gives a worst-case high-probability bound of the true CBF derivative. \revision{The bound is constructed based on the nominal model ($\tilde{\dot{B}}$) and the GP regression from data ($\gpmuB, \gpsigmaB$).}
We use this lower bound of the CBF derivative to construct a probabilistically robust CBF chance constraint that can be evaluated without explicit knowledge of the dynamics of the true plant, and we incorporate it in a chance-constrained reformulation of the CBF-QP safety filter:

{\small
\vspace{2mm}
\hrule
\vspace{2mm}
\begin{subequations}
\label{eq:gp-cbf-socp}
\begin{align}
& \hspace{-1em} \textbf{GP-CBF-SOCP}: \;\;\; u^{*}(x) =   \underset{u\in \R^{m}}
{\argmin}\norm{u-u_{\text{ref}}(x)}_2^2 \ \  \text{s.t.}
 \\
& \quad \;\; \tilde{\dot{B}}(x,u)\!+\!\gpmuB(x,u|\mathbb{D}_N)\!+\!\gamma (B(x)) \geq \!\beta \gpsigmaB(x,u|\mathbb{D}_N). \label{eq:socp-cbf-constraint}
\end{align}
\end{subequations}
\hrule
\vspace{2mm}}

This problem is solved at each timestep in real-time to obtain a safety-filtered control law $u^* : \mathcal{X} \to \R^m$ that only deviates from the reference $u_{\text{ref}}$ when safety is compromised for the desired probability bound of $1- \delta$.

The linear and quadratic structures of the mean \eqref{eq:mu_adp} and variance \eqref{eq:sigma_adp} of the GP prediction of $\Delta_B$, respectively, result in \eqref{eq:gp-cbf-socp} being a second-order cone prgram (SOCP), as shown in Theorem \ref{theorem:SOCP} below. Therefore, by exploiting the control-affine structure of the system during the GP regression, we obtain a convex optimization problem that can be solved at high-frequency rates when using modern solvers.

\begin{theorem}
\label{theorem:SOCP}
\revision{\!(\!\cite[Theorem 3]{GPCLFSOCP})} For an unknown control-affine system \eqref{eq:system} with associated CBF $B$, let $\gpmuB$ and $\gpsigmaB^2$ be the mean and variance functions of the GP prediction of $\Delta_B$ using the ADP compound kernel from Definition \ref{def:adpkernel}. Then, the probabilistic safety filter of \eqref{eq:gp-cbf-socp} is an SOCP.
\end{theorem}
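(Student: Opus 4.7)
The plan is to recognize directly that, after a standard epigraph reformulation of the objective, both the cost and the constraint in \eqref{eq:gp-cbf-socp} fit the canonical SOCP template (linear cost, constraints of the form $\|Au+b\|_2 \leq c^\top u + d$). Convexity then follows automatically because SOCPs are convex.

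First, I would linearize the objective. The cost $\|u - u_{\text{ref}}(x)\|_2^2$ is convex but not linear, so I introduce a slack variable $t\in\R$, replace the objective by $\min_{u,t}\; t$, and add the SOC constraint $\|u - u_{\text{ref}}(x)\|_2 \leq t$. Because squaring is monotone on $[0,\infty)$, the argmin in $u$ is unchanged, and the new objective is linear in $(u,t)$.

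The core of the argument is showing that the chance constraint \eqref{eq:socp-cbf-constraint} is itself a SOC constraint in $u$. I would first collect the affine-in-$u$ pieces: $\tilde{\dot B}(x,u) = L_{\tilde f}B(x) + L_{\tilde g}B(x)\, u$ is affine by construction; by \eqref{eq:mu_adp}, the posterior mean has the form $\mu_B(x,u|\mathbb{D}_N) = m(x|\mathbb{D}_N)^\top [1, u^\top]^\top$, which is again affine in $u$; and $\gamma(B(x))$ does not depend on $u$. These terms combine into a scalar affine function $c(x) + d(x)^\top u$. For the uncertainty term, \eqref{eq:sigma_adp} gives $\sigma_B^2(x,u|\mathbb{D}_N) = [1, u^\top]\, \Sigma(x|\mathbb{D}_N)\, [1, u^\top]^\top$, so I need to pull out a square root. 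The matrix $\Sigma(x|\mathbb{D}_N)$ is a GP posterior covariance and is therefore symmetric positive semidefinite, hence it admits a symmetric square root $\Sigma^{1/2}(x|\mathbb{D}_N)$ with $\sigma_B(x,u|\mathbb{D}_N) = \bigl\|\Sigma^{1/2}(x|\mathbb{D}_N)\, [1, u^\top]^\top\bigr\|_2$. Substituting, constraint \eqref{eq:socp-cbf-constraint} rearranges to
\[
\beta \, \bigl\|\Sigma^{1/2}(x|\mathbb{D}_N)\, [1, u^\top]^\top\bigr\|_2 \leq c(x) + d(x)^\top u,
\]
which is a SOC constraint in $u$ (the constant $\beta$ can be absorbed into the square-root factor).

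With a linear objective in $(u,t)$ and two SOC constraints, the problem is in canonical SOCP form and is therefore convex. The one nontrivial ingredient is verifying that $\Sigma(x|\mathbb{D}_N)$ is PSD, so that $\sigma_B$ admits the representation as a Euclidean norm of an affine function of $u$; this is the main point to watch for, but it follows from the standard fact that GP posterior covariance matrices are PSD (equivalently, from the Schur-complement structure visible in \eqref{eq:sigma_adp}, since $\Sigma = K_{**} - K_{*Y}(K_c+\sigma_n^2 I)^{-1}K_{*Y}^\top$ is the Schur complement of a PSD block matrix).
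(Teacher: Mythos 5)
Your proposal is correct and follows essentially the same route as the paper's proof: an epigraph reformulation of the quadratic objective into a linear cost plus a second-order cone constraint, followed by writing the chance constraint as $\beta\|\Sigma^{1/2}_B(x|\mathbb{D}_N)[1,u^T]^T\|_2$ bounded by an affine function of $u$, using the affine mean \eqref{eq:mu_adp} and quadratic variance \eqref{eq:sigma_adp} of the ADP kernel. The only cosmetic difference is that the paper invokes strict positive definiteness of $\Sigma_B$ (from $\sigma_n^2>0$) whereas you correctly observe that positive semidefiniteness already suffices for the square-root representation.
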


\revision{Theorem \ref{theorem:SOCP} is proved in \cite{GPCLFSOCP} by rewriting \eqref{eq:gp-cbf-socp} in the standard SOCP form. Here, we provide new notations necessary
for the subsequent parts of the paper.
From \eqref{eq:mu_adp} and \eqref{eq:sigma_adp}, we have {\setlength{\belowdisplayskip}{3.5pt}
\setlength{\abovedisplayskip}{3pt}
\begin{align}
    \hspace{-3pt}\gpmuB (x,u|\mathbb{D}_N) &= m_{B}(x|\mathbb{D}_N)^T [1 \ u^T]^T\!, \label{eq:mu_feas}\\ 
    \gpsigmaB^2(x,u|\mathbb{D}_N) &= [1 \;\; u^T] \gpGramB(x|\mathbb{D}_N)[1 \ u^T]^T,\label{eq:var_feas}
\end{align}}
and we can write
\vspace{-5pt}
\begin{equation}
\label{eq:cone_feas}
    \gpsigmaB(x, u |\mathbb{D}_N) = \norm{\gpGramB^{1/2}(x|\mathbb{D}_N) \begin{bmatrix}1 \\ u\end{bmatrix}}_2,
\end{equation}
where $\gpGramB^{1/2}(\cdot)\in\R^{(m+1)\times(m+1)}$ is the matrix square root of $\gpGramB(\cdot)$. Note that $\gpGramB(x|\mathbb{D}_N)$ is positive definite.

We now define:

{\small \begin{align}
\hspace{-5pt}\widehat{L_f B}(x|\mathbb{D}_N) &:= L_{\tilde{f}}B(x) + m_{B}(x|\mathbb{D}_N)_{[1]} \in \R, \label{eq:LfBhat}\\
\hspace{-10pt}\socc &:= L_{\tilde{g}}B(x) + m_{B}(x|\mathbb{D}_N)_{[2:(m+1)]}^T \in \R^{1\times m}, \label{eq:LgBhat}\\
\hspace{-5pt}\gpGramfB^{1/2}(x|\mathbb{D}_N) &:= \gpGramB^{1/2}(x|\mathbb{D}_N)_{[1:(m+1)],[1]} \in \R^{m+1}, \label{eq:upper_right_uncertainty_sqrt}\\
\hspace{-5pt}\gpGramUB^{1/2}(x|\mathbb{D}_N) &:= \gpGramB^{1/2}(x|\mathbb{D}_N)_{[1:(m+1)],[2:(m+1)]} \in \R^{(m+1)\times m},
\label{eq:lower_right_uncertainty_sqrt}
\end{align}
}

\noindent where numerical subscripts denote elements of vectors or matrices. Intuitively, $\widehat{L_f B}(x|\mathbb{D}_N)$ and $\socc$ are the mean predictions of the true plant's $L_fB(x)$ and $L_gB(x)$, respectively. $\gpGramfB^{1/2}(x|\mathbb{D}_N)$ and $\gpGramUB^{1/2}(x|\mathbb{D}_N)$ correspond to the components of the uncertainty matrix $\gpGramB^{1/2}(x|\mathbb{D}_N)$ in \eqref{eq:cone_feas}. 

With these notations, \eqref{eq:socp-cbf-constraint} can be rewritten in the standard second-order cone constraint form as:
\begin{equation*}
\begin{split}
\socc u + \Big(\socd\Big)\\  \hspace{20pt}\ge \beta \norm{\gpGramUB^{1/2}(x|\mathbb{D}_N) u + \gpGramfB^{1/2}(x|\mathbb{D}_N)}_2.
 \label{eq:constraint_std}
\end{split}    
\end{equation*}
}

\section{Analysis of Pointwise Feasibility}
\label{sec:05feasibility}

The GP-CBF-SOCP, if feasible, is guaranteed to provide $u$ that satisfies the true CBF constraint \eqref{eq:cbf-constraint} with high probability. \revision{However, unlike QPs, SOCPs with even only a single constraint can be infeasible. Our SOCP can be infeasible when the prediction uncertainty is significant enough to obstruct the discovery of a suitable control input that ensures safety. In other words, \eqref{eq:gp-cbf-socp} is infeasible when the uncertainty ($\gpsigmaB$) is dominant in the CBF chance constraint \eqref{eq:socp-cbf-constraint}. This is in} \revision{contrast to the uncertainty-free case, where the CBF-QP \eqref{eq:cbf-qp-all} is guaranteed to always be feasible by the definition of CBF. It is therefore essential to study under which conditions the GP-CBF-SOCP becomes infeasible, as safety could be compromised in those cases. The result in this section is a summary of the feasibility analysis conducted in the conference version of this work\cite{castaneda2021pointwise} with improved notations.}

The first feasibility result we present is a
necessary condition for pointwise feasibility of the GP-CBF-SOCP:

\begin{lemma}
\label{lemma:necessary}
\revision{\!(\!\cite[Lemma 2]{castaneda2021pointwise} Necessary condition)} For a given dataset $\mathbb{D}_N$, if the GP-CBF-SOCP \eqref{eq:gp-cbf-socp} is feasible at a point $x \in \mathcal{X}$, then it must hold that
\begin{equation}
\label{eq:necessary}
\scriptsize
     \begin{bmatrix}\socd \\ \socc^T\end{bmatrix}^T\!\gpGramB(x|\mathbb{D}_N)^{-1}\!\begin{bmatrix}\socd \\ \socc^T\end{bmatrix} \geq \beta^2,
\end{equation}
\revision{where $\beta$ is defined in Lemma \ref{lemma:DeltaUCB}.}
\end{lemma}

\begin{proof}
See Appendix \ref{subsec:proof-lemma-necessary}.
\end{proof}

To provide insight into this condition, for a given data set $\mathbb{D}_N$ and at a particular point $x \in \mathcal{X}$, note that the left-hand side of \eqref{eq:necessary} encodes a trade-off between the uncertainty matrix $\gpGramB(x|\mathbb{D}_N)$ and the mean prediction of the terms of the CBF constraint (as in the vector $[\socd, \ \socc]$).
\revision{The term $\socc$ reflects the mean prediction of how a control input $u$ can cause a change in the value of the CBF $B(x)$. The value of $B(x)$ can be regulated at state $x$ if the groundtruth $L_gB(x)$ is non-zero.} In this case, the control-relevant components of \eqref{eq:necessary} reveal that the necessary condition for pointwise feasibility is easily satisfied if the value of $\socc$ is dominant over the lower-right block of $\gpGramB(x|\mathbb{D}_N)$, which captures the growth of the prediction uncertainty with respect to $u$.

We next state the sufficient condition for pointwise feasibility of the GP-CBF-SOCP, which will be the foundation for the algorithm we present in the next section. 

\revision{We first note that the lower-right block of the uncertainty matrix $\gpGramB(x|\mathbb{D}_N)$, can be expressed as
\begin{multline}
\socAA := \gpGramB(x|\mathbb{D}_N)_{[2:(m+1)],[2:(m+1)]} = \\ \socA^T\ \socA \in \R^{m\times m},
\end{multline}
where $\gpGramUB^{1/2}$ is from \eqref{eq:lower_right_uncertainty_sqrt}.
} 
The sufficient condition originates from the following matrix, which we call the \emph{feasibility tradeoff matrix}:
\begin{equation}
\label{eq:defineF}
\feasmat \coloneqq \beta^2\gpGramUB(x|\mathbb{D}_N) - \socc^T\socc.
\end{equation}

The matrix $\feasmat$ encodes the tradeoff between the uncertainty of the prediction and the predicted safety. In fact, the first term in \eqref{eq:defineF}, $\beta^2\gpGramUB(x|\mathbb{D}_N)$, is a positive-definite matrix that informs about the uncertainty growth in each control direction; and the second term, $\socc^T\socc$, is a rank-one positive-semidefinite matrix capturing the mean prediction of the true plant's safest control direction $L_gB(x)$.

Note that the result of subtracting a rank-one positive-semidefinite matrix from a positive-definite matrix can have at most one negative eigenvalue. Our sufficient condition for pointwise feasibility states that if the rank-one subtraction term is strong enough to flip the sign of one of the eigenvalues of the uncertainty matrix, then there exists one feasible control input direction (defined by the corresponding eigenvector). Intuitively, along this control input direction, the controllability of the CBF is dominant over the growth of the prediction uncertainty. The following Lemma that formally presents the sufficient condition also provides an expression for such control input direction, that we call $\usafe(x)$, in a closed form:

\begin{lemma}
\label{lemma:sufficient}
\revision{\!(\!\!\cite[Lemma 3]{castaneda2021pointwise} Sufficient condition)} Given a dataset $\mathbb{D}_N$, for a point $x \in \mathcal{X}$, let $\eigval(x|\mathbb{D}_N)$ be the minimum eigenvalue of the feasibility tradeoff matrix $\feasmat$ \eqref{eq:defineF}, and $\eigvec(x|\mathbb{D}_N)$ be its associated unit eigenvector.
If $\eigval(x|\mathbb{D}_N)<0$, the GP-CBF-SOCP \eqref{eq:gp-cbf-socp} is feasible at $x$, and there exists a constant $\alpha_{min}> 0$ such that for any $\alpha > \alpha_{min}$,
\vspace{-3pt}
\begin{equation}
\label{eq:safe-direction}
    \usafe(x) = \alpha\ \text{sgn}\big(\socc \eigvec(x|\mathbb{D}_N)\big)\ \eigvec(x|\mathbb{D}_N)
\vspace{-3pt}
\end{equation}
is a feasible solution of \eqref{eq:gp-cbf-socp} at $x$.
\end{lemma}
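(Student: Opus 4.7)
The plan is to substitute the candidate $\usafe(x) = \alpha s \eigvec$, with $s \coloneqq \text{sgn}(\socc \eigvec)$, directly into the standard-form constraint \eqref{eq:constraint_std} and verify that it is satisfied for all sufficiently large $\alpha$. The cone constraint \eqref{eq:objective-soc-constraint-standard} can always be satisfied trivially by taking $t$ large, so SOCP feasibility reduces to finding some $u$ satisfying \eqref{eq:constraint_std}, and it suffices to exhibit one such $u$.

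I would first evaluate the right-hand side of \eqref{eq:constraint_std} at the candidate: substituting $u = \alpha s \eigvec$ gives $\socc u + \socd = \alpha \, |\socc \eigvec| + \socd$, which is affine in $\alpha$ with slope $|\socc \eigvec|$. For the left-hand side, I would apply the triangle inequality to obtain the affine upper bound $\beta \norm{\socA \usafe + \socb}_2 \leq \alpha \beta \norm{\socA \eigvec}_2 + \beta \norm{\socb}_2$, whose slope in $\alpha$ is $\beta\sqrt{\eigvec^T \socAA \eigvec}$.

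The key step is to compare the two slopes using the hypothesis $\eigval < 0$. The eigenvalue relation $\feasmat \eigvec = \eigval \eigvec$ together with $\norm{\eigvec}_2 = 1$ yields $\beta^2 \eigvec^T \socAA \eigvec - (\socc \eigvec)^2 = \eigval$, so $\eigval < 0$ translates directly into the strict slope inequality $\beta\sqrt{\eigvec^T \socAA \eigvec} < |\socc \eigvec|$. This in particular forces $\socc \eigvec \neq 0$, so the sign $s$ is well-defined. Because the right-hand side of \eqref{eq:constraint_std} therefore grows strictly faster in $\alpha$ than the upper bound on the left-hand side, the constraint will hold once $\alpha$ exceeds the threshold that absorbs the constant gap $\beta \norm{\socb}_2 - \socd$. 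I would then define $\alpha_{min}$ as the maximum of this threshold (which may be non-positive) and an arbitrary positive number, ensuring $\alpha_{min} > 0$ as claimed, so that every $\alpha > \alpha_{min}$ yields a feasible $\usafe(x)$.

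I do not anticipate a serious technical obstacle; the argument is essentially an algebraic unpacking. The only conceptually subtle point is recognizing why $\feasmat$ is precisely the right object to consider: the rank-one subtraction $\socc^T \socc$ encodes the squared projection of the mean prediction of $L_g B$ onto $\eigvec$, so that a negative eigenvalue of $\feasmat$ is exactly equivalent to the existence of a unit direction along which the mean prediction slope $|\socc \eigvec|$ dominates the uncertainty slope $\beta \sqrt{\eigvec^T \socAA \eigvec}$. Once this eigenvalue-to-slope translation is in hand, the rest of the proof is a routine triangle-inequality bound and a linear inequality in $\alpha$.
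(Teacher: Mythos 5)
Your proof is correct, and the inequality you isolate --- $\eigvec(x|\mathbb{D}_N)^T \feasmat \eigvec(x|\mathbb{D}_N) = \eigval(x|\mathbb{D}_N) < 0$, equivalently $\beta^2\,\eigvec(x|\mathbb{D}_N)^T\socAA\,\eigvec(x|\mathbb{D}_N) < \big(\socc\,\eigvec(x|\mathbb{D}_N)\big)^2$ --- is exactly the fact the paper's proof turns on, so the two arguments are the same in spirit: exhibit a feasible input along the eigendirection and let $\alpha$ grow. The mechanical route differs, however. The paper first passes to the squared reformulation of the cone constraint (Lemma \ref{lemma:fundamental}, which replaces \eqref{eq:constraint_std} by the quadratic inequality $[1\ u^T]\,H(x|\mathbb{D}_N)\,[1\ u^T]^T \le 0$ together with the sign condition $\socc u + \socd \ge 0$), substitutes $\usafe(x)$, and observes that the resulting polynomial in $\alpha$ has negative leading coefficient $\eigvec(x|\mathbb{D}_N)^T\feasmat\eigvec(x|\mathbb{D}_N)$, so both inequalities hold for $\alpha$ large enough. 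You instead stay with the un-squared constraint \eqref{eq:constraint_std}, upper-bound its left-hand side by the triangle inequality, and reduce the problem to comparing affine growth rates in $\alpha$. Your version is marginally more self-contained: it never invokes the auxiliary squared reformulation, and nonnegativity of the right-hand side of \eqref{eq:constraint_std} comes for free rather than as a separate check; the paper's version, on the other hand, reuses machinery (the matrix $H$ and Lemma \ref{lemma:fundamental}) that it needs anyway for Lemma \ref{lemma:necessary} and Theorem \ref{th:nec_and_suf}. Both arguments correctly deduce $\socc\,\eigvec(x|\mathbb{D}_N)\neq 0$ from positive definiteness of $\socAA$, so the sign in \eqref{eq:safe-direction} is well defined, and both yield an explicit admissible $\alpha_{min}>0$.
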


\begin{proof}
    See Appendix \ref{subsec:proof-lemma-sufficient}.
\end{proof}

With this condition, a single scalar value, $\eigval$, being negative guarantees the feasibility of the GP-CBF-SOCP. This can be easily checked online before solving the optimization problem. Furthermore, for a particular state $x \in \mathcal{X}$, the value of $\eigval(x|\mathbb{D}_N)$ can be clearly associated with a notion of \emph{richness} of the dataset $\mathbb{D}_N$ in terms of characterizing safety---if it is negative, then there exists at least one control input direction which
keeps the system safe with high probability. This condition serves as the foundation for the safe online learning methodology that we present in Section \ref{sec:05safelearning}.

Lastly, we state the necessary and sufficient condition for pointwise feasibility of the GP-CBF-SOCP. This condition combines and generalizes Lemmas \ref{lemma:necessary} and \ref{lemma:sufficient}.

\begin{theorem}
\label{th:nec_and_suf}
\revision{(\!\cite[Theorem 2]{castaneda2021pointwise} Necessary \& sufficient condition)}
Given a dataset $\mathbb{D}_N$, for a point $x \in \mathcal{X}$, let $\eigval(x|\mathbb{D}_N)$ be the minimum eigenvalue of the feasibility tradeoff matrix $\feasmat$ defined in \eqref{eq:defineF}. Then, the GP-CBF-SOCP \eqref{eq:gp-cbf-socp} is feasible at $x$ if and only if condition \eqref{eq:necessary} is satisfied and one of the following cases holds:

    \noindent\textbf{1: }$\eigval(x|\mathbb{D}_N) <0$;
    
    \noindent\textbf{2: }$\eigval(x|\mathbb{D}_N)>0$,     
    and \vspace{-1em}
    {\small
    \begin{multline}
    \label{eq:case2-add}
    \hspace{-11pt}
        \socd-\\
        \socc \feasmat^{-1}\big[\beta^2\socA^T\socb -\\ \socc^T\big(\socd\big)\big]\!\geq\!0; \vspace{-1em}
    \end{multline}}
    
   \noindent \textbf{3: }$\eigval(x|\mathbb{D}_N) =0$,     
    and \vspace{-1em}
    {\small
    \begin{multline}
    \label{eq:case3-add}
    \hspace{-11pt}
        \socd - \\
        \socc\socAA^{-1}\socA^T \socb > 0. \vspace{-1em}
    \end{multline}}
\end{theorem}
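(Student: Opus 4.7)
The plan is to reduce feasibility of the GP-CBF-SOCP to the solvability of the single nontrivial SOC constraint \eqref{eq:constraint_std}, since the epigraph constraint \eqref{eq:objective-soc-constraint-standard} is always satisfied by choosing $t$ large enough. Using the shorthand $A \coloneqq \socA$, $b \coloneqq \socb$, $c \coloneqq \socc$, $d \coloneqq \socd$, the constraint reads $\beta\|Au+b\|_2 \le cu+d$, equivalent to (a) $cu+d \ge 0$ together with (b) $f(u) \coloneqq -u^T \feasmat u + 2(dc - \beta^2 b^T A)u + (d^2 - \beta^2\|b\|^2) \ge 0$. Introducing $\tilde u \coloneqq [1, u^T]^T$ and $v \coloneqq [d, c]^T$, and observing that $\gpGramB^{1/2} = [b\ A]$ implies $cu+d = v^T\tilde u$ and $\|Au+b\|^2 = \tilde u^T \gpGramB \tilde u$, I will also use the scalars $\mu \coloneqq c(A^T A)^{-1} c^T$, $\nu \coloneqq c(A^T A)^{-1} A^T b$, $\alpha \coloneqq \beta^2 - \mu$, and the Schur complement $S \coloneqq b^T b - b^T A(A^T A)^{-1}A^T b > 0$.

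For \emph{necessity} of \eqref{eq:necessary}, I would square (a), combine with (b) to get $(v^T\tilde u)^2 \ge \beta^2 \tilde u^T \gpGramB \tilde u$, and then apply the Cauchy--Schwarz inequality in the inner product induced by the positive-definite matrix $\gpGramB$: $(v^T \tilde u)^2 \le (v^T \gpGramB^{-1} v)(\tilde u^T \gpGramB \tilde u)$. Since $\tilde u \ne 0$ forces $\tilde u^T \gpGramB \tilde u > 0$, dividing yields exactly \eqref{eq:necessary}.

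For \emph{sufficiency} I would split on the sign of $\eigval$. Case 1 ($\eigval < 0$) is immediate from Lemma \ref{lemma:sufficient}, and a block-matrix-inverse calculation gives the identity $v^T \gpGramB^{-1} v = (d-\nu)^2/S + \mu$, showing that $\mu > \beta^2$ already implies \eqref{eq:necessary}. In Case 2 ($\eigval > 0$), $\feasmat \succ 0$, so $f$ is strictly concave with unique maximizer $u^* = \feasmat^{-1}(c^T d - \beta^2 A^T b)$. The Sherman--Morrison--Woodbury identity applied to $\feasmat = \beta^2 A^T A - c^T c$ produces the closed forms
\[ f(u^*) \;=\; \frac{\beta^2}{\alpha}\bigl[(d-\nu)^2 - \alpha S\bigr], \qquad cu^* + d \;=\; \frac{\beta^2}{\alpha}(d-\nu), \]
so $f(u^*) \ge 0$ is precisely \eqref{eq:necessary} and $cu^* + d \ge 0$ is precisely \eqref{eq:case2-add}; together they place $u^*$ in the feasible set. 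For the converse in Case 2 I would use that the ellipsoid $\{f \ge 0\}$ is connected and (generically) lies entirely in one of the half-spaces $\{cu+d \ge 0\}$ or $\{cu+d \le 0\}$, so $cu^* + d < 0$ forces infeasibility. In Case 3 ($\eigval = 0$), I would decompose $u = s\bar u + u_\perp$ with $\bar u \propto (A^T A)^{-1} c^T$ spanning $\ker(\feasmat)$ and $u_\perp \perp \bar u$. Using $\feasmat\bar u = 0$ and $c\bar u > 0$, $f(u)$ becomes affine in $s$ with slope proportional to $\beta^2(d-\nu)$, while $cu+d$ grows linearly in $s$; the strict condition \eqref{eq:case3-add} then lets one send $s \to +\infty$ to satisfy (a) and (b) simultaneously, and positive definiteness of $\feasmat$ on the orthogonal complement handles the $u_\perp$ direction.

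The main obstacle is Case 3: the quadratic $f$ has no finite maximizer, so the argument must explicitly split along $\ker(\feasmat)$ and trade off a linear-in-$s$ quadratic margin against a linear-in-$s$ sign margin. A secondary technical step, underlying both Cases 2 and 3, is the Woodbury-based reduction that yields the clean identity $f(u^*) = (\beta^2/\alpha)[(d-\nu)^2 - \alpha S]$; this is the algebraic bridge explaining why \eqref{eq:case3-add} uses $(A^T A)^{-1}$ instead of the singular $\feasmat^{-1}$, namely as the limiting form of the Case 2 formula after clearing the $1/\alpha$ singularity.
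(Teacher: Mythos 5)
Your plan follows essentially the same route as the paper's proof: both reduce feasibility to the pair of conditions $f(u)\ge 0$ and $cu+d\ge 0$ (this is exactly Lemma \ref{lemma:fundamental} in the appendix), dispatch Case 1 via Lemma \ref{lemma:sufficient}, settle Case 2 by evaluating the sign condition at the unique maximizer $u^\ast=\feasmat^{-1}(c^Td-\beta^2A^Tb)$ of the concave quadratic, and settle Case 3 by escaping to infinity along the kernel direction of $\feasmat$. The differences are algebraic rather than structural: you get the necessary condition \eqref{eq:necessary} from Cauchy--Schwarz in the $\gpGramB$-inner product and closed forms for $f(u^\ast)$ and $cu^\ast+d$ from Sherman--Morrison, whereas the paper reaches the same conclusions through Schur complements of the matrix $H(x|\mathbb{D}_N)$; both work, and your scalar identities make the equivalence between $f(u^\ast)\ge 0$ and \eqref{eq:necessary} more transparent. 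Two loose ends to tie. First, the ``(generically)'' hedge in your Case 2 converse is unnecessary and should be removed: since $\gpGramB(x|\mathbb{D}_N)\succ 0$ your Schur complement $S$ is strictly positive, so $Au+b\neq 0$ for every $u$; a zero of $cu+d$ on the convex set $\{f\ge 0\}$ would force $\beta^2\|Au+b\|^2\le 0$, so $cu+d$ has constant sign on that set and the sign at $u^\ast$ decides feasibility --- no genericity needed. Second, you never argue the ``only if'' direction of Case 3. It follows within your framework: $\eigval=0$ forces $\mu=\beta^2$ (the eigenvalues of $\feasmat$ are congruent to those of $\beta^2 I-A^{-T}c^TcA^{-1}$), hence $cw\le\sqrt{\mu}\,\|Aw\|=\beta\|Aw\|<\beta\sqrt{\|Aw\|^2+S}$ for all $w$; writing $u=u_0+w$ and using $\|Au+b\|^2=\|Aw\|^2+S$ gives $cu+d<\beta\|Au+b\|+(d-\nu)$, which is at most $\beta\|Au+b\|$ whenever \eqref{eq:case3-add} fails, so no feasible point exists. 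With these two points filled in, the proposal is a complete proof.
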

\noindent Case 1 matches the sufficient condition of Lemma \ref{lemma:sufficient}, and it results in a hyperbolic feasible set. Cases 2) and 3) correspond to elliptic and parabolic feasible sets, respectively.
\begin{proof} See Appendix \ref{subsec:proof-theorem-necessary-sufficient}. \end{proof}

Theorem \ref{th:nec_and_suf} provides tight conditions that the available data $\mathbb{D}_N$ should satisfy in order to obtain probabilistic safety guarantees for systems with actuation uncertainty. 

\section{Probabilistic Safe Online Learning and Recursive Feasibility}
\label{sec:05safelearning}
\subsection{Proposed Safe Online Learning Strategy}
\label{subsec:main-algorithm}
In this section, we present the safe online learning algorithm that guarantees safety of the true plant \eqref{eq:system} with high probability \revision{through the recursive feasibility of \eqref{eq:gp-cbf-socp}}. Our algorithm uses the nominal dynamics model of \eqref{eq:nominal-model} and the online stream of data collected by the system as its state trajectory evolves with time, constructing a dataset $\mathbb{D}_N$ online. The design goal of our algorithm is to \revision{maintain} the feasibility of the GP-CBF-SOCP. Our main strategy to achieve this is to guarantee that the sufficient condition for pointwise feasibility of Lemma \ref{lemma:sufficient} always holds. By doing so, we ensure that there always exists a backup control direction $\usafe$ \eqref{eq:safe-direction} that can guarantee safety with a high probability.

\begin{remark}
\label{remark:notnecessary}
Note that Lemma \ref{lemma:sufficient} is only a sufficient condition for feasibility of the SOCP \eqref{eq:gp-cbf-socp}, and that the problem could be feasible at $x\in \mathcal{X}$ even when the condition $\eigval(x|\mathbb{D}_N) <0$ of Lemma \ref{lemma:sufficient} does not hold, \revision{as can be seen in Theorem \ref{th:nec_and_suf}.} However, if $\eigval(x|\mathbb{D}_N) <0$ does not hold, it means that there does not exist any control input direction that can serve as a backup safety direction, and the SOCP is feasible at $x$ only if the CBF constraint can be satisfied with $u \to 0$. This situation is not desirable since the system might later on move towards states where the true CBF constraint cannot be satisfied unless a control input is applied, in which case the SOCP would become infeasible.
\end{remark}

\revision{We now discuss the necessity of taking a control input action that may not help in achieving the task (as defined by $u_\text{ref}$) but might significantly help in securing the feasibility of the SOCP. Revisiting \eqref{eq:var_feas}, note that the matrix $\gpGramUB(x|\mathbb{D}_N)$ characterizes how the variance term $\gpsigmaB^2$ grows with respect to different control directions. If all the data points $(x_j,u_j)$ consisting $\mathbb{D}_N$ are control inputs from $u_\text{ref}$, then the uncertainty growth in the safe control direction $L_gB(x)$ can be potentially high, especially if $L_gB(x)$ and $u_\text{ref}$ have distinct directions. In this case, the condition $\eigval(x|\mathbb{D}_N) <0$ of Lemma \ref{lemma:sufficient} may not be satisfied and the SOCP might soon become infeasible.}

\revision{This situation happens if the system under the GP-CBF-SOCP is in regions where the CBF constraint is not active and the solution of \eqref{eq:gp-cbf-socp} is identical to $u_{\text{ref}}$. The problem shows up when the system later approaches the safe set boundary and the CBF constraint becomes active (indicating that a safety control action is needed). When this happens, $\eigval(x|\mathbb{D}_N) <0$ may not hold and there might not exist any control input direction along which the controllability of the CBF is dominant over the growth of the uncertainty. This would then compromise the recursive feasibility of the SOCP.}

The crux of our algorithm is to make sure we never end up in this situation. We accomplish this by applying control inputs (and adding those points to the dataset) \revision{that reduce the value of $\lambda_\dagger$} in a \textit{precautious} event-triggered fashion before the uncertainty grows in \revision{the safe control direction severely}.

Algorithm \ref{algo:safelearning} shows the concrete implementation of this framework. \revision{As a default, whenever the value of $\eigval$ lies under a threshold $-\thres <0$ (a negative constant close to $0$), the GP-CBF-SOCP \eqref{eq:gp-cbf-socp} determines a safety-filtered control. Under this case, the feasibility of the SOCP is guaranteed by Lemma \ref{lemma:sufficient}, so such a safe control input can always be determined.} However, if the value of $\eigval$ reaches $-\thres$, we take a control input along $\usafe$ and add the resulting measurement to the GP dataset, with the goal of reducing the uncertainty along the direction of $\usafe$ and consequently decreasing the value of $\eigval$ to below $-\thres$ for the following time steps. \revision{In addition to the event-triggered updates when $\eigval$ reaches $-\thres$, we also collect time-triggered measurements to update the dataset, with triggering period $\tau$, to constantly maintain low uncertainty.} Thus, Algorithm \ref{algo:safelearning} constructs a time-varying dataset $\mathbb{D}_{N(t)}$ and implicitly defines a closed-loop control law.

\setlength{\textfloatsep}{5pt}
\begin{algorithm}[t]
\small
\label{algo:safelearning}
\DontPrintSemicolon 
Initialize $t=0$, $x(0) = x_0$. Get $N(0)$, $\mathbb{D}_{N(0)}$.

\While{$t < T_\mathrm{max}$} {
    {\revision{// determine control input $u$}}\;
    $x \gets x(t)$\;
	$\eigval \gets \mathrm{getLambdaDagger}\big(x,\mathbb{D}_{N(t)}\big)$\;
    \eIf{$\eigval<-\thres$}
   {
   	 $u \gets u^*(x)$ from the SOCP \eqref{eq:gp-cbf-socp}\;
   	 
   }
   {$u \gets \usafe(x)$ from \eqref{eq:safe-direction}\;
   }
   {\revision{// experiment proceed for sampling time $\Delta t$}}\;
   {($t\gets t + \Delta t$)}\;
   {\revision{// update the dataset $\mathbb{D}_{N(t)}$}}\;
    \If{$(\eigval\geq-\thres)$ \textup{\textbf{or}} $(t \bmod \tau=0)$}
   {
   {\revision{Evaluate $z_B\!=\!\frac{x(t+\Delta t) - x(t)}{\Delta t}$ (Measurement of $\Delta_B(x,u)$)}}\;
   $\mathbb{D}_{N(t)} \gets \mathbb{D}_{N(t)} \cup \{(x,u),z_B\}$\;
   $N(t) \gets N(t)+1$\;}
   
}
\caption{Safe Online Learning}
\end{algorithm}

\begin{remark} \revision{(Computational complexity)} Using Algorithm \ref{algo:safelearning}, the number of data points $N$ grows with time. Since each data point is added individually, rank-one updates to the kernel matrix inverse can be computed in $O(N^2)$. However, if $N$ becomes large enough to compromise real-time computation, a smart-forgetting strategy such as the one of \cite{umlauft2020smart} could be applied. Note that there exists a wide variety of methods in the Sparse GP literature whose objective is to speed up the vanilla GP inference \cite{liu2020gaussian}. These methods can be considered complementary to our approach.
\end{remark}

\subsection{Theoretical Analysis}
\label{subsec:main-theory-part}
In this section, we provide theoretical results about the effectiveness of Algorithm \ref{algo:safelearning} in guaranteeing the safety of the unknown system \eqref{eq:system} with respect to the safe set $\mathcal{X}_{\text{safe}}$. We start by showing that with Algorithm \ref{algo:safelearning}, we can keep $\lambda_\dagger<0$ for the full trajectory under some assumptions.

\begin{assumption}
\label{assumption:initial_lambda}
\revision{We assume that we have an initial dataset $\mathbb{D}_{N(0)}$ and/or a nominal model \eqref{eq:nominal-model} such that at the initial state $x_0 \in \mathcal{X}$ and initial time $t=0$, we have $\eigval(x_0|\mathbb{D}_{N(0)})<0$.}
\end{assumption}

\revision{\noindent Note that Assumption \ref{assumption:initial_lambda} implies that we can find a safety direction $\usafe$ to begin with at the initial state (from Equation \ref{eq:safe-direction}). This assumption can be met using either a nominal model} \revision{\eqref{eq:nominal-model} or an initial small set of data $\mathbb{D}_{N(0)}$ that provides low uncertainty locally at the initial state. Note that this low uncertainty is only required to initiate the controller with a safe control action, and after the initialization of the controller, our algorithm will take care of preventing the infeasibility whenever the prior information from the nominal model and the existing data is deficient.}

\begin{assumption}
\label{assumption:reldegree1}
We assume that the CBF $B$ satisfies the relative degree one condition in $\mathcal{X}$, i.e., $L_gB(x) \neq 0\ \forall x \in \mathcal{X}$. Furthermore, we assume that for any $ x_0 \in \mathcal{X}$, for the trajectory $x(t)$ generated by running Algorithm \ref{algo:safelearning}, with $\mathbb{D}_{N(t)}$ being the dataset at time $t$, we have $\wfeas(x(t)|\mathbb{D}_{N(t)}) \neq 0\ \forall t$.
\end{assumption}

\revision{\noindent Note that the relative-degree-one condition in Assumption \ref{assumption:reldegree1} was already required to guarantee Lipschitz continuity of the solutions of the original CBF-QP \eqref{eq:cbf-qp-all}, as explained in Section \ref{sec:02background}. The second part of the assumption is stating that the mean prediction from the GP model for $L_gB(x)$ must also satisfy this condition.}

\begin{assumption}
\label{assumption:triggering_edagger}
Running Algorithm \ref{algo:safelearning} from any $ x_0\!\in\!\mathcal{X}$, let $\{t_\kappa\}_{\kappa \in \mathbb{N}}$ be the sequence of times at which
$\eigval(x(t)|\mathbb{D}_{N(t)})\geq \!-\thres$. We assume that for every $ \kappa$ we have $\wfeas\big(x(t_\kappa)|\mathbb{D}_{N(t_\kappa)+1}\big) \eigvec\big(x(t_\kappa)|\mathbb{D}_{N(t_\kappa)}\big) \neq 0$, where $\mathbb{D}_{N(t_\kappa)+1}$ is the resulting dataset after the event-triggered update. 
\end{assumption}

Assumption \ref{assumption:triggering_edagger} makes sure that during an event-triggered update of the dataset, the new mean controllability direction of the CBF \revision{is not completely orthogonal to the previous safe direction. A violation of either Assumption \ref{assumption:reldegree1} or \ref{assumption:triggering_edagger} would require the direction of $\wfeas(x)$ to drastically change in one timestep, which is unlikely to occur under Lemma \ref{lemma:DeltaUCB}.}

\revision{
With the above assumptions, we can now show that the eigenvalue $\lambda_\dagger$ will be maintained negative under Algorithm \ref{algo:safelearning} for the entire interval of the existence of the system solution:
}

\begin{lemma}
\label{lemma:algo_lambda}
Under Assumptions \ref{assumption:initial_lambda}, \ref{assumption:reldegree1} and \ref{assumption:triggering_edagger}, for all $ x_0 \in \mathcal{X}$, let $x(t)$ be the trajectory generated by running Algorithm \ref{algo:safelearning} for system \eqref{eq:system}. Let $\mathbb{D}_{N(t)}$ be the time-varying dataset generated during the execution of Algorithm \ref{algo:safelearning}. If the trajectory $x(t)$ exists and is unique during some time interval $t \in [0, \tau_{max})$, then it holds that $\eigval\big(x(t)|\mathbb{D}_{N(t)}\big)<0$ for all $t \in [0, \tau_{max})$.
\end{lemma}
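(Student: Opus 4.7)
The plan is to proceed by contradiction, combining continuity of $\eigval$ in $x$ with the observation that the triggering rule of Algorithm~\ref{algo:safelearning} forces a dataset update whenever $\eigval$ approaches $-\thres$. For any fixed dataset $\mathbb{D}$, the map $x \mapsto \eigval(x|\mathbb{D})$ is continuous, since it is the smallest eigenvalue of $\mathcal{F}(x|\mathbb{D}) = \beta^2 \gpGramUB(x|\mathbb{D}) - \wfeas(x|\mathbb{D})^T \wfeas(x|\mathbb{D})$, whose entries depend continuously on $x$ through the GP posterior mean and covariance. Because $\mathbb{D}_{N(t)}$ is piecewise constant in $t$ by construction and $x(t)$ is continuous on $[0,\tau_{\max})$ by the existence-and-uniqueness hypothesis, the map $t \mapsto \eigval(x(t)|\mathbb{D}_{N(t)})$ is continuous on each open interval between consecutive data-collection events.

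I would then suppose for contradiction that $\{t \in [0,\tau_{\max}) : \eigval(x(t)|\mathbb{D}_{N(t)}) \geq 0\}$ is nonempty and let $t^*$ be its infimum; by Assumption~\ref{assumption:initial_lambda}, $t^* > 0$. If $t^*$ lay in the interior of an open interval on which the dataset is constant, continuity of $\eigval$ together with $\eigval < 0$ on $[0,t^*)$ would force $\eigval$ to cross $-\thres$ strictly before $t^*$; this would insert a new event-triggered data-collection time via Algorithm~\ref{algo:safelearning}, contradicting the constancy. Hence $t^*$ must coincide with an event-triggered update time $t_\kappa$, and by minimality the pre-update value satisfies $\eigval(x(t_\kappa)|\mathbb{D}_{N(t_\kappa)}) \in [-\thres, 0)$. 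The contradiction then reduces to showing that the post-update value $\eigval(x(t_\kappa)|\mathbb{D}_{N(t_\kappa)+1})$ is strictly negative.

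For this strict decrease I would use the Rayleigh-quotient bound $\eigval(x|\mathbb{D}) \leq v^T \mathcal{F}(x|\mathbb{D}) v$ for any unit $v$, applied with $v := \eigvec(x(t_\kappa)|\mathbb{D}_{N(t_\kappa)})$ and evaluated on the post-update dataset. Since the triggered input is $\usafe = \alpha\,\sgn(\wfeas(x(t_\kappa)|\mathbb{D}_{N(t_\kappa)})\, v)\, v$ and the noisy measurement is collected exactly at $(x(t_\kappa), \usafe)$, the GP rank-one downdate at that query point yields $[1,\usafe^T]\gpGramB(x(t_\kappa)|\mathbb{D}_{N(t_\kappa)+1})[1,\usafe^T]^T \leq \sigma_n^2$. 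Expanding the left-hand side through the blocks of $\gpGramB$ isolates the dominant $\alpha^2\, v^T \gpGramUB(x(t_\kappa)|\mathbb{D}_{N(t_\kappa)+1})\, v$ term and forces it to be $O(1/\alpha)$ for large $\alpha$, while Assumption~\ref{assumption:triggering_edagger} ensures $(\wfeas(x(t_\kappa)|\mathbb{D}_{N(t_\kappa)+1})\, v)^2 > 0$; for $\alpha$ large enough this makes $v^T \mathcal{F}(x(t_\kappa)|\mathbb{D}_{N(t_\kappa)+1}) v < 0$. The main obstacle will be making the $\alpha$-choice rigorous within the setup of Lemma~\ref{lemma:sufficient}, which only prescribes $\alpha > \alpha_{\min}$ without a quantitative handle on how large $\alpha$ must be to dominate the residual variance; carefully tracking the cross block of $\gpGramB$ in the variance expansion and upgrading the qualitative non-vanishing of Assumption~\ref{assumption:triggering_edagger} to a quantitative lower bound will likely be the most delicate parts.
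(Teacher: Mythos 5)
Your proposal is correct and follows essentially the same route as the paper's proof: continuity of $\eigval(x|\mathbb{D}_N)$ in $x$ for a fixed dataset, the observation that the event trigger at $-\thres$ must fire before $\eigval$ can reach zero, and a Rayleigh-quotient bound along the pre-update eigenvector $\eigvec$ showing that $\eigvec^T\gpGramUB(\cdot|\mathbb{D}_{N(t)+1})\eigvec \to 0$ for large $\alpha$ while Assumptions \ref{assumption:reldegree1} and \ref{assumption:triggering_edagger} keep the rank-one term nonzero, so the post-update $\eigval$ is strictly negative. The delicacy you flag at the end---that the post-update dataset, and hence both $\gpGramUB(\cdot|\mathbb{D}_{N(t)+1})$ and the quantity in Assumption \ref{assumption:triggering_edagger}, themselves depend on the chosen $\alpha$---is a genuine subtlety, but it is equally present (and left implicit) in the paper's own argument, so it does not count against your proposal.
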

\begin{proof}
See Appendix \ref{subsec:proof-lemma-lambda}.
\end{proof}

Previously, Lemma \ref{lemma:sufficient} demonstrated that $\eigval(x | \mathbb{D}_N) < 0$ is a sufficient condition for pointwise feasibility of the GP-CBF-SOCP at a point $x \in \mathcal{X}$ using a dataset $\mathbb{D}_N$. Now, Lemma \ref{lemma:algo_lambda} ensures that $\eigval(x(t) | \mathbb{D}_{N(t)}) < 0$ always holds along each trajectory $x(t)$ and dataset $\mathbb{D}_{N(t)}$ obtained by running the safe learning algorithm. Therefore, we can establish recursive feasibility of the GP-CBF-SOCP when using the proposed safe learning strategy, as formalized in the following statement.

\begin{theorem}[Recursive feasibility of the GP-CBF-SOCP]
\label{thm:recursive_feas}
Under Assumptions \ref{assumption:initial_lambda}, \ref{assumption:reldegree1} and \ref{assumption:triggering_edagger}, for all $ x_0 \in \mathcal{X}$ let $x(t)$ be the trajectory generated by running Algorithm \ref{algo:safelearning} for system \eqref{eq:system}. Let $\mathbb{D}_{N(t)}$ be the time-varying dataset generated during the execution of Algorithm \ref{algo:safelearning}. If the trajectory $x(t)$ exists and is unique during some time interval $t \in [0, \tau_{max})$, then the probabilistic safety constraint \eqref{eq:socp-cbf-constraint} is feasible at all times $t \in [0, \tau_{max})$ for the trajectory $x(t)$ and dataset $\mathbb{D}_{N(t)}$.
\end{theorem}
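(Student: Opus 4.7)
The plan is to combine the two preceding technical results, Lemma \ref{lemma:sufficient} and Lemma \ref{lemma:algo_lambda}, in essentially one step. The theorem is really a corollary that packages the feasibility certificate (sufficient condition on $\eigval$) together with the invariance-style claim that Algorithm \ref{algo:safelearning} maintains this certificate along the closed-loop trajectory.

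First, I would fix an arbitrary $x_0 \in \mathcal{X}$ and let $x(t)$, $\mathbb{D}_{N(t)}$ be, respectively, the trajectory and the time-varying dataset generated by running Algorithm \ref{algo:safelearning} from $x_0$ on the time interval $[0, \tau_{max})$ on which, by hypothesis, $x(t)$ exists and is unique. I would then invoke Lemma \ref{lemma:algo_lambda} (whose hypotheses are exactly Assumptions \ref{assumption:initial_lambda}, \ref{assumption:reldegree1}, and \ref{assumption:triggering_edagger}) to assert that along this trajectory the minimum eigenvalue of the feasibility tradeoff matrix satisfies
\begin{equation*}
    \eigval\!\big(x(t)\,\big|\,\mathbb{D}_{N(t)}\big) < 0 \qquad \text{for all } t \in [0, \tau_{max}).
\end{equation*}

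Next, for each such $t$, I would apply Lemma \ref{lemma:sufficient} pointwise, using the state $x = x(t)$ and the dataset $\mathbb{D}_N = \mathbb{D}_{N(t)}$. Since $\eigval\big(x(t)\,|\,\mathbb{D}_{N(t)}\big) < 0$, Lemma \ref{lemma:sufficient} guarantees that the GP-CBF-SOCP \eqref{eq:gp-cbf-socp} is feasible at $x(t)$ with dataset $\mathbb{D}_{N(t)}$, and in fact exhibits the explicit feasible point $\usafe(x(t))$ from \eqref{eq:safe-direction}. Since the probabilistic CBF chance constraint \eqref{eq:socp-cbf-constraint} is the only constraint of the SOCP, feasibility of the program coincides with feasibility of this constraint, which yields the claim for that $t$. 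Taking the union over all $t \in [0, \tau_{max})$ completes the argument.

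There is no substantive obstacle in this proof: the entire content has already been absorbed into Lemmas \ref{lemma:sufficient} and \ref{lemma:algo_lambda}. The only thing to be mildly careful about is the logical scope of the hypotheses and what is being quantified: the assumed existence and uniqueness of $x(t)$ on $[0, \tau_{max})$ is what makes the phrase ``for all $t \in [0, \tau_{max})$'' meaningful, and it is exactly this interval over which Lemma \ref{lemma:algo_lambda} guarantees $\eigval < 0$, so the conclusion transfers directly. I would also remark explicitly that, because $\usafe(x(t))$ is a feasible point provided by Lemma \ref{lemma:sufficient}, the theorem in fact certifies a concrete backup control direction at every time, not merely abstract feasibility of the program; this observation will be useful when later invoking the recursive feasibility result to deduce the forward-invariance-with-high-probability statement of the paper's main theorem.
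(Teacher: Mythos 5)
Your proposal is correct and follows exactly the paper's own argument: the paper likewise proves this theorem as a direct consequence of Lemmas \ref{lemma:sufficient} and \ref{lemma:algo_lambda}, with your version merely spelling out the pointwise application in more detail. No gaps.
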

\begin{proof}
This is a direct consequence of Lemmas \ref{lemma:sufficient} and \ref{lemma:algo_lambda}.
\end{proof}

As a next step, we wish to remove the existence and uniqueness assumption of Theorem \ref{thm:recursive_feas}. We do this by proving that the trajectory $x(t)$ generated by running Algorithm \ref{algo:safelearning} in fact does locally exist and is unique.
Note that the policy that Algorithm \ref{algo:safelearning} defines is a switched control law, since new data points are added at discrete time instances. We start by showing that for a fixed dataset $\mathbb{D}_N$, the solution of the GP-CBF-SOCP is locally Lipschitz continuous under some assumptions:

\begin{assumption}
\label{assumption:contdiff}
We assume that the Lie derivatives of $B$ computed using the nominal model $L_{\tilde{f}}B(x)$, $L_{\tilde{g}}B(x)$, as well as the function $\gamma(B(x))$, the reference policy $u_\text{ref}(x)$ and the GP prediction functions for any fixed dataset $\gpmuB(x,u),\ \gpsigmaB(x,u)$  are twice continuously differentiable in $x$, $\forall x \in \mathcal{X}.$
\end{assumption}
Note that the GP prediction functions are twice continuously differentiable in $x$ when the components $k_1,\ldots, k_{m+1}$ of the ADP compound kernel \eqref{eq:adpkernel} use the squared exponential kernel or other common kernels that are twice continuously differentiable.

\begin{lemma}[Lipschitz continuity of solutions of the GP-CBF-SOCP]
\label{lemma:lipsch_SOCP}
Under Assumption \ref{assumption:contdiff}, for a point $x \in \mathcal{X}$ and \revision{a fixed dataset $\mathbb{D}_N$} such that $\eigval(x|\mathbb{D}_N)<0$ holds, the solution of the GP-CBF-SOCP \eqref{eq:gp-cbf-socp} is locally Lipschitz continuous around $x$.
\end{lemma}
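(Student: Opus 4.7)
The plan is to exploit strong convexity of the quadratic objective together with Slater's condition and local Lipschitz continuity of all the problem data to obtain Lipschitz stability of the unique minimizer. Fix $x_0 \in \mathcal{X}$ with $\eigval(x_0|\mathbb{D}_N)<0$. By Assumption \ref{assumption:contdiff}, the mappings $x \mapsto L_{\tilde{f}}B(x)$, $x \mapsto L_{\tilde{g}}B(x)$, $x \mapsto \gamma(B(x))$, $x \mapsto u_{\text{ref}}(x)$, and the GP posterior mean and covariance functions $x\mapsto m_B(x|\mathbb{D}_N)$, $x \mapsto \gpGramB(x|\mathbb{D}_N)$ are $C^1$ and hence locally Lipschitz. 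Since $\gpGramB(x|\mathbb{D}_N)$ is strictly positive definite for every $x$, its principal square root varies smoothly in $x$, so $\socAA$, $\socbb$, $\socAnoD$, $\socbnoD$, $\socc$ and the slack $\socd$ depend locally Lipschitz on $x$.

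Next I would establish that the sufficient feasibility condition persists locally. The feasibility tradeoff matrix $\feasmat$ depends continuously on $x$, hence so does its minimum eigenvalue $\eigval$; therefore there is an open neighborhood $\mathcal{N}$ of $x_0$ on which $\eigval(\cdot|\mathbb{D}_N) < -\eta$ for some $\eta>0$. By Lemma \ref{lemma:sufficient}, $\usafe(x)$ as in \eqref{eq:safe-direction} is feasible throughout $\mathcal{N}$, and by taking $\alpha$ strictly larger than the threshold $\alpha_{\min}$ one obtains strict feasibility of the second-order cone constraint (Slater's condition) uniformly on a possibly smaller neighborhood. Combined with strong convexity of $\|u-u_{\text{ref}}(x)\|_2^2$, this guarantees that for each $x\in\mathcal{N}$ the GP-CBF-SOCP admits a unique minimizer $u^*(x)$.

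To conclude Lipschitz continuity of $u^*(\cdot)$, I would invoke standard sensitivity analysis for parametric convex programs (for instance, Theorem 4.32 of Bonnans–Shapiro, \textit{Perturbation Analysis of Optimization Problems}, or the strong regularity theorem of Robinson): under Slater's condition, uniqueness of the primal–dual pair, and local Lipschitzness of the problem data, the optimal solution map is locally Lipschitz in the parameter. Equivalently, one may argue directly: writing the SOCP as a projection of $u_{\text{ref}}(x)$ onto the $x$-dependent closed convex feasible set $\mathcal{F}_x := \{u : \beta\|\socA u + \socb\|_2 \le \socc u + (\socd)\}$, the Hausdorff distance between $\mathcal{F}_x$ and $\mathcal{F}_{x'}$ is controlled by $\|x-x'\|$ uniformly on $\mathcal{N}$ using the Hoffman-type bound enabled by Slater's condition, and nonexpansivity of the metric projection then yields the Lipschitz estimate for $u^*(x) = \mathrm{Proj}_{\mathcal{F}_x}(u_{\text{ref}}(x))$.

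The main obstacle is the nonsmoothness of the second-order cone constraint at points where $\socA u + \socb = 0$, which prevents a naive application of the implicit function theorem to the KKT system. This is precisely why a direct appeal to sensitivity results for conic programs (or the projection/Hoffman-bound argument) is preferable: they tolerate this non-smoothness, requiring only Slater's condition, strong convexity, and Lipschitzness of the data—all of which are already in hand from the previous two paragraphs.
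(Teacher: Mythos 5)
Your scaffolding---local Lipschitzness of the problem data under Assumption \ref{assumption:contdiff}, persistence of $\eigval(\cdot|\mathbb{D}_N)<0$ on a neighborhood by continuity of the minimum eigenvalue, Slater's condition via $\usafe$, and uniqueness of the minimizer from strong convexity of the objective---is exactly the scaffolding of the paper's proof. The paper then closes the argument with a sensitivity theorem for \emph{smooth} parametric programs (Still, Thm.~6.4): MFCQ follows from Slater's condition, and the required strong second-order condition holds because the Lagrangian inherits strong convexity from the objective. Your final step diverges here, and that is where the gap lies. The ``main obstacle'' you identify is illusory: since $\gpGramB(x|\mathbb{D}_N)\succ 0$ and the augmented vector $[1,\,u^T]^T$ never vanishes, one has $\norm{\socA\, u + \socb}_2^2 = \gpsigmaB^2(x,u|\mathbb{D}_N) \geq \lambda_{\min}\big(\gpGramB(x|\mathbb{D}_N)\big) > 0$ for all $u$, so the cone constraint is twice continuously differentiable in $(x,u)$ everywhere on the relevant domain and the smooth parametric-optimization machinery applies directly.

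The workaround you prefer precisely \emph{because} it tolerates that (nonexistent) nonsmoothness does not deliver the stated conclusion. Nonexpansivity of the metric projection controls the dependence on the projected point $u_{\text{ref}}(x)$, not on the set being projected onto; for projections onto \emph{varying} closed convex sets, a Lipschitz bound on the Hausdorff distance $d_H(\mathcal{F}_x,\mathcal{F}_{x'})$ yields in general only H\"older-$\tfrac{1}{2}$ continuity of $x\mapsto \mathrm{Proj}_{\mathcal{F}_x}(z)$, via the standard estimate $\norm{P_C(z)-P_D(z)}^2 \le d_H(C,D)\big(d(z,C)+d(z,D)+d_H(C,D)\big)$, which is sharp for general convex sets. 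Similarly, your paraphrase of Bonnans--Shapiro/Robinson (``Slater $+$ uniqueness $+$ Lipschitz data $\Rightarrow$ Lipschitz solution map'') is not a correct theorem: Lipschitz stability of minimizers additionally requires smoothness of the constraints in the decision variable together with a second-order condition holding for all multipliers---both of which are available here (the constraint is $C^2$ as noted above, and the Lagrangian is strongly convex), but neither of which you verify, and the second of which is lost if one insists on working in the nonsmooth convex setting where H\"older-$\tfrac{1}{2}$ is the best one can guarantee. In short, the route you dismiss is the one that proves the lemma, and the route you adopt falls short of the Lipschitz conclusion.
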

\begin{proof}
See Appendix \ref{subsec:proof-lemma-lipschitz}.
\end{proof}

\begin{remark}
To the best of our knowledge, Lemma \ref{lemma:lipsch_SOCP} is the first result concerning Lipschitz continuity of SOCP-based controllers using CBFs or, equivalently, Control Lyapunov Functions (CLFs) for a general control input dimension. Very recently, several SOCP-based frameworks have been developed for robust data-driven safety-critical control using CBFs and CLFs \cite{dhiman2021control, taylor2020towards, dean2021guaranteeing, buch2021robust, greeff2021learning}, and verifying the local Lipschitz continuity of the SOCP solution serves to guarantee local existence and uniqueness of trajectories of the closed-loop dynamics.
\end{remark}

Even though, from Lemma \ref{lemma:lipsch_SOCP}, for a fixed dataset the solution of the GP-CBF-SOCP is Lipschitz continuous, the control law defined by Algorithm \ref{algo:safelearning} can potentially be discontinuous due to the dataset updates. This fact makes the closed-loop system potentially non-Lipschitz when using Algorithm \ref{algo:safelearning}.

Using Lemmas \ref{lemma:algo_lambda} and \ref{lemma:lipsch_SOCP}, Theorem \ref{thm:existence_uniqueness} below establishes local existence and uniqueness of the solution of the closed-loop system under the switched control law defined by Algorithm \ref{algo:safelearning} (even when the closed-loop system is non-Lipschitz).

\begin{theorem}[Local existence and uniqueness of executions of the safe learning algorithm]
\label{thm:existence_uniqueness}
Under Assumptions \ref{assumption:initial_lambda}, \ref{assumption:reldegree1}, \ref{assumption:triggering_edagger} and \ref{assumption:contdiff},
there exists a $\tau_{max}>0$ such that for any $x_0 \in \mathcal{X}$ a unique solution $x(t)$ of \eqref{eq:system} under the control law defined by Algorithm \ref{algo:safelearning} exists for all $ t \in [0,\tau_{max})$.
\end{theorem}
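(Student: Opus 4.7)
The plan is to reduce this to a classical Picard–Lindel\"of argument on a carefully chosen short initial time window, exploiting the fact that the switched nature of the control law only manifests through isolated dataset-update events, and showing that no such event occurs in the open interval $(0,\tau_{max})$. Thus on this interval the closed-loop right-hand side is a single autonomous, locally Lipschitz vector field in $x$, and standard ODE theory applies.

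First I would handle the initial instant. Assumption \ref{assumption:initial_lambda} gives $\eigval(x_0|\mathbb{D}_{N(0)}) < 0$. If $\eigval(x_0|\mathbb{D}_{N(0)}) < -\thres$ already, set $\mathbb{D}_0 \coloneqq \mathbb{D}_{N(0)}$ and continue; otherwise Algorithm \ref{algo:safelearning} triggers an instantaneous event-driven measurement along $\eigvec(x_0|\mathbb{D}_{N(0)})$, producing a new dataset $\mathbb{D}_0 \coloneqq \mathbb{D}_{N(0)+1}$. By Assumption \ref{assumption:triggering_edagger}, this measurement is informative in the safe direction, so (as per the design of the event trigger) the strict inequality $\eigval(x_0|\mathbb{D}_0) < -\thres$ holds at $t = 0^+$. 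In either case we may work with a fixed dataset $\mathbb{D}_0$ at $t = 0^+$ satisfying $\eigval(x_0|\mathbb{D}_0) < -\thres$.

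Next I would invoke Assumption \ref{assumption:contdiff} to note that $x \mapsto \feasmat$ defined in \eqref{eq:defineF} is continuous in $x$ (in fact twice continuously differentiable), since it is built from $L_{\tilde{g}}B$, $m_B$, and $\gpGramB$. Eigenvalues depend continuously on matrix entries, so $x \mapsto \eigval(x|\mathbb{D}_0)$ is continuous. Therefore there is a neighborhood $U$ of $x_0$ on which $\eigval(\cdot|\mathbb{D}_0) < -\thres < 0$. Throughout $U$, Algorithm \ref{algo:safelearning} selects $u^*(\cdot|\mathbb{D}_0)$ from the GP-CBF-SOCP with the frozen dataset $\mathbb{D}_0$, and Lemma \ref{lemma:lipsch_SOCP} applies uniformly in a (possibly smaller) neighborhood of $x_0$, yielding local Lipschitz continuity of $x \mapsto u^*(x|\mathbb{D}_0)$ there.

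Combining this with the local Lipschitz continuity of $f$ and $g$ on $\mathcal{X}$ gives local Lipschitz continuity of the closed-loop field $F(x) \coloneqq f(x) + g(x)\,u^*(x|\mathbb{D}_0)$ near $x_0$, so Picard–Lindel\"of produces $\tau' > 0$ and a unique $C^0$ solution $x(t)$ of $\dot x = F(x)$ with $x(0) = x_0$ on $[0,\tau')$. By continuity of $t \mapsto \eigval(x(t)|\mathbb{D}_0)$, there exists $\tau'' > 0$ with $\eigval(x(t)|\mathbb{D}_0) < -\thres$ on $[0,\tau'')$, so no event-triggered update fires. Setting $\tau_{max} \coloneqq \min\{\tau',\,\tau'',\,\tau\}$ also rules out the periodic time-triggered update in the open interval, so the dataset indeed remains $\mathbb{D}_0$ on $(0,\tau_{max})$, and the unique Picard–Lindel\"of solution is the unique solution of the full switched closed-loop system. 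The principal obstacle is precisely the possibility of dataset updates accumulating at $t = 0$ (Zeno-like behavior); the plan handles this by appealing to Assumption \ref{assumption:triggering_edagger} together with the precautious threshold $\thres > 0$, which ensures that at most one update occurs at the initial instant before the strict inequality $\eigval < -\thres$ is restored.
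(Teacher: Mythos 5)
Your route is genuinely different from the paper's: the paper disposes of this theorem in one line by viewing the closed loop as a switched/hybrid system whose continuous dynamics are locally Lipschitz (Lemma \ref{lemma:lipsch_SOCP} for the SOCP solution, together with the explicit form of $\usafe$ in \eqref{eq:safe-direction}) and invoking an existence--uniqueness theorem for executions of hybrid automata (Lygeros et al., Thm.~III.1). You instead construct, by hand, an initial time window on which no dataset update fires, freeze the dataset there, and run Picard--Lindel\"of. That is a more self-contained and elementary strategy, but it hinges on one step that the paper's own machinery does not supply.

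The gap is your claim that a single event-triggered update at $t=0$ restores the strict inequality $\eigval(x_0|\mathbb{D}_0)<-\thres$. Assumption \ref{assumption:triggering_edagger} and the argument in the proof of Lemma \ref{lemma:algo_lambda} only guarantee $\eigval(x_0|\mathbb{D}_{N(0)+1})<0$ after the update: for large $\alpha$ the post-update quadratic form along $\eigvec(x_0|\mathbb{D}_{N(0)})$ tends to $-\,|\wfeas(x_0|\mathbb{D}_{N(0)+1})\,\eigvec(x_0|\mathbb{D}_{N(0)})|^2$, which is strictly negative but need not lie below $-\thres$. (The same issue arises with the mandatory time-triggered measurement at $t=0$, since $0 \bmod \tau = 0$.) If the post-update value lands in $[-\thres,0)$, Algorithm \ref{algo:safelearning} continues to select $\usafe$ rather than the SOCP solution and continues to add data points, so (i) your frozen-dataset reduction does not describe the control applied on the initial window, (ii) Lemma \ref{lemma:lipsch_SOCP} is no longer the relevant regularity result there --- one would instead need local Lipschitz continuity of $\usafe$, i.e.\ of the minimal eigenvector of $\feasmat$ and of the sign factor, which you never establish --- and (iii) the Zeno accumulation of updates at $t=0$ that you correctly identify as the principal obstacle is not actually excluded. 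To close this you would need either an extra hypothesis that $\thres$ is chosen below the post-update controllability margin implied by Assumption \ref{assumption:triggering_edagger}, or to follow the paper's route and treat $\usafe$ as a second locally Lipschitz mode of a hybrid system with finitely many discrete transitions on compact time intervals.
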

\begin{proof}
See Appendix \ref{subsec:proof-theorem-existence}.
\end{proof}

Previously, Theorem \ref{thm:recursive_feas} gave conditions under which the probabilistic constraint \eqref{eq:socp-cbf-constraint} is recursively feasible when using the control law defined by Algorithm \ref{algo:safelearning}. This means that, with high probability, the true CBF constraint \eqref{eq:cbf-constraint} can be satisfied at every timestep, as follows from Lemma \ref{lemma:DeltaUCB}. This fact can now be combined with the local existence and uniqueness result of Theorem \ref{thm:existence_uniqueness} to establish forward-invariance of the safe-set $\mathcal{X}_{\text{safe}}$ with high probability, as was originally formulated in Problem \ref{prob:problem-statement}.

\begin{theorem}[Main result: forward invariance with high probability]
\label{thm:main_theorem}
Under Assumptions \ref{assumption:valid-cbf}, \ref{assumption:initial_lambda}, \ref{assumption:reldegree1}, \ref{assumption:triggering_edagger} and \ref{assumption:contdiff}, the control law defined by Algorithm \ref{algo:safelearning} applied to the true plant \eqref{eq:system} renders the set $\mathcal{X}_{\text{safe}} = \{ x \in \mathcal{X}: B(x) \geq 0 \}$ forward invariant with a probability of at least $1-\delta$.
\end{theorem}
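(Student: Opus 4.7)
The plan is to combine the high-probability GP bound of Lemma \ref{lemma:DeltaUCB} with the recursive feasibility result of Theorem \ref{thm:recursive_feas} and the existence/uniqueness result of Theorem \ref{thm:existence_uniqueness}, and to close the argument with a standard comparison-lemma forward-invariance calculation for the CBF.

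First, I would fix the random event $\mathcal{E}$ from Lemma \ref{lemma:DeltaUCB}, which states that $|\gpmuB(x_*,y_* | \mathbb{D}_N) - \Delta_B(x_*,u_*)| \le \beta \gpsigmaB(x_*,y_*|\mathbb{D}_N)$ holds simultaneously for every $N\ge 1$, every $x_*\in \mathcal{X}$, and every $y_*=[1, u_*^T]^T \in \augU$, with probability at least $1-\delta$. Crucially, since the bound is uniform in $N$ and $\mathbb{D}_N$, it remains valid along the full trajectory generated by Algorithm \ref{algo:safelearning} regardless of the random sequence of datasets $\mathbb{D}_{N(t)}$ actually accumulated during execution. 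On $\mathcal{E}$, any control input $u$ satisfying the probabilistic CBF constraint \eqref{eq:socp-cbf-constraint} automatically satisfies the true CBF constraint \eqref{eq:cbf-constraint}, because inequality \eqref{eq:B_UCB} lower-bounds $\dot{B}(x,u)$ by exactly the quantity appearing on the left-hand side of \eqref{eq:socp-cbf-constraint}.

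Working on $\mathcal{E}$, I would then invoke Theorem \ref{thm:existence_uniqueness} to obtain a unique trajectory $x(t)$ of the closed-loop system defined by Algorithm \ref{algo:safelearning} on some maximal interval $[0,\tau_{max})$ for any $x_0 \in \mathcal{X}_{\text{safe}}$. By Theorem \ref{thm:recursive_feas}, the applied control $u(t)$---whether $u^*(x)$ from the SOCP \eqref{eq:gp-cbf-socp} or the backup input $\usafe(x)$ taken when $\eigval \geq -\thres$---satisfies \eqref{eq:socp-cbf-constraint} at every $t \in [0,\tau_{max})$; note that feasibility of $\usafe$ is precisely what Lemma \ref{lemma:sufficient} furnishes. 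Combining this with the preceding paragraph yields
\begin{equation*}
\dot{B}(x(t),u(t)) + \gamma(B(x(t))) \ge 0
\end{equation*}
for almost every $t \in [0,\tau_{max))}$, since $B(x(t))$ is absolutely continuous along the (continuous) state trajectory even though $u(t)$ may jump at the event- or time-triggered dataset updates.

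Finally, because $B(x_0)\ge 0$ and $\gamma$ is an extended class $\mathcal{K}_\infty$ function, a standard comparison-lemma argument applied to the scalar absolutely continuous map $t\mapsto B(x(t))$ yields $B(x(t)) \ge 0$ on $[0,\tau_{max})$, so $x(t) \in \mathcal{X}_{\text{safe}}$ throughout. Hence $\mathcal{X}_{\text{safe}}$ is forward invariant on the event $\mathcal{E}$, and since $\mathbb{P}(\mathcal{E}) \ge 1-\delta$ the high-probability invariance statement of Problem \ref{prob:problem-statement} follows. The main technical subtlety lies in handling the switched, potentially discontinuous control law: the direct invocation of Lemma \ref{lemma:CBF-Invariance} requires Lipschitz continuity of $u(x)$, which fails across dataset updates, so the argument must be rephrased in terms of the absolute continuity of $B\circ x$ on each inter-update interval together with the comparison inequality---all of which is admissible because the existence and uniqueness of $x(t)$ are already secured by Theorem \ref{thm:existence_uniqueness} without Lipschitzness of the closed-loop vector field.
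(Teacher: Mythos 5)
Your proposal is correct and follows essentially the same route as the paper's proof: fix the high-probability event from Lemma \ref{lemma:DeltaUCB}, use Theorem \ref{thm:existence_uniqueness} for existence and uniqueness of the closed-loop trajectory, use Theorem \ref{thm:recursive_feas} together with Lemma \ref{lemma:sufficient} to show that the applied input (SOCP solution or $\usafe$) always satisfies \eqref{eq:socp-cbf-constraint}, and conclude invariance of $\mathcal{X}_{\text{safe}}$ on that event. Your final step is in fact slightly more careful than the paper's, which invokes Assumption \ref{assumption:valid-cbf} and the bound \eqref{eq:prob-cbf-true} directly, whereas you make explicit the comparison-lemma argument on the absolutely continuous map $t\mapsto B(x(t))$ needed because the switched control law is not Lipschitz across dataset updates.
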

\begin{proof}
Let the control law defined by Algorithm \ref{algo:safelearning} be denoted as $\ualgo(x)$. For all $ x_0 \in \mathcal{X}$, the solution $x(t)$ of \eqref{eq:system} under $\ualgo(x)$ satisfies $\eigval\big(x(t)|\mathbb{D}_{N(t)}\big)<0$, $\forall t \in [0,\tau_{max})$ with $\tau_{max}>0$ from Theorem \ref{thm:existence_uniqueness} and Lemma \ref{lemma:algo_lambda}. Here, $\mathbb{D}_{N(t)}$ is the time-varying dataset generated by Algorithm \ref{algo:safelearning}. Moreover, from Lemma \ref{lemma:sufficient} and Theorem \ref{thm:recursive_feas}, this means that the GP-CBF-SOCP is feasible $\forall t \in [0,\tau_{max})$. Furthermore, note that $\ualgo$ is the solution of \eqref{eq:gp-cbf-socp}, except at times when $\eigval\big(x(t)|\mathbb{D}_{N(t)}\big) \geq -\thres$ in which case it takes the value of $\usafe (x(t))$. However, since even at those times $\eigval\big(x(t)|\mathbb{D}_{N(t)}\big)<0$, $\usafe(x(t))$ is also a feasible solution of \eqref{eq:gp-cbf-socp}. Therefore, under $\ualgo$, the constraint \eqref{eq:socp-cbf-constraint} is satisfied for all $ t \in [0,\tau_{max})$. This fact, together with the probabilistic bound on the true plant CBF derivative $\dot{B}$ \eqref{eq:B_UCB} that arises from Lemma \ref{lemma:DeltaUCB}, leads to:
\vspace{-5pt}
\begin{multline}
\label{eq:prob-cbf-true}
\mathbb{P}\big\{\ \dot{B}\big(x(t),\ualgo(x(t))\big) + \gamma\big(B(x(t))\big) \geq 0,\\ \forall x_0 \in \mathcal{X},\ \forall t\in[0,\tau_{max})\ \big\} \geq 1-\delta.
\vspace{-9pt}
\end{multline}
Noting that the trajectory $x(t)$ is a continuous function of time that exists and is unique for all $ t\in[0,\tau_{max})$
(from Theorem \ref{thm:existence_uniqueness}), we can now use Assumption \ref{assumption:valid-cbf} and the bound of \eqref{eq:prob-cbf-true} to obtain
\vspace{-10pt}
\begin{multline}
\mathbb{P}\{\ \forall x_0 \in \mathcal{X}_{\text{safe}},\ x(0) = x_0 \implies \\ x(t) \in \mathcal{X}_{\text{safe}},\ \forall t \in [0,\tau_{max})\ \}\geq 1-\delta.
\end{multline}
This is precisely the expression that appears in Problem \ref{prob:problem-statement}, and it means 
that the trajectories $x(t)$ will not leave the set $\mathcal{X}_{\text{safe}} = \{x \in \mathcal{X}: B(x)\geq 0 \}$ for all $x_0 \in \mathcal{X}_{\text{safe}}$ with a probability of at least $1-\delta$, completing the proof.
\end{proof}

\begin{remark}
\label{rmk:invariance_vs_pointwise}
Theorem \ref{thm:main_theorem} establishes the forward invariance of $\mathcal{X}_{\text{safe}}$ with a probability of at least $1 - \delta$. This is possible because of the fact that Lemma \ref{lemma:DeltaUCB} is not a pointwise result on the deviation of the GP prediction at a particular point, but instead a probability bound on the combination of all of the possible deviations (for all $N$, $x$ and $u$). The note \cite{lew2021problem} provides an insightful discussion of this topic.
\end{remark}

\begin{remark}
\label{remark:clf}
Note that the proposed framework can be easily extended to the problem of safe stabilization by adding a relaxed probabilistic CLF constraint to the SOCP \eqref{eq:gp-cbf-socp}, as done in \cite{castaneda2021pointwise}. The entire theoretical analysis about feasibility and safety would directly follow as long as Assumption \ref{assumption:contdiff} is adapted to include the CLF-related terms.
\end{remark}

\begin{remark}
\label{remark:epsilon}
\revision{
(Choice of the threshold $\varepsilon$ for $\eigval$) Our analysis assumes that the exploration action of \eqref{eq:safe-direction} is taken instantaneously; however, in practical implementation, the action must be held for a sampling time, $\Delta t$, as in Algorithm \ref{algo:safelearning}. Thus, $\varepsilon$ is required to make sure that the $\lambda_\dagger$ does not exceed zero during this sampling time. Longer sampling time would require a larger threshold value $\varepsilon$; a detailed analysis of the relationship between the two values under the sampled data setup remains a future work direction.
} 
\end{remark}

\section{Simulation Results}
\label{sec:06examples}

\begin{figure}
\centering
\includegraphics[width=\columnwidth]{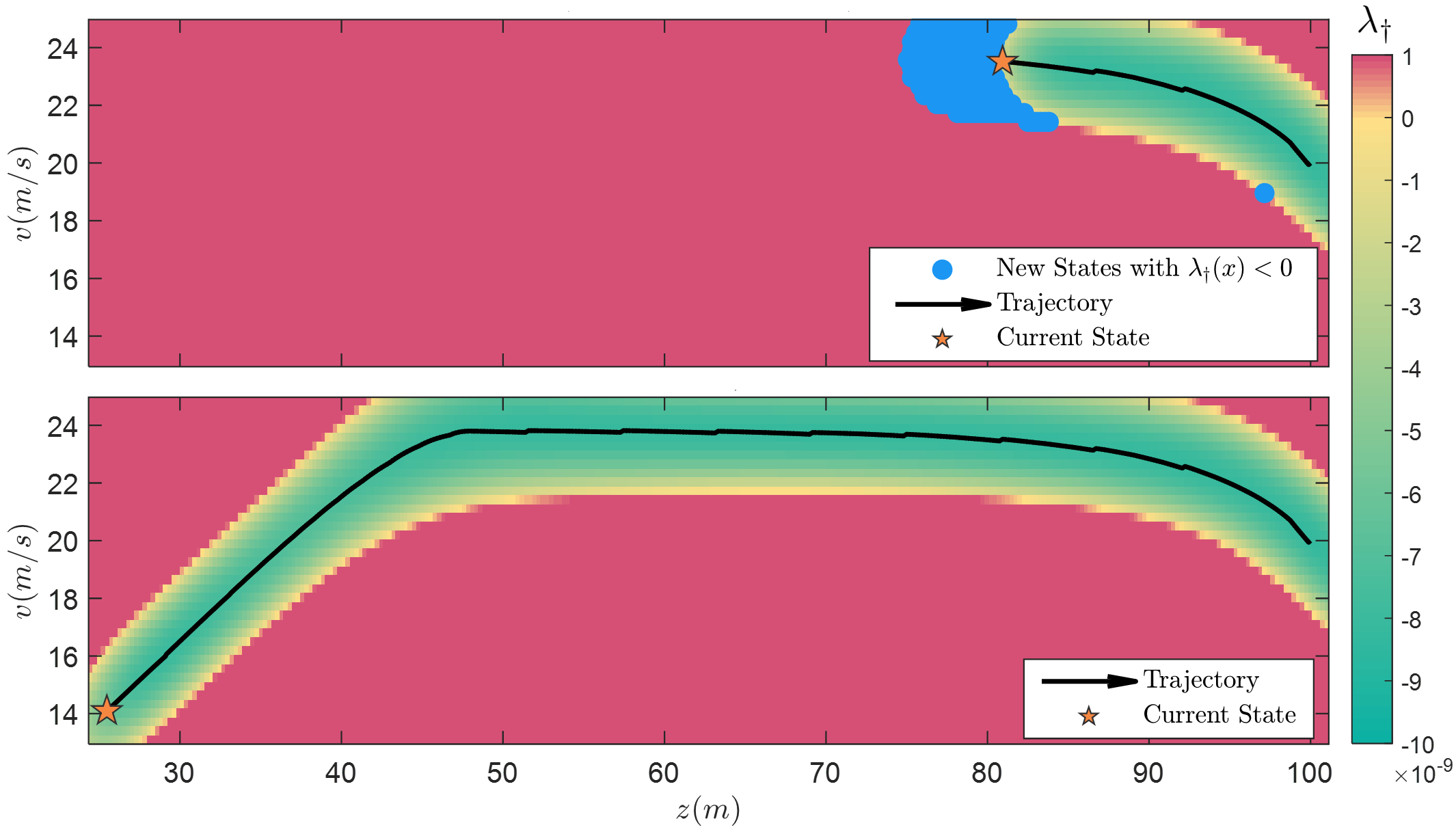}
\vspace{-7mm}
\caption{Color map of $\eigval$ in the state-space of the adaptive cruise control system $x\!=\![v, z]^T$ when running Algorithm \ref{algo:safelearning} with no prior data. The region in which $\eigval<0$ is expanded online as Algorithm \ref{algo:safelearning} collects new measurements. Top: snapshot when $\eigval$ hits the threshold $-\thres$, Algorithm \ref{algo:safelearning} collects a measurement along $\usafe$ which expands the region where $\eigval<0$ (in blue). Bottom: result at the end of the trajectory.}
\label{fig:lambda_color}
\end{figure}

\begin{figure}
\centering
\includegraphics[width=0.9\columnwidth]{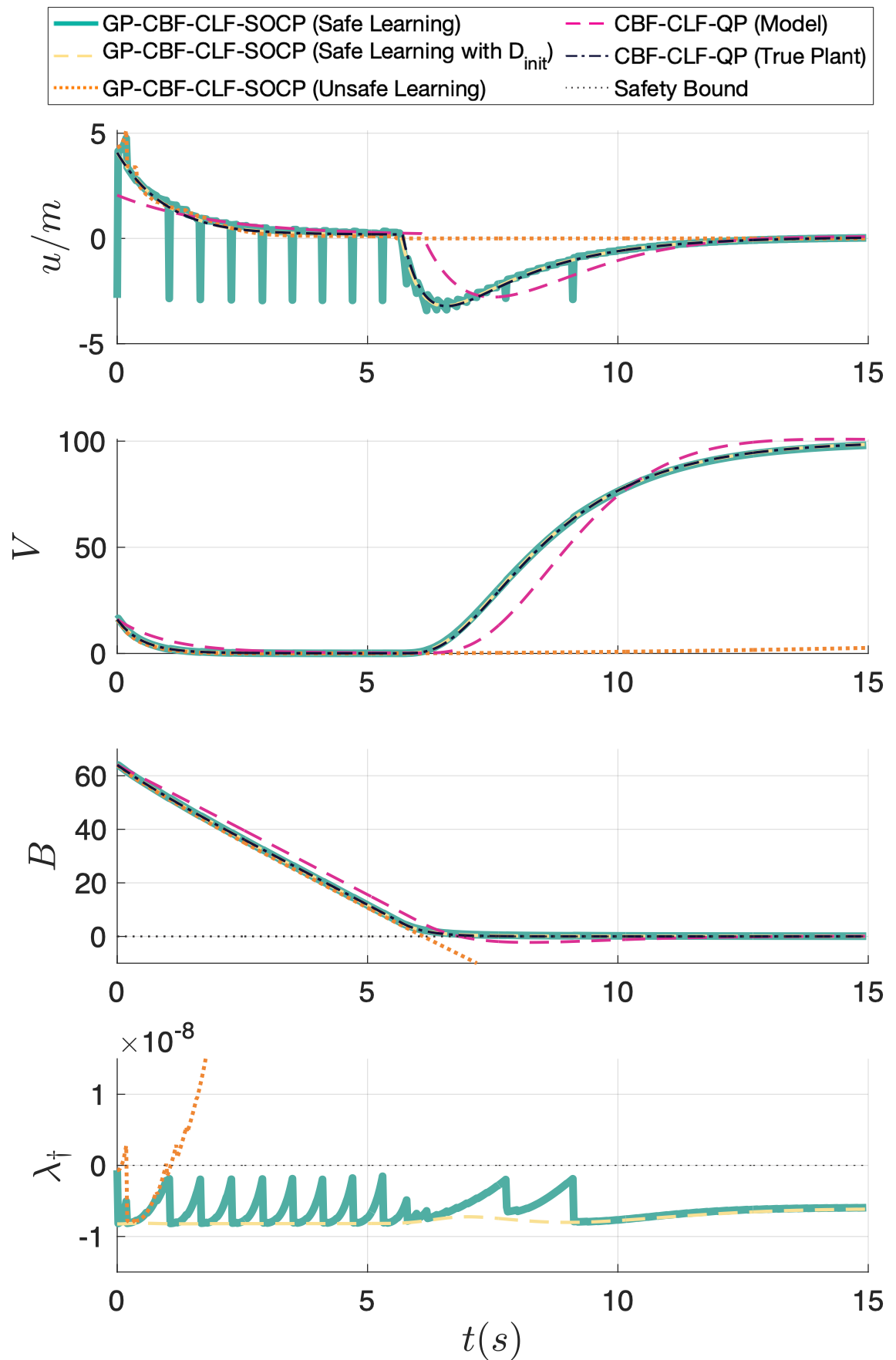}
\vspace{-3mm}
\caption{Simulation results of an adaptive cruise control system under model uncertainty, when controlled using different strategies: Algorithm \ref{algo:safelearning} with no prior data (green); Algorithm \ref{algo:safelearning} with a prior dataset (yellow); the GP-CBF-SOCP with no prior data using time-triggered updates online (orange); the CBF-QP using the uncertain dynamics (pink); and the oracle true-plant-based CBF-QP (black). Even when no prior data is available, Algorithm \ref{algo:safelearning} keeps the system safe ($B>0$) by collecting measurements in the safety direction (negative $u$) when $\eigval$ approaches $0$. This results in the ego car checking the brakes to reduce the uncertainty (negative spikes in the top plot). Using either the GP-CBF-SOCP with just time-triggered data collection, or the nominal model-based CBF-QP, the system becomes unsafe, as shown in the $B$ plot.
}
\label{fig:results}
\end{figure}

\textblack{In this section, we test our framework on the following two examples in numerical simulation. The first example of an adaptive cruise control system highlights how the feasibility of the controller improves from the data collected online through Algorithm \ref{algo:safelearning}. The second example of a kinematic vehicle system demonstrates the applicability of our framework to multi-input systems.}
\revision{In both examples, we demonstrate the recursive feasibility of our proposed controller by illustrating that $\lambda_\dagger < 0$ always holds during simulation time.}

\subsection{Adaptive Cruise Control}
\textblack{We apply our proposed framework to a numerical model of an adaptive cruise control system \vspace{-1em}

{\small

\begin{equation}
    \dot{x} = f(x) + g(x)u, ~~f(x)\! =\! \begin{bmatrix} - F_r(v)/m \\ v_0 - v \end{bmatrix},\
    g(x)\! =\! \begin{bmatrix} 0 \\ 1/m \end{bmatrix},
\end{equation}}

\noindent where $x\!=\![v, z]^T\in \R^2$ is the system state, with $v$ being the ego car's velocity and $z$ the distance between the ego car and the car in front of it; $u \in \R$ is the ego car's wheel force; $v_0$ is the constant velocity of the front car (14 m/s); $m$ is the mass of the ego car; and $F_r(v)\!=\!f_0\!+\!f_1 v\!+\!f_2 v^2$ is the rolling resistance force on the ego car. We introduce uncertainty in the mass and the rolling resistance.

A CLF is designed for stabilizing to a desired speed of $v_d = 24$ m/s, and a CBF enforces a safe distance of $z \geq 1.8 v$ with respect to the front vehicle. We specifically use $V(x) = (v-v_d)^2$ and $B(x)\!=\!z\!-\!1.8 v$.} \revision{Despite the uncertainty, the velocity term can still be regulated by the car wheel force (thus satisfying our assumption that $L_g B \neq 0$), ensuring that these} \revision{functions are a valid CLF and CBF.} 
Following Remark \ref{remark:clf}, the CLF-based stability constraint is added as a soft constraint to the SOCP controller, replacing the reference control input $u_{\text{ref}}$. Therefore, the CBF acts as a hard safety constraint that filters a control policy based on the CLF whose objective is to stabilize the car to the desired speed $v_d$.

Figure \ref{fig:lambda_color} shows a state-space color map of the value of $\eigval$ at two different stages of the trajectory generated running Algorithm \ref{algo:safelearning} for the adaptive cruise control system with no prior data starting from $x_0 = [20,100]^T$. The top plot represents an intermediate state, in which the system is still trying to reach the desired speed of $24$ m/s since the safety constraint \eqref{eq:socp-cbf-constraint} is not active yet (the car in front is still far). Even though Algorithm \ref{algo:safelearning} is collecting measurements in a time-triggered fashion using the SOCP \eqref{eq:gp-cbf-socp} controller, the state gets close to the boundary of $\eigval = 0$ frequently, since the performance-driven control input obtained from the SOCP \eqref{eq:gp-cbf-socp} when the safety constraint is not active is very different from $\usafe$. One such case is visualized in the top figure. However, Algorithm \ref{algo:safelearning} detects that $\eigval$ is getting close to zero and an event-triggered measurement in the direction of $\usafe$ is taken, which expands the region where $\eigval < 0$. The bottom plot shows the color map of $\eigval$ at the end of the process, with the final dataset. The safety constraint was active for a portion of the trajectory (when the ego vehicle approached the front one and reduced its speed), and the system stayed safe by virtue of using Algorithm \ref{algo:safelearning} to keep a direction $\usafe$ available.

Figure \ref{fig:results} shows that while a nominal CBF-QP (in pink) fails to keep the system safe under model uncertainty, Algorithm \ref{algo:safelearning} with no prior data (in green) always manages to keep $B>0$ and $\eigval < 0$ by collecting measurements along $\usafe$ (negative spikes in the control input $u$ in the top plot) when triggered by the event $\eigval\geq-\thres$. The same algorithm without these event-triggered measurements fails (orange), since when the safety constraint \eqref{eq:socp-cbf-constraint} becomes active, $\eigval$ soon gets positive and the SOCP \eqref{eq:gp-cbf-socp} becomes infeasible.

From another perspective, Figure \ref{fig:results} shows the importance of having a good nominal model or a prior database that properly characterizes a safe control direction. As shown in yellow, with such prior information Algorithm \ref{algo:safelearning} keeps the system safe without having to take any measurements along $\usafe$. If no prior data is given, the control law is purely learned online, which leads to $\eigval$ getting close to zero several times in the trajectory, and steps in the direction of $\usafe$ (negative $u$) are needed in order to prevent $\eigval$ from actually reaching zero. This clearly damages the desired performance, as the car would be braking from time to time. Nevertheless, this is required in order to be certain about how the system reacts to pressing the brake. Therefore, the proposed event-triggered design allows Algorithm \ref{algo:safelearning} to automatically reason about whether the available information is enough to preserve safety or the collection of a new data point along $\usafe$ is required instead. 
Note that our algorithm is also useful for cases in which a large dataset is available a priori, since it would secure safety even when the system is brought to out-of-distribution regions.

\begin{figure}
\centering
\includegraphics[width=0.9\columnwidth]{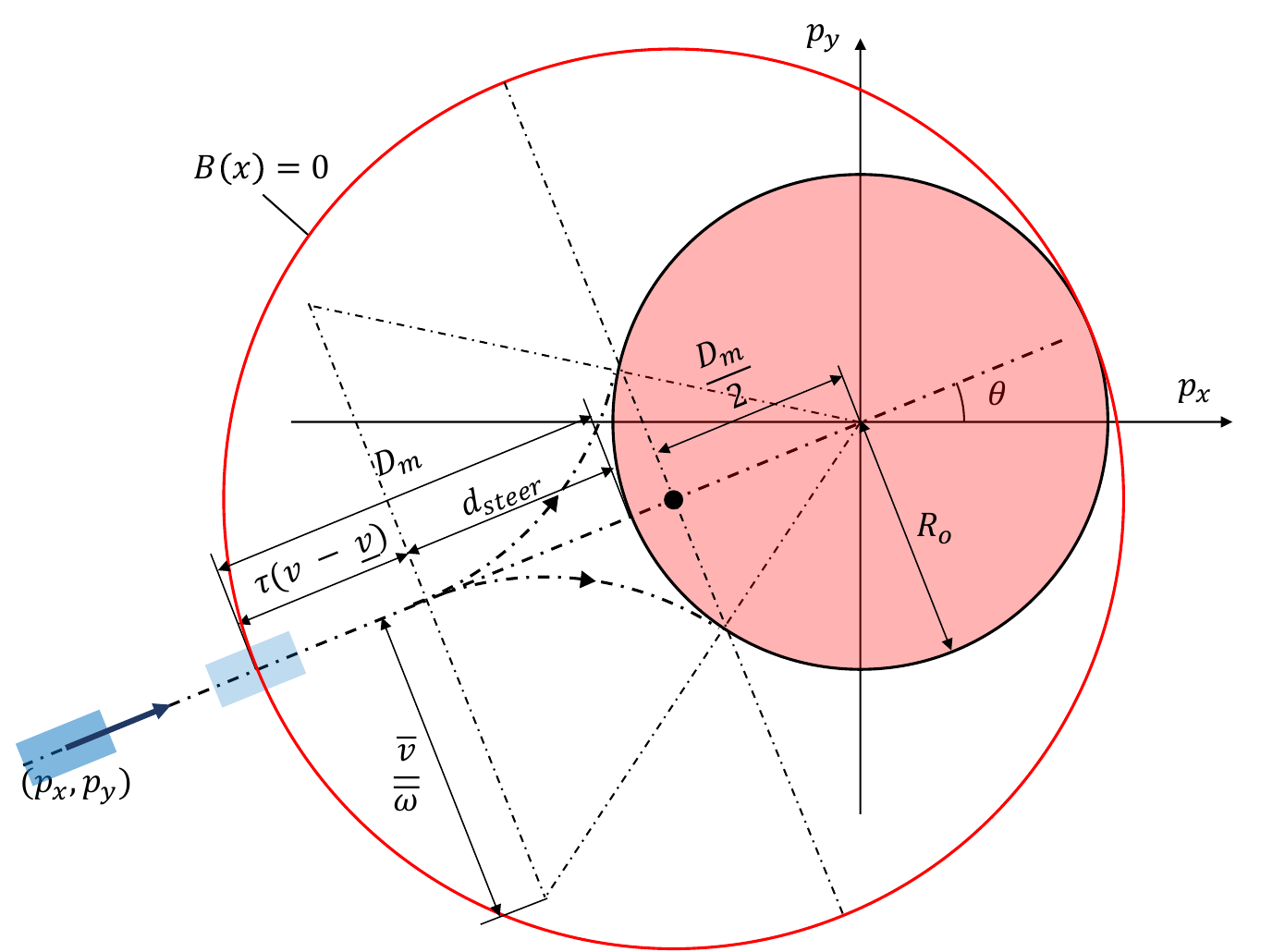}
\vspace{-0.5em}
\caption{Illustration of the zero-level set of the CBF for the kinematic vehicle example. $D_m$ is the safety distance, which is computed by adding the minimum distance for the vehicle to steer with a maximal yaw rate without colliding with the obstacle $d_{steer}$ and a velocity-dependent distance margin $\tau (v - \underbar{$v$})$.}
\label{fig:car4d_cbf}
\end{figure}

\subsection{Kinematic Vehicle}

\begin{figure}
\centering
\includegraphics[width=\columnwidth]{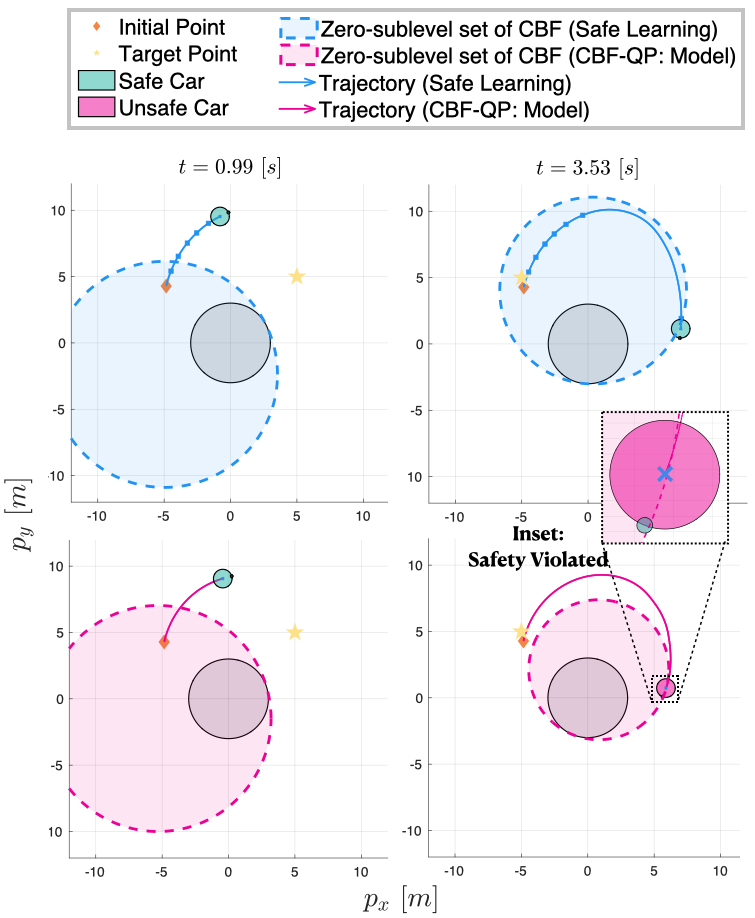}
\vspace{-1em}
\caption{Snapshots that show the chronological evolution of a 4-dimensional kinematic vehicle system under model uncertainty, when controlled using different methods: Algorithm \ref{algo:safelearning} with no prior data (top row, blue); the CBF-QP based on the nominal model (bottom row, pink). Starting at the initial state $x_0$ (orange diamond), the vehicle pursues the target (yellow star), while not colliding with the obstacle (grey circle). The curved line indicates the trajectory of the vehicle's position that terminates with its position at the time when a snapshot is taken (green or red circle). Note that the circle is colored red when the vehicle violates the safety constraint (i.e., $B(x) < 0$). The blue square positioned along the trajectory highlights the time stamps at which Algorithm \ref{algo:safelearning} collects the data in event-triggered manner. Finally, the filled circle with a dotted border represents the zero-sublevel set of CBF. To watch the full video of the vehicle running under each control algorithm, please visit \href{https://www.youtube.com/watch?v=HM_VB_mGgeA}{https://youtu.be/HM\_VB\_mGgeA}.}
\label{fig:car4d_snapshots}
\end{figure}

\begin{figure}
\centering
\includegraphics[width=\columnwidth]{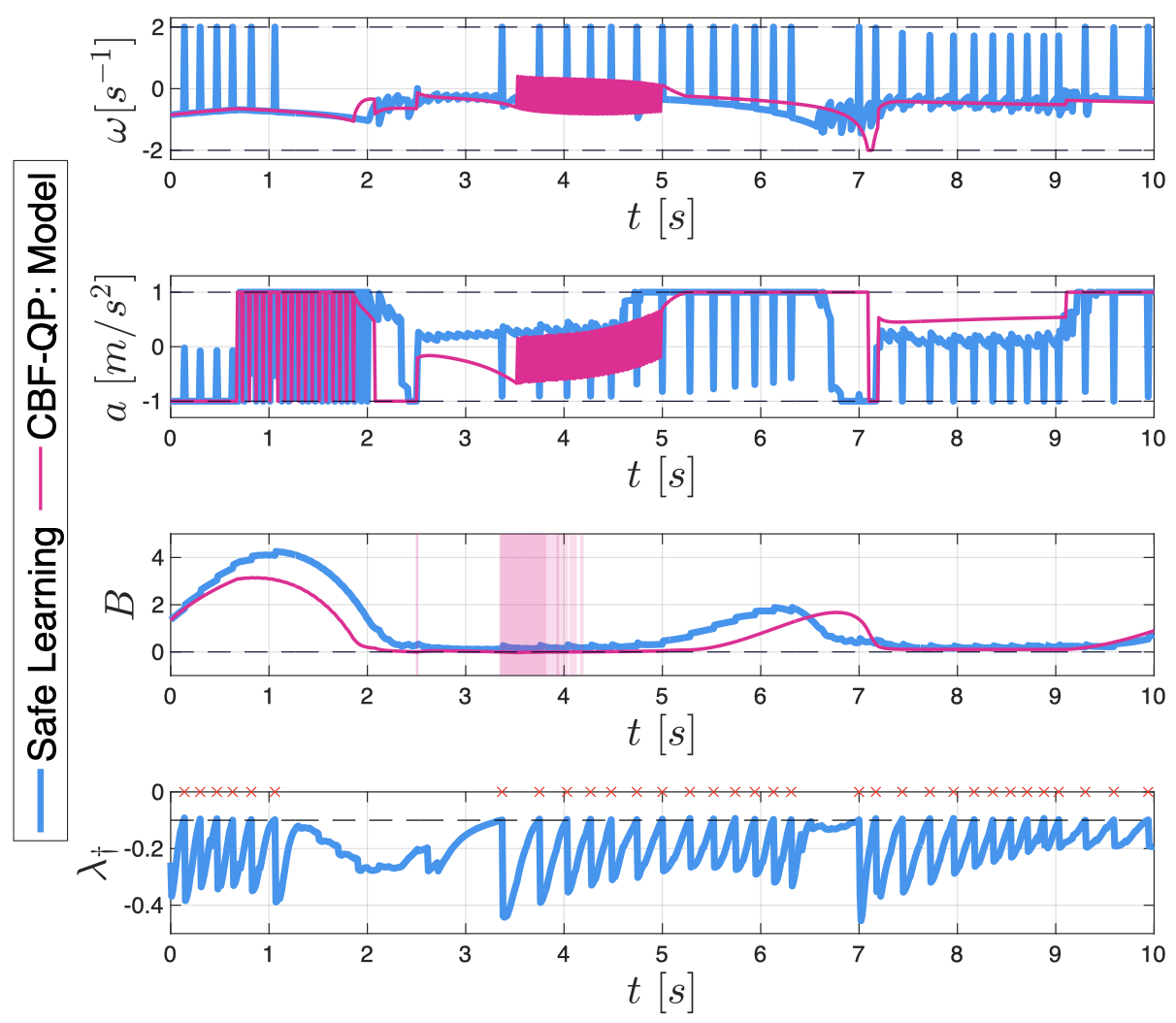}
\caption{Simulation results of 4-dimensional kinematic vehicle system under model uncertainty, when using two strategies introduced in Figure \ref{fig:car4d_snapshots} with the identical color notation. The four plots illustrate the yaw rate, the acceleration control inputs, the CBF values, and $\lambda_\dagger$ in time respectively. The dotted lines denote the input bounds, the zero-level of the CBF $B(x)=0$; and the threshold $-\epsilon$ in Algorithm \ref{algo:safelearning}. The red bars in the third plot represent the time stamps when the nominal CBF-QP violates safety. In contrast, Algorithm \ref{algo:safelearning} ensures $B(x) > 0$ at all times. The red cross points in the last plot indicate the time stamps when $\lambda_\dagger$ hits $-\epsilon$ and the safe exploration is executed according to Algorithm \ref{algo:safelearning}.}
\label{fig:car4d_results}
\end{figure}

\textblack{Next, in order to gauge our framework's applicability to systems with higher state dimensions and multiple control inputs, we apply our method to a four-dimensional kinematic vehicle system. The state vector is denoted as $x\!=\![p_x, p_y, \theta, v]^T\in \R^4$ which consists of the vehicle's position $(p_x, p_y)$, heading angle $\theta$, and longitudinal velocity $v$; the control input is denoted as $u\!=\![w,a]^T \in \R^2$ which includes the vehicle's yaw rate $w \in [-\bar{w}, \bar{w}]$ and the longitudinal acceleration $a \in [-\bar{a}, \bar{a}]$. We use the values $\bar a = 1$ and $\bar w = 2$. The dynamics of the system are modeled as
\begin{equation}
\label{eq:car4d-dynamics}
\small
    f(x)\! =\! \begin{bmatrix} k_v v \cos\theta \\ 
                               k_v v \sin\theta \\ 
                               0 \\ 
                               -\mu v+ s_e h(p_x,p_y) 
               \end{bmatrix},\
    g(x)\! =\! \begin{bmatrix} 0   & 0\\
                               0   & 0\\
                               k_w & 0\\
                               0   & k_a \end{bmatrix},
\normalsize
\end{equation}
\noindent where $k_v$, $k_w$, $k_a$ are coefficients that capture the skid, the term $\mu v$ represents the drag, and $s_e h(p_x, p_y)$ accounts for the effect of the slope of the terrain. We assume that the nominal model does not address such effects (i.e., $k_v=k_w=k_a=1,\ \mu=s_e=0$), while the uncertainty imposed on the true} system is induced by $k_v=2$, $k_w=1.5$, $k_a=1$, $\mu=0.5$, $s_e=0.5$, $h(p_x, p_y)=(p_x^2+p_y^2)^{0.1}$. Note that unstructured uncertainties are imposed through terms like $h(p_x, p_y)$, which can be arbitrary functions, unlike the previous example that only imposes parametric uncertainties.

As illustrated in Figures \ref{fig:car4d_cbf} and \ref{fig:car4d_snapshots}, the objective of the control is to reach the target points alternating in time while not colliding into a static circular obstacle of radius $R_o=3$ centered at the origin. The reference controller $u_{\text{ref}}$ has two objectives: 1) it pursues a target point that alternates among a given set of points $S_T=\{(5,5), (5,-5), (-5,-5), (-5,5)\}$ every $p=2.5$ seconds, and 2) it stabilizes the vehicle's velocity to $v_{d}$ while assuring that it is always bounded in $[\underbar{$v$}, \bar{v}]$. We use the values $\underbar{$v$}=1$, $\bar{v}=5$, and $v_d=3$. All units are in the metric system.

The CBF we use is
\small
\begin{equation*}
    B(x)\! =\! \sqrt{\left(p_x \!+\! \frac{D_m}{2} \cos \theta\right)^2 \!+ \!\left(p_y\!+\!\frac{D_m}{2} \sin \theta\right)^2}- \left(\!R_o\!+\!\frac{D_m}{2}\!\right),
\end{equation*}
\normalsize
\noindent where
\small
\begin{equation*}
D_m = \tau(v-\underbar{$v$}) + d_{steer}; \quad d_{steer} = R_o\sqrt{1+\frac{2\bar{v}}{R_o \bar{w}}}-R_o.
\end{equation*}
\normalsize

\noindent This CBF adds a safety margin $D_m$ to the obstacle in the direction of the vehicle's heading angle, based on its minimum velocity and maximum steering rate as shown in Figure \ref{fig:car4d_cbf}. We can analytically check that the zero-superlevel set of the CBF is control invariant and that the CBF constraint is always feasible under the input bounds. \revision{We can also check that the model uncertainty in this example does not affect $L_g B(x)$ to} \revision{vanish to zero, thereby allowing us to use it as a CBF for the true plant.}

Figures \ref{fig:car4d_snapshots} and \ref{fig:car4d_results} illustrate that while the CBF-QP based on the nominal model (in pink) escapes the zero-superlevel set of the CBF, Algorithm \ref{algo:safelearning} (in blue) without any prior data always keeps the vehicle 
inside.
This result not only demonstrates the validity of the proposed strategy when applied to a multi-input system but also alludes to the intuition behind our strategy: when $\lambda_{\dagger}$ hits $-\epsilon$, the vehicle steers away from the obstacle and decelerates more in order to improve the certainty of its safe control direction.

\section{Conclusion}
\label{sec:07conclusion}

In this article, we have introduced a Control Barrier Function-based approach for the safe control of uncertain systems. Our results show that it is possible to guarantee the invariance of a safe set for an unknown system with high probability, by combining any available approximate model knowledge and sufficient data collected from the real system. We achieve this by first introducing a safety-critical optimization-based controller that, by formulation, is probabilistically robust to the prediction uncertainty of the unknown system's dynamics. However, this optimization problem only produces a safe control action when the available information about the system (prior model knowledge and data) is sufficiently rich, as our feasibility analysis shows. As a means to fulfill this feasibility requirement, we later presented a formal method that, by collecting data online when required, is able to guarantee the recursive feasibility of the controller and therefore preserve the unknown system's safety with high probability.
Algorithm \ref{algo:safelearning} presents a simple embodiment of this idea;
however, we believe that future work should not be restricted to this particular implementation, since the most important contribution of this article is a principled
reasoning procedure for conducting safe exploration when using data-driven control schemes.

Finally, we would like to emphasize that, as explained in Section \ref{sec:06examples}, a practical takeaway from the results of this paper is that approximate model knowledge and prior data serve to reduce the conservatism of safety-assuring data-driven control approaches. We are convinced that designing control strategies with useful safety guarantees for uncertain systems requires combining model-based and data-driven methods, which until very recently were seen as mutually exclusive by experts in the field. With this research, we aim to present new evidence of the potential benefits of combining the two approaches.

\section*{Acknowledgements}
We would like to thank Andrew J. Taylor and Victor D. Dorobantu for the insightful discussions on this research topic. Furthermore, we would like to thank Brendon G. Anderson for his suggestions regarding Lemma \ref{lemma:lipsch_SOCP}.

\begin{appendix}[Proofs and Intermediate Results]
\label{appendix:proofs}

\subsection{An Equivalent Formulation of the Chance Constraint}
We first provide an additional reformulation of the CBF chance constraint, equivalent to those of \eqref{eq:socp-cbf-constraint} and \eqref{eq:constraint_std}, which will be useful for the proofs of the feasibility results.
\begin{lemma}
\label{lemma:fundamental}The CBF chance constraint \eqref{eq:socp-cbf-constraint} is feasible at a point $x \in \mathcal{X}$ if and only if there exists a control input $u\in\R^m$ that satisfies both of the following conditions:
\begin{subequations}\label{eq:feas_lemma1}
\begin{numcases}{}
    [1\ u^T] H(x|\mathbb{D}_N) \begin{bmatrix} 1 \\u \end{bmatrix} \leq 0,    \label{eq:feas_lemma1_1}\\
    \socc u + \socd \geq 0,     \label{eq:feas_lemma1_2}
\end{numcases}
\end{subequations}
where
\begin{equation}
\label{eq:H_def}
    H(x|\mathbb{D}_N):=\begin{bmatrix}  H_{11} & H_{1u} \\
    H_{1u}^T & H_{uu} \end{bmatrix},\ \  \text{with}\vspace{-0.5em}
\end{equation}
\begin{align*}
\small
H_{11} = &\beta^2\socbb - (\socd)^2,\\
\begin{split}
H_{1u} = &\beta^2\socb^T\socA - \\ & \big(\socd\big)\socc ,\end{split}\\
H_{uu} = &\beta^2\socAA - \socc^T\socc.
\normalsize
\end{align*}
\end{lemma}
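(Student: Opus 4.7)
The plan is to show that \eqref{eq:feas_lemma1_1}--\eqref{eq:feas_lemma1_2} together are precisely the standard ``squared'' reformulation of the second-order cone constraint \eqref{eq:constraint_std} (equivalently \eqref{eq:socp-cbf-constraint}). Starting from the form $\beta\,\gpsigmaB(x,u|\mathbb{D}_N) \le \widehat{L_g B}(x|\mathbb{D}_N)\,u + \widehat{L_f B}(x|\mathbb{D}_N) + \gamma(B(x))$, which uses the identifications \eqref{eq:mu_feas}--\eqref{eq:cone_feas} and the definitions of $\widehat{L_f B},\widehat{L_g B}$, I would split the equivalence into a sign condition plus a squared condition.

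First I would observe that the left-hand side is nonnegative, so a necessary condition for feasibility is that the right-hand side be nonnegative, which is exactly \eqref{eq:feas_lemma1_2}. Conversely, whenever \eqref{eq:feas_lemma1_2} holds, squaring both sides of the chance constraint is an equivalence (both sides nonnegative), turning it into
\begin{equation*}
\beta^2\,\gpsigmaB^2(x,u|\mathbb{D}_N) \;\le\; \bigl(\widehat{L_g B}(x|\mathbb{D}_N)\,u + \widehat{L_f B}(x|\mathbb{D}_N) + \gamma(B(x))\bigr)^2.
\end{equation*}
Using \eqref{eq:var_feas}, the left-hand side equals $[1\ u^T]\,\beta^2 \gpGramB(x|\mathbb{D}_N)\,[1\ u^T]^T$, and the right-hand side is the quadratic form in $[1\ u^T]^T$ induced by the rank-one matrix $\begin{bmatrix}(\socd) \\ \socc^T\end{bmatrix}\begin{bmatrix}(\socd) \\ \socc^T\end{bmatrix}^T$. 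Moving everything to one side yields $[1\ u^T]\,H(x|\mathbb{D}_N)\,[1\ u^T]^T \le 0$, which is \eqref{eq:feas_lemma1_1}.

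The remaining step is to verify that the three blocks of $H$ obtained from this subtraction coincide with those written in \eqref{eq:H_def}. The diagonal blocks are immediate: $H_{11} = \beta^2\socbb - (\socd)^2$ and $H_{uu} = \beta^2\socAA - \socc^T\socc$. For the off-diagonal block, I would use that $\gpGramB^{1/2}$ is the symmetric square root so $\gpGramB = (\gpGramB^{1/2})^T \gpGramB^{1/2}$, which by the column partition \eqref{eq:lower_right_uncertainty_sqrt} gives the cross block of $\gpGramB$ as $\socb^T\socA$; subtracting the corresponding cross term $(\socd)\,\socc$ then recovers $H_{1u}$ as stated.

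The only real obstacle is this algebraic bookkeeping for $H_{1u}$: one must be careful that the factorization of $\gpGramB$ in terms of $\socb$ and $\socA$ produces the cross block in the correct transpose, so that $H_{1u}$ as defined in \eqref{eq:H_def} truly matches the off-diagonal entries of the quadratic form in $[1\ u^T]^T$. Once this is checked, both directions of the equivalence follow immediately: the forward direction has already been derived, and for the converse, given any $u$ satisfying \eqref{eq:feas_lemma1_1}--\eqref{eq:feas_lemma1_2}, the nonnegativity from \eqref{eq:feas_lemma1_2} allows us to take square roots in \eqref{eq:feas_lemma1_1} and recover the original chance constraint, completing the proof.
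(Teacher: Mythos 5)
Your proposal is correct and follows essentially the same route as the paper's own (much terser) proof: the sign condition \eqref{eq:feas_lemma1_2} comes from the nonnegativity of the left-hand side of \eqref{eq:constraint_std}, and \eqref{eq:feas_lemma1_1} is the squared form of that constraint, with $H$ obtained by moving the rank-one term to the left. Your extra bookkeeping for the blocks of $H$ (using $\gpGramB = (\gpGramB^{1/2})^T\gpGramB^{1/2}$ and the column partition into $\socb$ and $\socA$) is consistent with the relations the paper records after \eqref{eq:H_def}.
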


\noindent In \eqref{eq:H_def}, we have used the following relations:
\begin{align*}
\socbb &= \socb^T\socb\ \in \R,\\
\socAA &= \socA^T\socA\ \in \R^{m\times m}.
\end{align*}

\begin{proof}
The first inequality \eqref{eq:feas_lemma1_1} is directly obtained by squaring both sides of \eqref{eq:constraint_std}. The second inequality \eqref{eq:feas_lemma1_2} is required to check that the right-hand side of \eqref{eq:constraint_std} is non-negative, as the left-hand side is trivially non-negative.
\end{proof}

\subsection{An Intermediate Result of a Necessary Condition for Pointwise Feasibility}
\begin{lemma}
\label{lemma:necessary_H}
For a given dataset $\mathbb{D}_N$, if the GP-CBF-SOCP \eqref{eq:gp-cbf-socp} is feasible at a point $x \in \R^n$, then the symmetric matrix $H(x|\mathbb{D}_N)$ defined in \eqref{eq:H_def} cannot be positive definite.
\end{lemma}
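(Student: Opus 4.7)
The plan is to prove the statement by contrapositive: assuming $H(x|\mathbb{D}_N)$ is positive definite, I would show the GP-CBF-SOCP must be infeasible at $x$. This approach is natural because the hypothesis ``positive definite'' gives a pointwise strict positivity of a quadratic form, which matches exactly the structure of one of the two equivalent feasibility conditions.

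First, I would invoke Lemma \ref{lemma:fundamental}, which provides an equivalent reformulation of the CBF chance constraint \eqref{eq:socp-cbf-constraint} in terms of the two simultaneous conditions \eqref{eq:feas_lemma1_1} and \eqref{eq:feas_lemma1_2}. The quadratic inequality \eqref{eq:feas_lemma1_1} involves $H(x|\mathbb{D}_N)$, so it is the decisive constraint to examine under the positive definiteness hypothesis; the linear inequality \eqref{eq:feas_lemma1_2} is immaterial here.

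Next, I would exploit the key structural fact that the vector appearing in the quadratic form is $[1,\ u^T]^T$, whose first entry is fixed at $1$. Hence this augmented vector is never the zero vector, regardless of $u \in \R^m$. If $H(x|\mathbb{D}_N)$ were positive definite, the definition of positive definiteness would then yield
\begin{equation*}
[1\ u^T]\, H(x|\mathbb{D}_N) \begin{bmatrix} 1 \\ u \end{bmatrix} > 0 \qquad \forall u \in \R^m,
\end{equation*}
which directly contradicts \eqref{eq:feas_lemma1_1}. By Lemma \ref{lemma:fundamental}, this means no $u$ renders the chance constraint feasible, so the GP-CBF-SOCP is infeasible at $x$. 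Contrapositively, feasibility at $x$ prohibits $H(x|\mathbb{D}_N)$ from being positive definite.

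I do not anticipate any substantive technical obstacle here: the argument is a short consequence of Lemma \ref{lemma:fundamental} together with the definition of positive definiteness. The only conceptual point worth highlighting is that the ``1'' pinned in the first coordinate of the augmented vector is what prevents the trivial $u = 0$ loophole and makes positive definiteness a strict obstruction rather than a mere non-negativity statement.
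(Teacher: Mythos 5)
Your proposal is correct and follows essentially the same route as the paper's proof: both invoke Lemma \ref{lemma:fundamental} and observe that positive definiteness of $H(x|\mathbb{D}_N)$ forces the quadratic form in \eqref{eq:feas_lemma1_1} to be strictly positive for every $u$, contradicting feasibility. Your explicit remark that the pinned first coordinate $1$ makes $[1,\ u^T]^T$ nonzero is a nice clarification of a step the paper leaves implicit, but it is not a different argument.
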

\begin{proof}
Positive definiteness of $H(x|\mathbb{D}_N)$ would mean that there does not exist any control input $u \in \R^m$ such that $[1\ u^T] H(x|\mathbb{D}_N) \begin{bmatrix} 1\\ u \end{bmatrix} \leq 0$.
However, this is a contradiction to Equation \eqref{eq:feas_lemma1_1} in Lemma \ref{lemma:fundamental}. Therefore, $H(x|\mathbb{D}_N)$ cannot be positive definite if the GP-CBF-SCOP is feasible.
\end{proof}

\subsection{Proof of Lemma \ref{lemma:necessary}}
\label{subsec:proof-lemma-necessary}
In this proof, we show that the condition \eqref{eq:necessary} of Lemma \ref{lemma:necessary} is equivalent to $H(x|\mathbb{D}_N)$ of \eqref{eq:H_def} not being positive definite for the same state $x \in \mathcal{X}$ and dataset $\mathbb{D}_N$. By Lemma \ref{lemma:necessary_H}, this would mean that \eqref{eq:necessary} is a necessary condition for pointwise feasibility of the GP-CBF-SOCP, which is the desired result.

Let $\psi(x|\mathbb{D}_N) \coloneqq [\socd,\ \socc]$. Then, condition \eqref{eq:necessary} does not hold if and only if
\begin{equation}
\label{eq:necessary_converse_prev}
1- \psi(x|\mathbb{D}_N) \frac{1}{\beta^2}\gpGramB(x|\mathbb{D}_N)^{-1}\psi(x|\mathbb{D}_N)^T >0.
\end{equation}
Note that \eqref{eq:necessary_converse_prev} is equivalent to 
\begin{equation}
\label{eq:necessary_converse}
M(x|\mathbb{D}_N)/(\beta^2 \gpGramB(x|\mathbb{D}_N)) > 0,
\end{equation}
where we use the operator $/$ for the Schur complement, and $M(x|\mathbb{D}_N) := \begin{bmatrix} 1 & \psi(x|\mathbb{D}_N)\\ \psi(x|\mathbb{D}_N)^T &  \beta^2 \gpGramB(x|\mathbb{D}_N)\end{bmatrix}$.
From \cite[Thm. 1.12]{zhang2006schur}, since $\gpGramB(x|\mathbb{D}_N)$ is positive definite, \eqref{eq:necessary_converse} holds if and only if $M(x|\mathbb{D}_N)$ is also positive definite.
We now apply again \cite[Thm. 1.12]{zhang2006schur}, but this time to $M(x|\mathbb{D}_N)/1$. Then, \eqref{eq:necessary_converse} is equivalent to the positive definiteness of $M(x|\mathbb{D}_N)/1 = \beta^2 \gpGramB(x|\mathbb{D}_N) - \psi(x|\mathbb{D}_N)^T\psi(x|\mathbb{D}_N)$\;$=\beta^2[\socb \ \socA]^T [\socb \ \socA]-\psi(x|\mathbb{D}_N)^T\psi(x|\mathbb{D}_N)\!=\!H(x|\mathbb{D}_N)$. Therefore, \eqref{eq:necessary} does not hold if and only if $H(x|\mathbb{D}_N)$ is positive definite, and the inverse statement completes the proof.
\qed

\subsection{Proof of Lemma \ref{lemma:sufficient}}
\label{subsec:proof-lemma-sufficient}
For a point $x\in \mathcal{X}$ and dataset $\mathbb{D}_N$, let $\eigvec(x|\mathbb{D}_N) \in \R^{m\times 1}$ be the unit eigenvector of $\feasmat \in \R^{m\times m}$ associated with the minimum eigenvalue $\eigval(x|\mathbb{D}_N) \in \R$. Then, clearly, 
\begin{equation}
\label{eq:eigval_eigvec_imply}
\eigval(x|\mathbb{D}_N)\!<\!0\!\implies\!\eigvec(x|\mathbb{D}_N)^T \feasmat \eigvec(x|\mathbb{D}_N)\!<\!0.
\end{equation}
Using \eqref{eq:eigval_eigvec_imply} and taking into account the definition of $\feasmat$ in \eqref{eq:defineF}, the fact that $\socAA$ is positive definite indicates that $\eigval(x|\mathbb{D}_N)\!<\!0\!\implies\!\socc \eigvec(x|\mathbb{D}_N) \neq 0$.

Next, take a control input $\usafe(x)$ in the direction of $\eigvec(x|\mathbb{D}_N)$, as defined in \eqref{eq:safe-direction}.
Plugging $\usafe(x)$ into \eqref{eq:feas_lemma1_1}, the left-hand side of \eqref{eq:feas_lemma1_1} becomes a polynomial in $\alpha$, of the form
$\alpha^2 \eigvec(x|\mathbb{D}_N)^T \feasmat \eigvec(x|\mathbb{D}_N) + \mathcal{O}(\alpha)$, where  $\mathcal{O}(\alpha)$ denotes terms with degree lower than or equal to $1$.
Note that the value of the polynomial can be made negative by choosing a large-enough constant $\alpha$, since from \eqref{eq:eigval_eigvec_imply} we know that $\eigvec(x|\mathbb{D}_N)^T \feasmat \eigvec(x|\mathbb{D}_N) < 0$. Lastly, also plugging $\usafe(x)$ into \eqref{eq:feas_lemma1_2} yields $\alpha|\socc \eigvec(x|\mathbb{D}_N)| + \socd \geq 0$, which again holds for a sufficiently large $\alpha$. Therefore, by Lemma \ref{lemma:fundamental} the GP-CBF-SOCP \eqref{eq:gp-cbf-socp} is feasible when $\eigval(x|\mathbb{D}_N) <0$ and $\usafe(x)$ is a feasible control input for large-enough $\alpha$.
\qed

\subsection{Proof of Theorem \ref{th:nec_and_suf}}
\label{subsec:proof-theorem-necessary-sufficient}
Since $\socA \!\succ\!0$, the GP-CBF-SOCP \eqref{eq:gp-cbf-socp} is feasible if and only if an intersection between the \textit{hyperboloid} $\beta\norm{\socA u\! +\! \socb}_2\! =\! t$ and the \textit{hyperplane} $\socc u\! +\! \socd\! =\! t$ exists. As $\norm{u}_2 \xrightarrow{} \infty$,
the hyperboloid asymptotically converges to the conical surface $\beta\norm{\socA(u-u_0)}_2\! =\! t$, where \vspace{-.5em}
\begin{equation}
\label{eq:least-square-u}
    u_0 = - \socAA^{-1} \socA^T \socb \vspace{-.3em}
\end{equation}
is the least-squares control input that minimizes $\norm{\socA u + \socb}_2$. We will refer to the conical surface $\beta\norm{\socA(u\!-\!u_0)}_2\! =\! t$ as the \emph{asymptote} of the hyperboloid. We now analyze each of the individual cases of Theorem \ref{th:nec_and_suf}.

\textit{Case 1 (Hyperbolic)}: This case matches the sufficient condition of Lemma \ref{lemma:sufficient}. Note that this condition implies that the necessary condition \eqref{eq:necessary} is trivially satisfied. In this case, the slope of the hyperplane $\socc u + \socd = t$ is greater than the slope of the asymptote of the hyperboloid for the direction of $u$ corresponding to $\usafe$.

\textit{Case 2 (Elliptic)}: Given that the smallest eigenvalue of $\feasmat$ is positive, then $\feasmat\succ 0$. Note that $\feasmat$ is the lower-right block of the matrix $H(x|\mathbb{D}_N)$ in \eqref{eq:H_def}. Therefore, $\feasmat\succ 0$ implies that the left-hand side of Equation \eqref{eq:feas_lemma1_1} must be strictly convex, with a unique global minimum at some $u = u_1 \in \R^m$. The first-order optimality condition gives \vspace{-.5em}
\begin{align}
\vspace{-.5em}
     u_1 = &-\feasmat^{-1}h,
     \label{eq:theorem2u1}
\end{align}
with $h\!:=\!\beta^2\socA^T\socb\!-\!\socc^T$ ~$\big(\socd\big)$.
 Since at $u_1$ the minimum is attained, Equation \eqref{eq:feas_lemma1_1} holds if and only if \vspace{-.3em}
\begin{equation}
\label{eq:condition_u1}
    [1\ u_1^T] H(x|\mathbb{D}_N) \begin{bmatrix} 1 \\u_1 \end{bmatrix} \leq 0. \vspace{-.3em}
\end{equation}
Plugging \eqref{eq:H_def} and \eqref{eq:theorem2u1} into \eqref{eq:condition_u1}, we get \vspace{-.3em}
\begin{multline}
\label{eq:condition_h}
    \beta^2\socbb - \big(\socd\big)^2 - \\ h^T \feasmat^{-1} h = H(x|\mathbb{D}_N) / \feasmat \leq 0.\vspace{-.5em}
\end{multline}

\noindent Since for this case $\feasmat$ is positive definite, and $H(x|\mathbb{D}_N)$ cannot be positive definite by the necessary condition \eqref{eq:necessary}, then from \cite[Thm. 1.12]{zhang2006schur} the inequality \eqref{eq:condition_h} must be satisfied. 
Consequently, \eqref{eq:feas_lemma1_1} holds for $u=u_1$.
Now, plugging $u_1$ into \eqref{eq:feas_lemma1_2} we have: $\socc u_1 + \socd$ equals the left-hand side of \eqref{eq:case2-add}. Thus, from Lemma \ref{lemma:fundamental} the feasible set is non-empty if and only if \eqref{eq:case2-add} is non-negative. On the other hand, \eqref{eq:case2-add} being negative would mean that the hyperplane $\socc u \! + \! \socd \! =\! t$ intersects the hyperboloid's negative sheet, $-\beta\norm{\socA u \! + \! \socb}_2\!=\!t$, forming an ellipse, and therefore cannot intersect the positive sheet.
Consequently, when $\eigval(x|\mathbb{D}_N) > 0$, the GP-CBF-SOCP \eqref{eq:gp-cbf-socp} is feasible if and only if \eqref{eq:case2-add} holds.

\textit{Case 3 (Parabolic)}: For this case, $\eigval(x|\mathbb{D}_N)\!=\!0$ means that there exists some control input direction for which the hyperplane and the asymptote have the same slope. Define \vspace{-.5em}
\begin{multline*}
    p \coloneqq \socd - \\ 
    \socc\socAA^{-1}\socA^T \socb. \vspace{-.3em}
\end{multline*}
Then, condition \eqref{eq:case3-add} is satisfied if and only if $\!p\!>\!0$. Consider the control input $u=u_0$ from \eqref{eq:least-square-u} that minimizes $\norm{\socA u + \socb}_2$. Then, we can rewrite $p = \socd + \socc u_0 $.

Furthermore, let $\eigvec(x|\mathbb{D}_N)$ denote the unit eigenvector of $\feasmat$ associated with the eigenvalue $\eigval(x|\mathbb{D}_N)=0$. Then, clearly, $\eigvec(x|\mathbb{D}_N)^T \feasmat \eigvec(x|\mathbb{D}_N) = 0$. Based on the definition of $\feasmat$ \eqref{eq:defineF}, since $\socAA \succ 0$ then it must hold that $\socc \eigvec(x|\mathbb{D}_N) \neq 0$. Next, using a control input of the form \vspace{-.5em}
\begin{equation}
\label{eq:soc-case3-u}
    u = u_0 + \alpha \text{sgn}(\socc \eigvec(x|\mathbb{D}_N)) \eigvec(x|\mathbb{D}_N), \quad \alpha > 0, \vspace{-.5em}
\end{equation}
we can write the left-hand side of \eqref{eq:feas_lemma1_1}, as \vspace{-0.5em}
\begin{align}
\label{eq:soc-proof-sub1}
\begin{split}
    &\hspace{-7pt} \beta^2\socbb-\!(\socd)^2\!+\!2 h^T u_0+ \\ & u_{0}^T \feasmat u_0 - 2 \alpha p |\socc \eigvec(x|\mathbb{D}_N)|.
\end{split}
\end{align}
And plugging \eqref{eq:soc-case3-u} into the left-hand side of \eqref{eq:feas_lemma1_2}, we obtain\vspace{-.3em}
\begin{align}
\label{eq:soc-proof-sub2}
    p + \alpha \cdot |\socc \eigvec|. 
    \vspace{-.8em}
\end{align}
If $p$ is positive, then there exists a large-enough positive constant $\alpha$ such that \eqref{eq:soc-proof-sub1} is non-positive and \eqref{eq:soc-proof-sub2} positive. Therefore, from Lemma~\ref{lemma:fundamental}, the GP-CBF-SOCP \eqref{eq:gp-cbf-socp} is feasible. 

Note that the geometric interpretation of the condition $p\!>\!0$ is that the hyperplane $\socc u\! +\! \socd\! =\! t$,
which has the same slope as the asymptote of $\beta\norm{\socA u\! +\! \socb}_2\! =\! t$
along the direction of $\eigvec(x|\mathbb{D}_N)$, should be placed over the asymptote in order for it to intersect the positive sheet of $\beta\norm{\socA u\! +\! \socb}_2\! =\! t$.
Furthermore, at $u=u_0$, the asymptote $\beta\norm{\socA(u-u_0)}_2\! =\! t$ takes value $t\!=\!0$, and $p$ is the value of the hyperplane $\socc u\! +\! \socd\! =\! t$ at $u\! =\! u_0$. Therefore, when $p \le 0$, the hyperplane is always under the positive sheet of the hyperboloid, and never intersects it. Consequently, the constraint \eqref{eq:constraint_std} is not feasible when $p \le 0$.
\qed

\subsection{Proof of Lemma \ref{lemma:algo_lambda}}
\label{subsec:proof-lemma-lambda}
Let us consider the trajectory generated by running Algorithm \ref{algo:safelearning} from any $x_0 \in \mathcal{X}$, which we assume locally exists and is unique (as stated in the hypothesis of the Lemma). For a fixed dataset $\mathbb{D}_{N}$, $\eigval\big(x|\mathbb{D}_{N})$ is a continuous function of the state $x$ by basic continuity arguments.
For the event-triggered updates of the dataset, 
if at time $t$ we have $\eigval\big(x(t)|\mathbb{D}_{N(t)}) \geq -\thres$, then Algorithm \ref{algo:safelearning} applies a control input $\usafe(x(t))$ from \eqref{eq:safe-direction},
collects the resulting measurement, and adds it to $\mathbb{D}_{N(t)}$, forming $\mathbb{D}_{N(t)+1}$. Note that from the posterior variance expression \eqref{eq:sigma_adp}, after adding the new data point (with $\usafe(x(t))$ as input) we have 
\begin{equation}
\label{eq:lemma5proof1}
\eigvec(x|\mathbb{D}_{N(t)})^T \gpGramUB(x|\mathbb{D}_{N(t)+1}) \eigvec(x|\mathbb{D}_{N(t)}) \to 0
\end{equation}
for large $\alpha$ in $\usafe$. Furthermore, we know that 
\begin{equation}
\footnotesize
\label{eq:lemma5proof2}
\eigvec(x|\mathbb{D}_{N(t)})^T \big( \wfeas(x|\mathbb{D}_{N(t)+1}) \wfeas(x|\mathbb{D}_{N(t)+1})^T\big) \eigvec(x|\mathbb{D}_{N(t)}) > 0,
\end{equation}
since from Assumptions \ref{assumption:reldegree1} and \ref{assumption:triggering_edagger}, $\wfeas(x(t)|\mathbb{D}_{N(t) + 1}) \neq 0$ and $\wfeas\big(x(t)|\mathbb{D}_{N(t)+1}\big) \eigvec\big(x(t)|\mathbb{D}_{N(t)}\big) \neq 0$. The statements \eqref{eq:lemma5proof1} and \eqref{eq:lemma5proof2} mean that we can find a large-enough constant $\alpha>0$ such that $\eigvec(x|\mathbb{D}_{N(t)})^T \big(\beta^2 \gpGramUB(x|\mathbb{D}_{N(t)+1}) - \wfeas(x|\mathbb{D}_{N(t)+1}) \wfeas(x|\mathbb{D}_{N(t)+1})^T\big) \eigvec(x|\mathbb{D}_{N(t)}) <0$, leading to $\eigval\big(x(t)|\mathbb{D}_{N(t)+1})<0$. An equivalent argument proves that with the time-triggered updates $\eigval$ stays negative after the new data point is added.
\qed

\subsection{Proof of Lemma \ref{lemma:lipsch_SOCP}}
\label{subsec:proof-lemma-lipschitz}
We use \cite[Thm. 6.4]{still2018lectures} which provides a sufficient condition for local Lipschitz continuity of solutions of parametric optimization problems. Twice differentiability of the objective and constraints with respect to both state and input trivially follows from Assumption \ref{assumption:contdiff} and the structure of \eqref{eq:gp-cbf-socp}. For a given state $x \in \mathcal{X}$ and dataset $\mathbb{D}_N$, $\eigval(x|\mathbb{D}_N)<0$ means that there exists a control input $\usafe$ from \eqref{eq:safe-direction} that strictly satisfies constraint \eqref{eq:socp-cbf-constraint}, meaning that in this case \eqref{eq:gp-cbf-socp} satisfies Slater's Condition (SC), since the problem is convex. \cite[Thm. 6.4]{still2018lectures} requires satisfaction of the Mangasarian Fromovitz Constraint Qualification (MFCQ) and the Second Order Condition (SOC2) of \cite[Def. 6.1]{still2018lectures} at the solution of $\eqref{eq:gp-cbf-socp}$. In \cite[Prop. 5.39]{andreasson2020introduction}, it is shown that SC implies MFCQ. Furthermore, since we have a strongly convex objective function in the decision variables $(u,d)$, and the constraints are convex in $(u,d)$, the Lagrangian of \eqref{eq:gp-cbf-socp} is strongly convex in $(u,d)$, implying SOC2 satisfaction. \qed

\subsection{Proof of Theorem \ref{thm:existence_uniqueness}}
\label{subsec:proof-theorem-existence}
The proof follows from \cite[Thm. III.1]{lygeros2003dynamical} using local Lipschitz continuity of the continuous dynamics (from Lemma \ref{lemma:lipsch_SOCP} and the expression of $\usafe$ in \eqref{eq:safe-direction}) instead of global Lipschitz continuity, therefore establishing local existence and uniqueness of executions of the closed-loop switched system. \qed

\end{appendix}

\bibliographystyle{IEEEtran}
\bibliography{reference.bib}

\begin{thebibliography}{10}
\providecommand{\url}[1]{#1}
\csname url@samestyle\endcsname
\providecommand{\newblock}{\relax}
\providecommand{\bibinfo}[2]{#2}
\providecommand{\BIBentrySTDinterwordspacing}{\spaceskip=0pt\relax}
\providecommand{\BIBentryALTinterwordstretchfactor}{4}
\providecommand{\BIBentryALTinterwordspacing}{\spaceskip=\fontdimen2\font plus
\BIBentryALTinterwordstretchfactor\fontdimen3\font minus \fontdimen4\font\relax}
\providecommand{\BIBforeignlanguage}[2]{{%
\expandafter\ifx\csname l@#1\endcsname\relax
\typeout{** WARNING: IEEEtran.bst: No hyphenation pattern has been}%
\typeout{** loaded for the language `#1'. Using the pattern for}%
\typeout{** the default language instead.}%
\else
\language=\csname l@#1\endcsname
\fi
#2}}
\providecommand{\BIBdecl}{\relax}
\BIBdecl

\bibitem{prajna2004safety}
S.~Prajna and A.~Jadbabaie, ``Safety verification of hybrid systems using barrier certificates,'' in \emph{International Workshop on Hybrid Systems: Computation and Control}.\hskip 1em plus 0.5em minus 0.4em\relax Springer, 2004.

\bibitem{ames2017cbf}
A.~D. {Ames}, X.~{Xu}, J.~W. {Grizzle}, and P.~{Tabuada}, ``Control barrier function based quadratic programs for safety critical systems,'' \emph{IEEE Transactions on Automatic Control}, vol.~62, pp. 3861--3876, 2017.

\bibitem{bansal2017hamilton}
S.~Bansal, M.~Chen, S.~Herbert, and C.~J. Tomlin, ``Hamilton-jacobi reachability: A brief overview and recent advances,'' in \emph{2017 56th IEEE Conference on Decision and Control (CDC)}, 2017.

\bibitem{wabersich2021predictive}
K.~P. Wabersich and M.~N. Zeilinger, ``A predictive safety filter for learning-based control of constrained nonlinear dynamical systems,'' \emph{Automatica}, vol. 129, p. 109597, 2021.

\bibitem{nguyen2015l1adaptive}
Q.~Nguyen and K.~Sreenath, ``L1 adaptive control for bipedal robots with control lyapunov function based quadratic programs,'' in \emph{American Control Conference}, Chicago, IL, July 2015, pp. 862--867.

\bibitem{taylor2020adaptive}
A.~J. Taylor and A.~D. Ames, ``Adaptive safety with control barrier functions,'' in \emph{American Control Conference}, 2020, pp. 1399--1405.

\bibitem{lopez2020robust}
B.~T. Lopez, J.-J.~E. Slotine, and J.~P. How, ``Robust adaptive control barrier functions: An adaptive and data-driven approach to safety,'' \emph{IEEE Control Systems Letters}, vol.~5, no.~3, pp. 1031--1036, 2020.

\bibitem{black2022adaptation}
M.~Black and D.~Panagou, ``Adaptation for validation of a consolidated control barrier function based control synthesis,'' \emph{arXiv preprint arXiv:2209.08170}, 2022.

\bibitem{kolathaya2018input}
S.~Kolathaya and A.~D. Ames, ``Input-to-state safety with control barrier functions,'' \emph{IEEE control systems letters}, vol.~3, no.~1, pp. 108--113, 2018.

\bibitem{nguyen2021robust}
Q.~{Nguyen} and K.~{Sreenath}, ``Robust safety-critical control for dynamic robotics,'' \emph{IEEE Transactions on Automatic Control}, 2021.

\bibitem{krstic2021inverse}
M.~Krstic, ``Inverse optimal safety filters,'' \emph{arXiv preprint arXiv:2112.08225}, 2021.

\bibitem{choi2021robust}
J.~J. Choi, D.~Lee, K.~Sreenath, C.~J. Tomlin, and S.~L. Herbert, ``Robust control barrier--value functions for safety-critical control,'' in \emph{2021 60th IEEE Conference on Decision and Control (CDC)}.\hskip 1em plus 0.5em minus 0.4em\relax IEEE, 2021, pp. 6814--6821.

\bibitem{alan2022control}
A.~Alan, A.~J. Taylor, C.~R. He, A.~D. Ames, and G.~Orosz, ``Control barrier functions and input-to-state safety with application to automated vehicles,'' \emph{IEEE Transactions on Control Systems Technology}, 2023.

\bibitem{taylor2019clflearning}
A.~J. {Taylor}, V.~D. {Dorobantu}, H.~M. {Le}, Y.~{Yue}, and A.~D. {Ames}, ``Episodic learning with control lyapunov functions for uncertain robotic systems,'' in \emph{IEEE/RSJ International Conference on Intelligent Robots and Systems}, 2019, pp. 6878--6884.

\bibitem{taylor2020cbf}
A.~J. Taylor, A.~Singletary, Y.~Yue, and A.~Ames, ``Learning for safety-critical control with control barrier functions,'' in \emph{Learning for Dynamics and Control}, 2020, pp. 708--717.

\bibitem{westenbroek2020learning}
T.~{Westenbroek}, F.~{Castañeda}, A.~{Agrawal}, S.~S. {Sastry}, and K.~{Sreenath}, ``Learning min-norm stabilizing control laws for systems with unknown dynamics,'' in \emph{IEEE Conference on Decision and Control}, 2020, pp. 737--744.

\bibitem{choi2020reinforcement}
J.~Choi, F.~Castañeda, C.~Tomlin, and K.~Sreenath, ``{Reinforcement Learning for Safety-Critical Control under Model Uncertainty, using Control Lyapunov Functions and Control Barrier Functions},'' in \emph{Robotics: Science and Systems}, Corvalis, OR, 2020.

\bibitem{berkenkamp2016lyapunov}
F.~{Berkenkamp}, R.~{Moriconi}, A.~P. {Schoellig}, and A.~{Krause}, ``Safe learning of regions of attraction for uncertain, nonlinear systems with gaussian processes,'' in \emph{IEEE Conference on Decision and Control}, 2016, pp. 4661--4666.

\bibitem{berkenkamp2017saferl}
F.~Berkenkamp, M.~Turchetta, A.~Schoellig, and A.~Krause, ``Safe model-based reinforcement learning with stability guarantees,'' in \emph{Advances in Neural Information Processing Systems}.\hskip 1em plus 0.5em minus 0.4em\relax Curran Associates, Inc., 2017, vol.~30, pp. 908--918.

\bibitem{fisac2018general}
J.~F. Fisac, A.~K. Akametalu, M.~N. Zeilinger, S.~Kaynama, J.~Gillula, and C.~J. Tomlin, ``A general safety framework for learning-based control in uncertain robotic systems,'' \emph{IEEE Transactions on Automatic Control}, vol.~64, no.~7, pp. 2737--2752, 2018.

\bibitem{umlauft2018clf}
J.~{Umlauft}, L.~{Pöhler}, and S.~{Hirche}, ``An uncertainty-based control lyapunov approach for control-affine systems modeled by gaussian process,'' \emph{IEEE Control Systems Letters}, vol.~2, pp. 483--488, 2018.

\bibitem{fan2019balsa}
D.~D. {Fan}, J.~{Nguyen}, R.~{Thakker}, N.~{Alatur}, A.~a.~{Agha-mohammadi}, and E.~A. {Theodorou}, ``Bayesian learning-based adaptive control for safety critical systems,'' in \emph{IEEE International Conference on Robotics and Automation}, 2020, pp. 4093--4099.

\bibitem{cheng2020safe}
R.~Cheng, M.~J. Khojasteh, A.~D. Ames, and J.~W. Burdick, ``Safe multi-agent interaction through robust control barrier functions with learned uncertainties,'' in \emph{IEEE Conference on Decision and Control}, 2020, pp. 777--783.

\bibitem{cohen2021safe}
M.~H. Cohen and C.~Belta, ``Safe exploration in model-based reinforcement learning using control barrier functions,'' \emph{arXiv preprint arXiv:2104.08171}, 2021.

\bibitem{taylor2020towards}
A.~J. Taylor, V.~D. Dorobantu, S.~Dean, B.~Recht, Y.~Yue, and A.~D. Ames, ``Towards robust data-driven control synthesis for nonlinear systems with actuation uncertainty,'' in \emph{IEEE Conference on Decision and Control}, 2021, pp. 6469--6476.

\bibitem{greeff2021learning}
M.~Greeff, A.~W. Hall, and A.~P. Schoellig, ``Learning a stability filter for uncertain differentially flat systems using gaussian processes,'' in \emph{IEEE Conference on Decision and Control}, 2021, pp. 789--794.

\bibitem{dhiman2021control}
V.~Dhiman, M.~J. Khojasteh, M.~Franceschetti, and N.~Atanasov, ``Control barriers in bayesian learning of system dynamics,'' \emph{IEEE Transactions on Automatic Control}, 2021.

\bibitem{brunke2022barrier}
L.~Brunke, S.~Zhou, and A.~P. Schoellig, ``Barrier bayesian linear regression: Online learning of control barrier conditions for safety-critical control of uncertain systems,'' in \emph{Learning for Dynamics and Control}, 2022, pp. 881--892.

\bibitem{dawson2022safe}
C.~Dawson, Z.~Qin, S.~Gao, and C.~Fan, ``Safe nonlinear control using robust neural lyapunov-barrier functions,'' in \emph{Conference on Robot Learning}.\hskip 1em plus 0.5em minus 0.4em\relax PMLR, 2022, pp. 1724--1735.

\bibitem{qin2022sablas}
Z.~Qin, D.~Sun, and C.~Fan, ``Sablas: Learning safe control for black-box dynamical systems,'' \emph{IEEE Robotics and Automation Letters}, vol.~7, no.~2, pp. 1928--1935, 2022.

\bibitem{jagtap2020control}
P.~Jagtap, G.~J. Pappas, and M.~Zamani, ``Control barrier functions for unknown nonlinear systems using gaussian processes,'' in \emph{2020 59th IEEE Conference on Decision and Control (CDC)}.\hskip 1em plus 0.5em minus 0.4em\relax IEEE, 2020, pp. 3699--3704.

\bibitem{lindemann2021learning}
L.~Lindemann, A.~Robey, L.~Jiang, S.~Tu, and N.~Matni, ``Learning robust output control barrier functions from safe expert demonstrations,'' \emph{arXiv preprint arXiv:2111.09971}, 2021.

\bibitem{jin2020neural}
W.~Jin, Z.~Wang, Z.~Yang, and S.~Mou, ``Neural certificates for safe control policies,'' \emph{arXiv preprint arXiv:2006.08465}, 2020.

\bibitem{nejati2023data}
A.~Nejati and M.~Zamani, ``Data-driven synthesis of safety controllers via multiple control barrier certificates,'' \emph{IEEE Control Systems Letters}, 2023.

\bibitem{lavanakul2024safety}
W.~Lavanakul, J.~Choi, K.~Sreenath, and C.~Tomlin, ``Safety filters for black-box dynamical systems by learning discriminating hyperplanes,'' in \emph{6th Annual Learning for Dynamics \& Control Conference}.\hskip 1em plus 0.5em minus 0.4em\relax PMLR, 2024, pp. 1278--1291.

\bibitem{clark2021control}
A.~Clark, ``Control barrier functions for stochastic systems,'' \emph{Automatica}, vol. 130, p. 109688, 2021.

\bibitem{so2023almost}
O.~So, A.~Clark, and C.~Fan, ``Almost-sure safety guarantees of stochastic zero-control barrier functions do not hold,'' \emph{arXiv preprint arXiv:2312.02430}, 2023.

\bibitem{cosner2024bounding}
R.~K. Cosner, P.~Culbertson, and A.~D. Ames, ``Bounding stochastic safety: Leveraging freedman’s inequality with discrete-time control barrier functions,'' \emph{IEEE Control Systems Letters}, 2024.

\bibitem{mestres2023feasibility}
P.~Mestres, K.~Long, N.~Atanasov, and J.~Cort{\'e}s, ``Feasibility analysis and regularity characterization of distributionally robust safe stabilizing controllers,'' \emph{IEEE Control Systems Letters}, 2023.

\bibitem{castaneda2021pointwise}
F.~Castañeda, J.~J. Choi, B.~Zhang, C.~J. Tomlin, and K.~Sreenath, ``Pointwise feasibility of gaussian process-based safety-critical control under model uncertainty,'' in \emph{IEEE Conference on Decision and Control}, 2021, pp. 6762--6769.

\bibitem{GPCLFSOCP}
F.~{Castañeda}, J.~J. {Choi}, B.~{Zhang}, C.~J. {Tomlin}, and K.~{Sreenath}, ``Gaussian process-based min-norm stabilizing controller for control-affine systems with uncertain input effects and dynamics,'' in \emph{American Control Conference}, 2021.

\bibitem{umlauft2019feedback}
J.~Umlauft and S.~Hirche, ``Feedback linearization based on gaussian processes with event-triggered online learning,'' \emph{IEEE Transactions on Automatic Control}, vol.~65, no.~10, pp. 4154--4169, 2019.

\bibitem{khalil2002nonlinear}
H.~K. Khalil and J.~W. Grizzle, \emph{Nonlinear systems}.\hskip 1em plus 0.5em minus 0.4em\relax Prentice Hall, Upper Saddle River, NJ, 2002, vol.~3.

\bibitem{xu2015robustness}
X.~Xu, P.~Tabuada, J.~W. Grizzle, and A.~D. Ames, ``Robustness of control barrier functions for safety critical control,'' \emph{IFAC-PapersOnLine}, vol.~48, no.~27, pp. 54--61, 2015.

\bibitem{cohen2024safety}
M.~H. Cohen, T.~G. Molnar, and A.~D. Ames, ``Safety-critical control for autonomous systems: Control barrier functions via reduced-order models,'' \emph{Annual Reviews in Control}, vol.~57, p. 100947, 2024.

\bibitem{duvenaud2014automatic}
D.~Duvenaud, ``Automatic model construction with gaussian processes,'' Ph.D. dissertation, University of Cambridge, 2014.

\bibitem{gpucb}
N.~Srinivas, A.~Krause, S.~Kakade, and M.~Seeger, ``Gaussian process optimization in the bandit setting: No regret and experimental design,'' in \emph{International Conference on Machine Learning}, 2010.

\bibitem{wendland2004scattered}
H.~Wendland, \emph{Scattered data approximation}.\hskip 1em plus 0.5em minus 0.4em\relax Cambridge university press, 2004, vol.~17.

\bibitem{umlauft2020smart}
J.~Umlauft, T.~Beckers, A.~Capone, A.~Lederer, and S.~Hirche, ``Smart forgetting for safe online learning with gaussian processes,'' in \emph{Learning for Dynamics and Control}, 2020, pp. 160--169.

\bibitem{liu2020gaussian}
H.~Liu, Y.-S. Ong, X.~Shen, and J.~Cai, ``When gaussian process meets big data: A review of scalable gps,'' \emph{IEEE transactions on neural networks and learning systems}, vol.~31, no.~11, pp. 4405--4423, 2020.

\bibitem{dean2021guaranteeing}
S.~Dean, A.~Taylor, R.~Cosner, B.~Recht, and A.~Ames, ``Guaranteeing safety of learned perception modules via measurement-robust control barrier functions,'' in \emph{Conference on Robot Learning}, 2021.

\bibitem{buch2021robust}
J.~Buch, S.-C. Liao, and P.~Seiler, ``Robust control barrier functions with sector-bounded uncertainties,'' \emph{IEEE Control Systems Letters}, vol.~6, pp. 1994--1999, 2021.

\bibitem{lew2021problem}
T.~Lew, A.~Sharma, J.~Harrison, E.~Schmerling, and M.~Pavone, ``On the problem of reformulating systems with uncertain dynamics as a stochastic differential equation,'' \emph{arXiv preprint arXiv:2111.06084}, 2021.

\bibitem{zhang2006schur}
F.~Zhang, \emph{The Schur complement and its applications}.\hskip 1em plus 0.5em minus 0.4em\relax Springer Science \& Business Media, 2006, vol.~4.

\bibitem{still2018lectures}
G.~Still, ``Lectures on parametric optimization: An introduction,'' \emph{Optimization Online}, 2018.

\bibitem{andreasson2020introduction}
N.~Andr{\'e}asson, A.~Evgrafov, and M.~Patriksson, \emph{An introduction to continuous optimization: foundations and fundamental algorithms}.\hskip 1em plus 0.5em minus 0.4em\relax Courier Dover Publications, 2020.

\bibitem{lygeros2003dynamical}
J.~Lygeros, K.~H. Johansson, S.~N. Simic, J.~Zhang, and S.~S. Sastry, ``Dynamical properties of hybrid automata,'' \emph{IEEE Transactions on Automatic Control}, vol.~48, no.~1, pp. 2--17, 2003.

\end{thebibliography}

\begin{IEEEbiography}[{\includegraphics[width=1in,height=1.25in,clip,keepaspectratio]{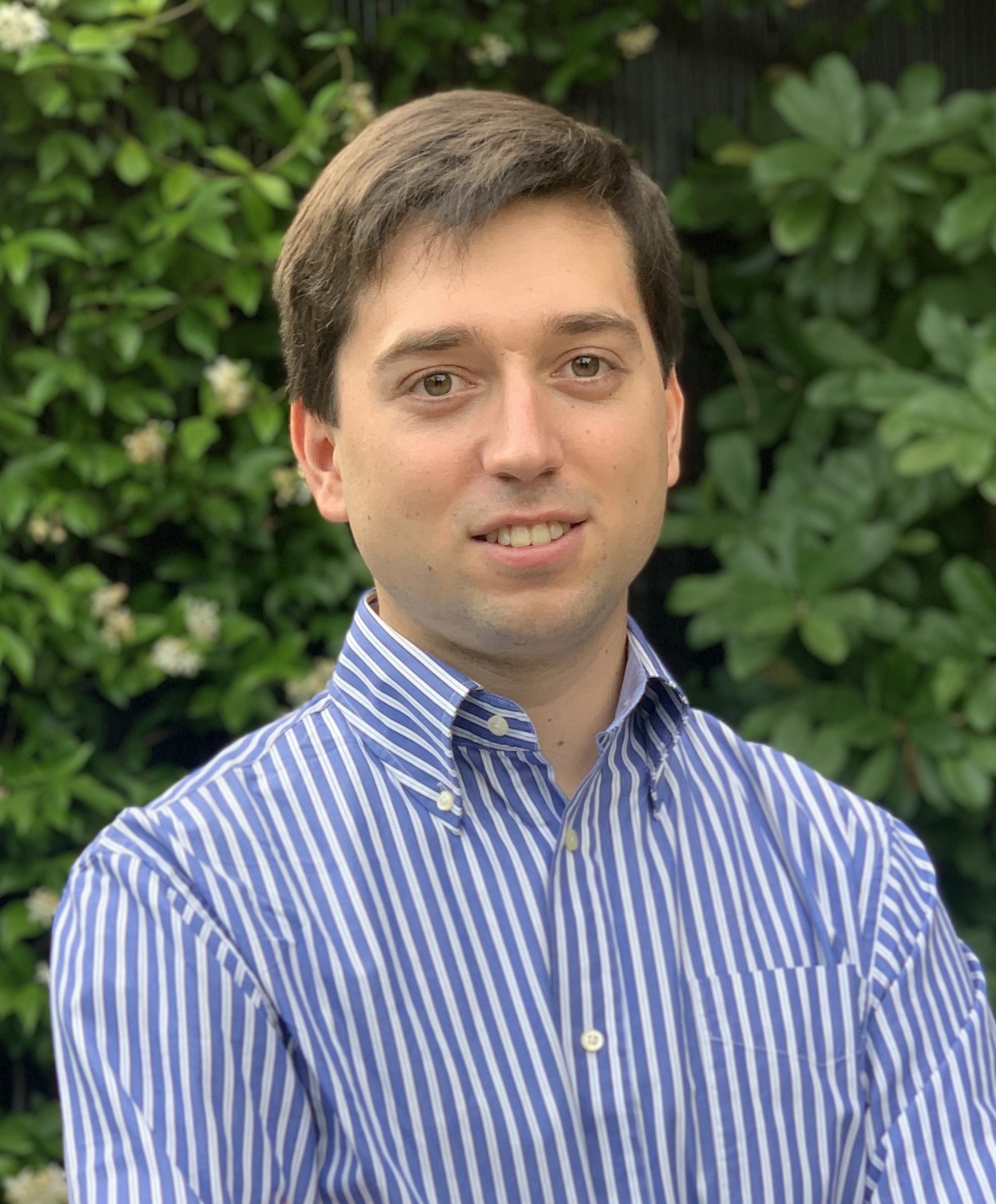}}]{Fernando Castañeda} received a Ph.D. degree in mechanical engineering from the University of California, Berkeley in 2023. He is a Rafael del Pino Foundation Fellow (2020) and a ``la Caixa" Foundation Fellow (2017). His research interests lie at the intersection of nonlinear control and data-driven methods, with a particular emphasis on ensuring the safe operation of high-dimensional systems in the real world. 
\end{IEEEbiography}

\begin{IEEEbiography}
[{\includegraphics[width=1in,height=1.25in,clip,keepaspectratio]{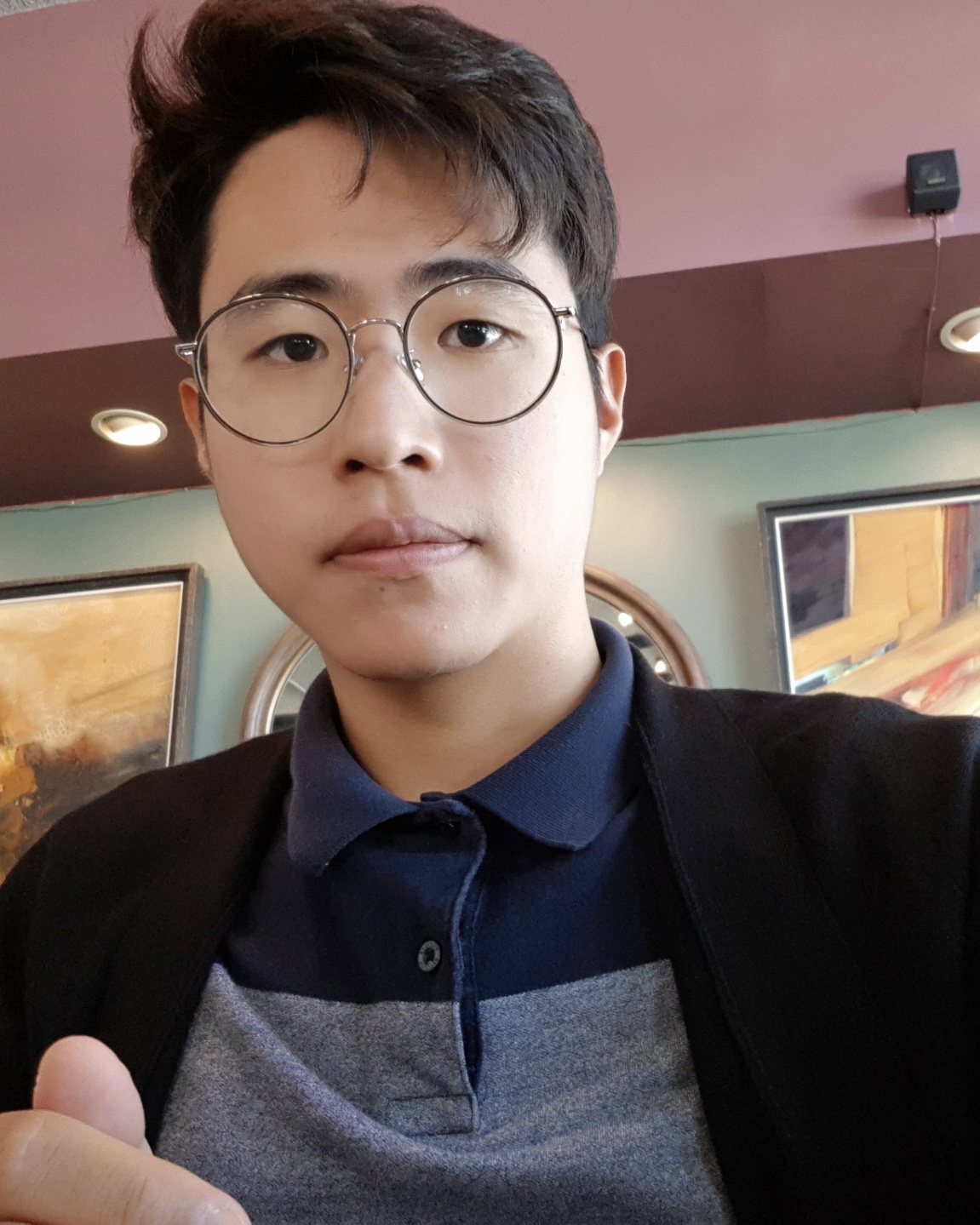}}]{Jason J. Choi} (Student Member, IEEE) received the B.S. degree in mechanical engineering from Seoul National University in 2019. He is currently pursuing a Ph.D. degree at University of California Berkeley in mechanical engineering. His research interests center on optimal control theories for nonlinear and hybrid systems, data-driven methods for safe control, and their applications to robotics and autonomous mobility.    
\end{IEEEbiography}

\begin{IEEEbiography}
[{\includegraphics[width=1in,height=1.25in,clip,keepaspectratio]{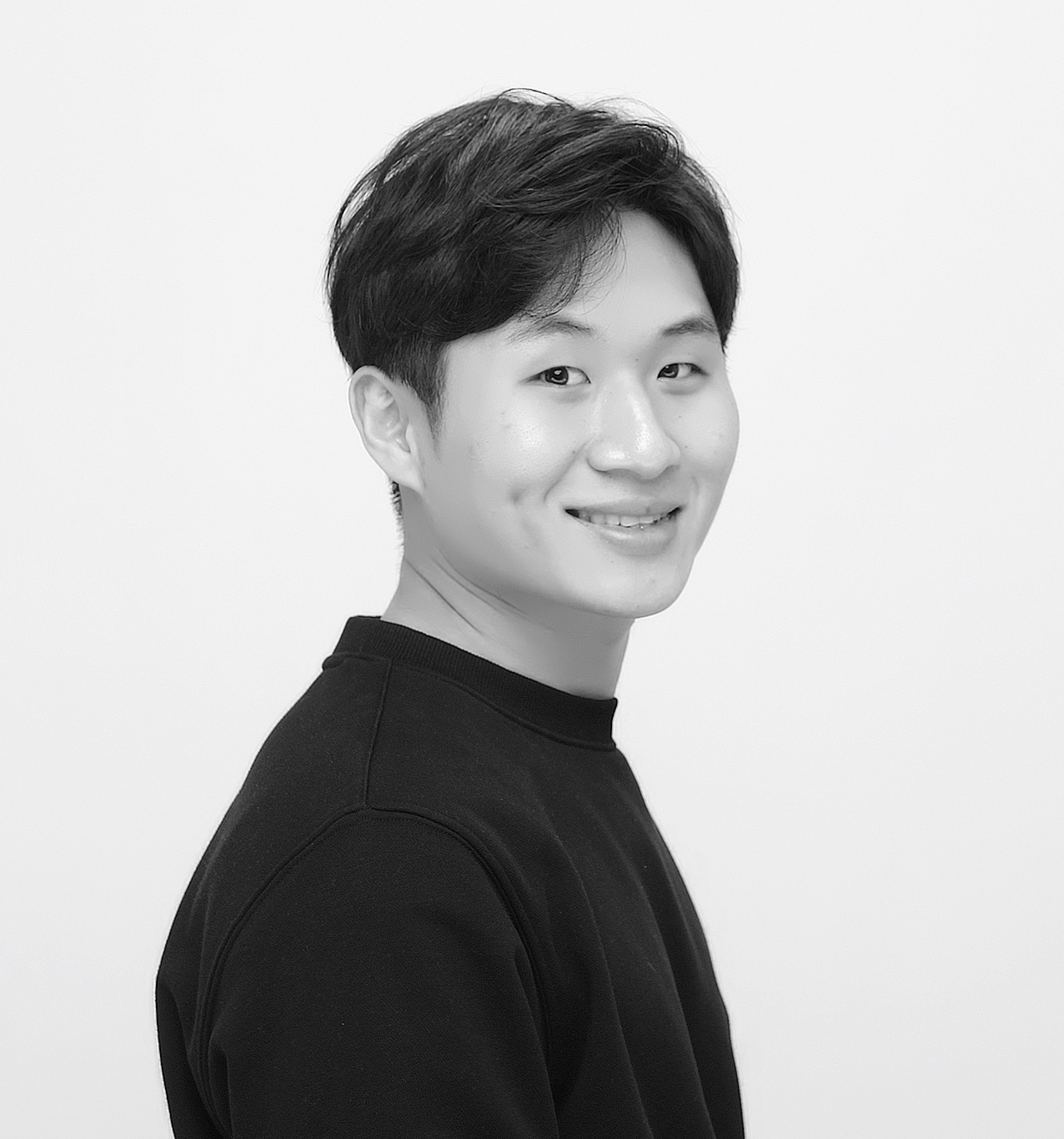}}]{Wonsuhk Jung} (Student Member, IEEE) received the B.S. degree in Mechanical Engineering and Artificial Intelligence from Seoul National University in 2022. He is currently pursuing a Ph.D. degree at Georgia Institute of Technology in Robotics. His research focuses on leveraging optimal control, planning, and data-driven methodologies for the safe operation of contact-rich robotics platforms.
\end{IEEEbiography}

\begin{IEEEbiography}
[{\includegraphics[width=1in,height=1.25in,clip,keepaspectratio]{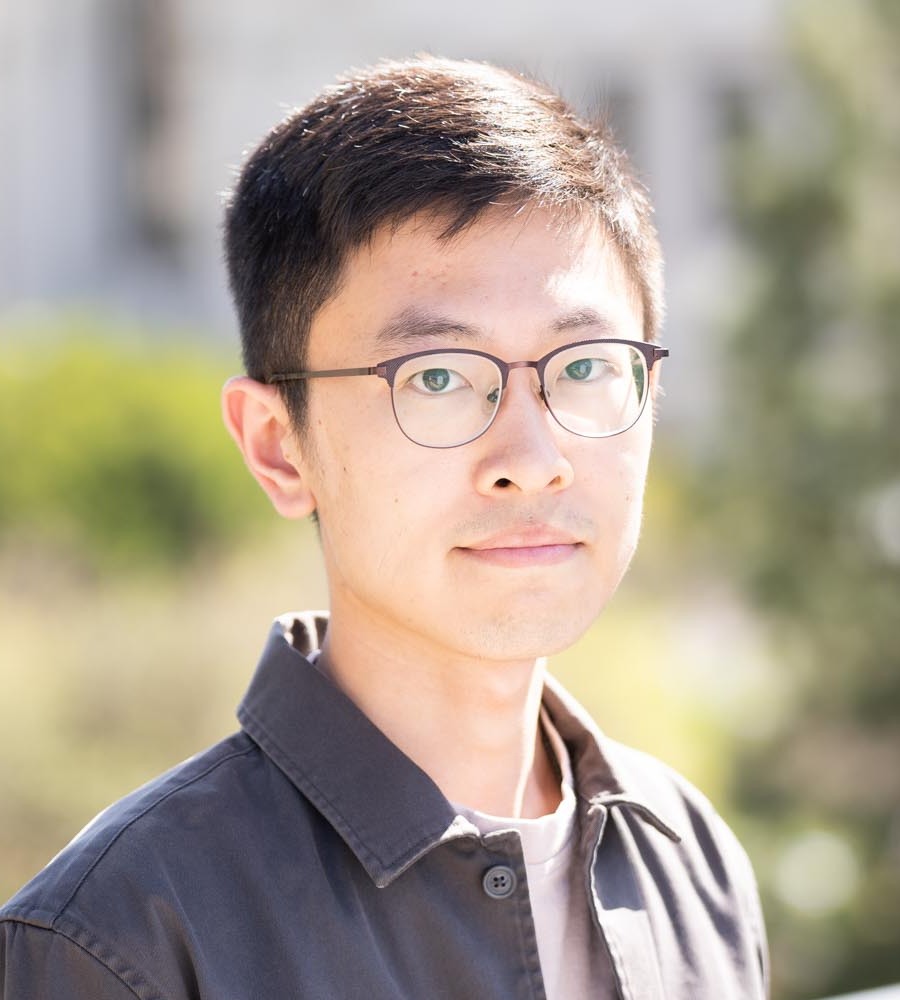}}]{Bike Zhang} (Student Member, IEEE) received the B.Eng. degree in electrical engineering and automation from Huazhong University of Science and Technology in 2017. He is currently working toward the Ph.D. degree in mechanical engineering at University of California Berkeley. His current research interests include predictive control and reinforcement learning with application to legged robotics.
\end{IEEEbiography}

\begin{IEEEbiography}
[{\includegraphics[width=1in,height=1.25in,clip,keepaspectratio]{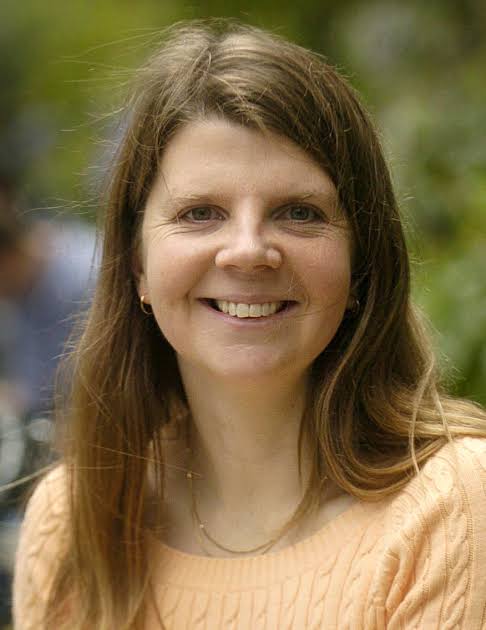}}]
{{C}laire J. Tomlin}{\,}(Fellow, IEEE) is the James and Katherine Lau Professor of Engineering and professor and chair of the Department of Electrical Engineering and Computer Sciences (EECS) at UC Berkeley. She was an assistant, associate, and full professor in aeronautics and astronautics at Stanford University from 1998 to 2007, and in 2005, she joined UC Berkeley. She works in the area of control theory and hybrid systems, with applications to air traffic management, UAV systems, energy, robotics, and systems biology. She is a MacArthur Foundation Fellow (2006), an IEEE Fellow (2010), and in 2017, she was awarded the IEEE Transportation Technologies Award. In 2019, Claire was elected to the National Academy of Engineering and the American Academy of Arts and Sciences.
\end{IEEEbiography}

\begin{IEEEbiography}
[{\includegraphics[width=1in,height=1.25in,clip,keepaspectratio]{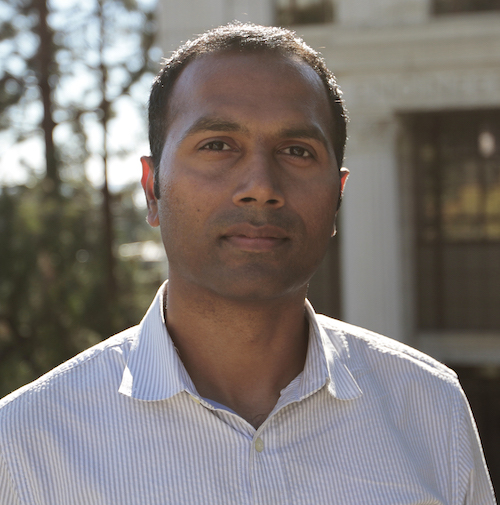}}]
{{K}oushil Sreenath}{\,}(Member, IEEE) is an associate professor of mechanical engineering, at UC Berkeley. He received a Ph.D. degree in electrical engineering and computer science and a M.S. degree in applied mathematics from the University of Michigan at Ann Arbor, MI, in 2011. He was a postdoctoral scholar at the GRASP Lab at University of Pennsylvania from 2011 to 2013 and an assistant professor at Carnegie Mellon University from 2013 to 2017. His research interest lies at the intersection of highly dynamic robotics and applied nonlinear control. He received the NSF CAREER, Hellman Fellow, Best Paper Award at the Robotics: Science and Systems (RSS), and the Google Faculty Research Award in Robotics.
\end{IEEEbiography}
\end{document}